\documentclass[12pt, a4paper]{article}
\usepackage{amsmath, amssymb, amsthm}
\usepackage{mathtools}
\usepackage{graphics}
\usepackage{graphicx}
\usepackage{epsfig}
\usepackage{dblfloatfix}
\usepackage{float}
\usepackage{placeins}
\usepackage{flafter}
\usepackage[usenames]{color}

\usepackage{epstopdf}
\usepackage{url}
\usepackage{hyperref}
\date{}
\usepackage{fullpage}
\usepackage{setspace}

\newtheorem{df}{Definition}[section]
\newtheorem{pro}{Proposition}[section]
\newtheorem{thm}{Theorem}[section]
\newtheorem{cor}{Corollary}[section]

\newtheorem{exam}{Example}[section]

\makeatletter

\renewcommand\section{\@startsection {section}{1}{\z@}
{-30pt \@plus -1ex \@minus -.2ex} {2.3ex \@plus.2ex}
{\normalfont\normalsize\bfseries}}

\renewcommand\subsection{\@startsection{subsection}{2}{\z@}
{-3.25ex\@plus -1ex \@minus -.2ex} {1.5ex \@plus .2ex}
{\normalfont\normalsize\bfseries}}

\renewcommand{\@seccntformat}[1]{\csname the#1\endcsname. }

\makeatother
\title{\bf{Skew cyclic and skew $\boldsymbol{(\alpha_{1}+u\alpha_{2}+v\alpha_{3}+uv\alpha_{4})}$-constacyclic codes over $\boldsymbol{F_{q}+uF_{q}+vF_{q}+uvF_{q}}$}}
\author{ \bf Habibul Islam and Om Prakash\\\\
Department of Mathematics \\
Indian Institute of Technology Patna\\ Patna- 801 106, India \\
E-mail: habibul.pma17@iitp.ac.in and om@iitp.ac.in}


\begin{document}

\maketitle

\begin{abstract}

In this note, we study skew cyclic and skew constacyclic codes over the ring $\mathcal{R}=F_{q}+uF_{q}+vF_{q}+uvF_{q}$ where $q=p^{m},$ $p$ is an odd prime, $u^{2}=u,~v^{2}=v,~uv=vu$. We show that Gray images of a skew cyclic and skew $\alpha$-constacyclic code of length $n$ are skew quasi-cyclic code of length $4n$ over $F_{q}$ of index 4. Also, it is shown that skew $\alpha$-constacyclic codes are either equivalent to $\alpha$-constacyclic codes or $\alpha$-quasi-twisted codes over $\mathcal{R}$. Further, structural properties, specially, generating polynomials and idempotent generators for skew cyclic and skew constacyclic codes are determined by decomposition method.
\end{abstract}

\noindent {\it Key Words} : Skew polynomial ring; skew cyclic code; skew quasi-cyclic code; skew constacyclic code; Gray map; generating polynomial; idempotent generator.\\

\noindent {\it 2010 MSC} : 94B15, 94B05, 94B60.
 \section{Introduction}
Linear codes have been studied in coding theory for last six decades. Initially linear codes were considered and studied over binary field. In 1994, a remarkable work presented  by  Hammons et al. \cite{hamm94}, which shows that some good binary non-linear codes can be considered as linear codes over $\mathbb{Z}_{4}$ through Gray map. Consequently, the study of linear codes over finite rings has got more attention than the study of linear codes over binary field. One of the most important class of linear codes is cyclic code which is easier to implement and also playing crucial role in the development of algebraic coding theory. Latter, cyclic codes over finite rings have got attention of many researchers and hence many new techniques have been discovered to produce cyclic codes over finite commutative ring with better parameters and properties [\cite{tabul03}, \cite{dinh10}, \cite{xiying17}, \cite{joel}]. But all of these works are restricted to the finite commutative rings.\\

In 2007, Boucher et al. \cite{D07} introduced skew cyclic codes which is a generalization of the class of cyclic codes on a non-commutative ring, namely, skew polynomial ring $F[x;\theta]$, where $F$ is a field and $\theta$ is an automorphism on $F$. Boucher and Ulmer \cite{D09} constituted some skew cyclic codes with Hamming distance larger than the previously known linear codes with the same parameters. Later on, Abualrub et al. \cite{abul12} studied skew cyclic codes over $F_{2}+vF_{2}$, where $v^{2}=v$. In these works, they constructed skew cyclic codes by taking an automorphism whose order must divide the length of the code. By relaxing the said restriction, Siap et al. \cite{siap11} investigated structural properties of skew cyclic codes of arbitrary length and found that a skew cyclic code of arbitrary length is equivalent to either a cyclic code or a quasi-cyclic code over $F_{p}+vF_{p}$, where $v^{2}=v$. In 2014, Gursoy et
al. \cite{Gursoy14} constructed a skew cyclic code along with their generators and also found the idempotent generators of skew cyclic codes over $F_{p}+vF_{p}$, where $v^{2}=v$. Jitman
et al. \cite{jitman} introduced skew constacyclic codes over the skew polynomial ring whose coefficients are comes from finite chain rings, particularly for the ring $F_{p^{m}} + uF_{p^{m}}$ where $u^{2} = 0.$ Very recently,  Gao et al. \cite{gao} investigated some structural properties of skew constacyclic codes over the ring $F_{q}+vF_{q}; v^{2}=v$. \\

Inspired by these study, we consider $\mathcal{R}=F_{q}+uF_{q}+vF_{q}+uvF_{q}$, where $q=p^{m},~p$ is an odd prime and $u^{2}=u,v^{2}=v,uv=vu $ and study skew cyclic codes and skew constacyclic codes over $\mathcal{R}$.
Now, for skew polynomial ring, we define an automorphism
\begin{align*}
{\theta}_{t}: \mathcal{R}\rightarrow \mathcal{R}
\end{align*}
by
\begin{align*}
{\theta}_{t}(a+ub+vc+uvd)=a^{p^{t}}+ub^{p^{t}}+vc^{p^{t}}+uvd^{p^{t}},
\end{align*}
where $a,b,c,d\in F_{q}$ and use the same throughout this paper.
In this case, the order of the automorphism is $\mid\langle{\theta}_{t}\rangle\mid =\frac{m}{t}=k$ (say). Also, the invariant subring under the automorphism ${\theta}_{t}$ is $F_{p^{t}}+uF_{p^{t}}+vF_{p^{t}}+uvF_{p^{t}}$. For this automorphism ${\theta}_{t}$ on $\mathcal{R}$, the  set $\mathcal{R}[x;{\theta}_{t}]=\big \{a_{0}+a_{1}x+a_{2}x^{2}+\dots +a_{n}x^{n}:a_{i}\in \mathcal{R}, \forall~ i=0,1,2,...,n \big \}$ forms a ring under the usual addition of polynomials and the multiplication, denoted by $\ast$, with respect to $(ax^{i})\ast(bx^{j})=a{{\theta}_{t}}^{i}(b)x^{i+j}$. This is a non-commutative ring, unless ${\theta}_{t}$ is identity, known as a skew polynomial ring. \\

Presentation of the manuscript is as follows: In section \ref{sec2}, we discuss some preliminaries which have been used later. In section \ref{sec3}, we define Gray map and describe Gray images of skew cyclic codes over $\mathcal{R}$. Section \ref{sec4} contains some important results of skew cyclic codes over $\mathcal{R}$ while section \ref{sec5} gives us the idempotent generators of skew cyclic along with their duals and includes two examples to elaborate the derived results. In section \ref{sec6} presents a relation among skew constacyclic, constacyclic and quasi-twisted code.  Section \ref{sec7} discusses the characterization of skew $\alpha$-constacyclic code along with quasi-twisted shift operator. Section \ref{sec8} gives structure of skew constacyclic code by decomposition method and section \ref{sec9} concludes the paper.

\section{Preliminaries}\label{sec2}

Throughout the paper $\mathcal{R}$ represents $F_{q}+uF_{q}+vF_{q}+uvF_{q}$, where $q=p^{m}, ~p$ is an odd prime and $u^{2}=u,v^{2}=v,uv=vu$. It can be checked that $\mathcal{R}$ is a finite commutative ring containing $q^{4}$ elements and characteristic $p$.\\  We recall that a linear code $\mathcal{C}$ of length $n$  over $\mathcal{R}$ is a $\mathcal{R}-$submodule of $\mathcal{R}^{n}$ and members of $\mathcal{C}$ are called codewords. A linear code $\mathcal{C}$ of length $n$ over $\mathcal{R}$ is said to be a skew cyclic code with respect to the automorphism  ${\theta}_{t}$ if and only if  for any codeword $a=(a_{1},a_{2},...,a_{n})\in \mathcal{C}$ implies $\sigma(a)=( {\theta}_{t}(a_{n}),{\theta}_{t}(a_{0}),...,{\theta}_{t}(a_{n-1}) )\in \mathcal{C}$, where $\sigma$ is a skew cyclic shift of $\mathcal{C}$.
\\The inner product of any $a=(a_{1},a_{2},...,a_{n})$, $b=(b_{1},b_{2},...,b_{n})$ in $\mathcal{C}$ is define as $a.b=\sum_{i=1}^{n}a_{i}b_{i}$ and $a,b$ is said to be orthogonal if $a.b=0$. For a code $\mathcal{C}$, its dual denoted by $\mathcal{C}^{\perp}$ and define as $\mathcal{C}^{\perp}=\big \{ x\in \mathcal{R}^{n}:x.c=0 ~\forall c\in \mathcal{C}\big \}$. If $\mathcal{C}=\mathcal{C}^{\perp}$, then $\mathcal{C}$ is said to be a self-dual code.
The Hamming weight $w_{H}(a)$ is defined as the number of the non-zero components in $a=(a_{1},a_{2},...,a_{n})\in \mathcal{C}$ and Hamming distance between two codewords $a$ and $b$ of $\mathcal{C}$ is defined as $d_{H}(a,b)=w_{H}(a-b)$ while the Hamming distance for the code $\mathcal{C}$ is denoted by $d_{H}(\mathcal{C})$ and defined as $d_{H}(\mathcal{C})=min\big\{ d_{H}(a,b)\mid a\neq b, ~\forall a,b \in \mathcal{C} \big\}$.
The Lee weight of an element $r=a+ub+vc+uvd\in \mathcal{R}$ is defined by
$w_{L}(r)= w_{H}(a,a+b,a+c,a+b+c+d)$ and Lee weight for the codeword $a=(a_{1},a_{2},...,a_{n})\in \mathcal{R}^{n}$ is $w_{L}(a)=\sum_{i=1}^{n}w_{L}(a_{i}).$ The Lee distance between two codewords $a, b\in R^{n}$ is defined as $d_{L}(a,b)= w_{L}(a-b)=\sum_{i=1}^{n}w_{L}(a_{i}-b_{i})$ and the Lee distance for the code $\mathcal{C}$ is defined by $d_{L}(\mathcal{C})=min\big\{ d_{L}(a,b)\mid a\neq b, \forall a,b\in \mathcal{C}\big \}$.\\
A code $\mathcal{C}$ of length $nm$ over $F_{q}$ is said to be a skew quasi-cyclic code of index $m$ if ${\pi}_{m}(\mathcal{C})=\mathcal{C}$, where ${\pi}_{m}$ is the skew quasi-cyclic shift on $(F_{q}^{n})^{m}$ define by
\begin{align}\label{eq 1}
{\pi}_{m}(a^{1}\mid a^{2}\mid...\mid a^{m})=(\sigma(a^{1})\mid \sigma(a^{2})\mid...\mid \sigma(a^{m}) ).
\end{align}

\section{Gray map and $F_{q}$ -images of skew cyclic code over $\mathcal{R}$}\label{sec3}

A map $\Psi: \mathcal{R} \rightarrow F_{q}^{4}$ defined by
\begin{align*}
\Psi(a+ub+vc+uvd)=(a,a+b,a+c,a+b+c+d)
\end{align*}
is called the Gray map [\cite{xiying17}]. It can be checked that $\Psi$ is a linear map and can be extended to $\mathcal{R}^{n}$ in obvious way by $\Psi: \mathcal{R}^{n}\rightarrow F_{q}^{4n}$ such that
\begin{align*}
\Psi(r_{1},r_{2},...,r_{n})&=(a_{1},...,a_{n},a_{1}+b_{1},...,a_{n}+b_{n},a_{1}+c_{1},...,a_{n}+c_{n},\\
&a_{1}+b_{1}+c_{1}+d_{1},...,a_{n}+b_{n}+c_{n}+d_{n})
\end{align*}
where $r_{i}=a_{i}+ub_{i}+vc_{i}+uvd_{i}\in \mathcal{R} ~\forall~ i=1,2,...,n.$\\
In light of above definition, the following fact can be easily checked:\\

\begin{thm}\label{Gray th1}
The Gray map $\Psi$ is a $F_{q}$-linear distance preserving map from $\mathcal{R}^{n}$ (Lee distance) to $F_{q}^{4n}$ (Hamming distance).
\end{thm}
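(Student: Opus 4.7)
The plan is to split the theorem into two independent claims, namely $F_{q}$-linearity of $\Psi$ and the distance identity $d_{L}(x,y) = d_{H}(\Psi(x), \Psi(y))$, and to dispatch each by a direct computation. I expect the whole argument to be short, since the Lee weight on $\mathcal{R}$ has been defined precisely so that the Hamming weight of the Gray image is forced to coincide with it.

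First I would verify $F_{q}$-linearity on a single element. For $\alpha, \beta \in F_{q}$ and $r_{1} = a_{1} + ub_{1} + vc_{1} + uvd_{1}$, $r_{2} = a_{2} + ub_{2} + vc_{2} + uvd_{2}$ in $\mathcal{R}$, I would expand $\alpha r_{1} + \beta r_{2}$ in the $F_{q}$-basis $\{1, u, v, uv\}$, apply $\Psi$, and check each of the four coordinates against $\alpha \Psi(r_{1}) + \beta \Psi(r_{2})$; the agreement is term-by-term. Since the extended $\Psi\colon \mathcal{R}^{n} \to F_{q}^{4n}$ acts componentwise in the $F_{q}$-basis, $F_{q}$-linearity passes from $\mathcal{R}$ to $\mathcal{R}^{n}$ without further work.

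For distance preservation I would use the linearity just obtained to rewrite $d_{L}(x,y) = w_{L}(x-y)$ and $d_{H}(\Psi(x),\Psi(y)) = w_{H}(\Psi(x-y))$, thereby reducing the theorem to the weight identity $w_{L}(r) = w_{H}(\Psi(r))$ for every $r \in \mathcal{R}^{n}$. Since both weights are additive over coordinates, this collapses to the one-element identity $w_{L}(a + ub + vc + uvd) = w_{H}(a, a+b, a+c, a+b+c+d)$, which is literally the definition of $w_{L}$ recorded in Section \ref{sec2}.

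The only bookkeeping point worth flagging is that $\Psi$ on $\mathcal{R}^{n}$ does not concatenate the Gray images of the individual entries; instead it lists all $a_{i}$'s first, then all $a_{i}+b_{i}$'s, and so on. This amounts to a fixed permutation of the $4n$ coordinates, which preserves Hamming weight, so the reduction of $w_{H}(\Psi(r))$ to a sum of per-entry contributions remains valid. Beyond this reordering there is no genuine obstacle: both halves of the theorem reduce to direct computations that sit entirely inside the definitions.
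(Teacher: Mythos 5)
Your proof is correct and fills in exactly the routine verification the paper itself omits (the paper merely asserts the theorem "can be easily checked" after defining $\Psi$): coordinatewise $F_{q}$-linearity, reduction of distances to weights via linearity, and the observation that $w_{H}(\Psi(r))$ equals $\sum_{i} w_{L}(r_{i})$ up to the harmless block-reordering of coordinates. Nothing further is needed.
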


\begin{thm}\label{Gray th2}
If $\mathcal{C}$ is $[n,k,d_{L}]$ linear code over $\mathcal{R}$, then $\Psi(\mathcal{C})$ is $[4n,k,d_{H}]$ linear code over $F_{q}$, where $d_{L}=d_{H}.$
\end{thm}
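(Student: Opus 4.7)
The plan is to piece the result together directly from Theorem \ref{Gray th1} together with the elementary bijectivity of $\Psi$. The statement has three pieces to verify: (i) $\Psi(\mathcal{C})$ is $F_{q}$-linear of length $4n$, (ii) its $F_{q}$-dimension is $k$, and (iii) its minimum Hamming distance coincides with $d_{L}$.

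First, I would handle linearity and length: because $\mathcal{C}$ is an $\mathcal{R}$-submodule of $\mathcal{R}^{n}$, it is in particular an $F_{q}$-subspace (since $F_{q}\subset \mathcal{R}$), and Theorem \ref{Gray th1} tells us $\Psi$ is $F_{q}$-linear; hence $\Psi(\mathcal{C})$ is an $F_{q}$-subspace of $F_{q}^{4n}$, which by definition is a linear code of length $4n$ over $F_{q}$.

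Next, for the dimension, I would verify that the component map $\Psi:\mathcal{R}\to F_{q}^{4}$ is a bijection: given $(x_{1},x_{2},x_{3},x_{4})\in F_{q}^{4}$, the unique preimage is $a=x_{1}$, $b=x_{2}-x_{1}$, $c=x_{3}-x_{1}$, $d=x_{4}-x_{2}-x_{3}+x_{1}$. Consequently the extended map $\Psi:\mathcal{R}^{n}\to F_{q}^{4n}$ is an $F_{q}$-linear bijection, and its restriction to $\mathcal{C}$ is injective, giving $|\Psi(\mathcal{C})|=|\mathcal{C}|$. Since $\mathcal{C}$ has $F_{q}$-dimension $k$ (this being the meaning of the parameter $k$ in the statement), the image $\Psi(\mathcal{C})$ has the same $F_{q}$-dimension $k$.

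Finally, for the minimum distance, I would invoke the distance-preserving property from Theorem \ref{Gray th1}: for any two distinct codewords $a,b\in \mathcal{C}$ we have $d_{H}(\Psi(a),\Psi(b))=d_{L}(a,b)$, and distinct codewords remain distinct after applying the injective $\Psi$. Taking minima over all such pairs on both sides yields $d_{H}(\Psi(\mathcal{C}))=d_{L}(\mathcal{C})=d_{L}$. Combining the three parts shows $\Psi(\mathcal{C})$ is an $[4n,k,d_{H}]$ linear code over $F_{q}$ with $d_{H}=d_{L}$. There is no genuine obstacle here; the only point that deserves explicit mention is the injectivity of $\Psi$, as otherwise one could not transfer the $k$ parameter, and everything else is immediate from Theorem \ref{Gray th1}.
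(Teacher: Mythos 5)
Your proposal is correct and follows essentially the same route as the paper: linearity of $\Psi(\mathcal{C})$ from the $F_{q}$-linearity of $\Psi$, preservation of the dimension $k$ via bijectivity of $\Psi$, and $d_{H}=d_{L}$ from the distance-preserving property of Theorem \ref{Gray th1}. Your version merely makes the bijectivity explicit by writing down the inverse, which the paper leaves as an assertion.
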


\begin{proof}
By Theorem \ref{Gray th1}, $\Psi$ is a $F_{q}$-linear map, so $\Psi(\mathcal{C})$ is a linear code of length $4n$. Since $\Psi$ is bijection as well as distance preserving, $\Psi(\mathcal{C})$ has same minimum distance as $\mathcal{C}$ $i.e$, $d_{L}=d_{H}$ and  dimension $k$.
\end{proof}

\begin{pro}\label{pro1}
Let $\sigma$ be the skew cyclic shift and ${\pi}_{4}$ the skew quasi-cyclic shift as defined in equation \ref{eq 1} and $\Psi$ be the Gray map from $\mathcal{R}^{n}$ to $F_{q}^{4n}$. Then $\Psi \sigma={\pi}_{4}\Psi$.
\end{pro}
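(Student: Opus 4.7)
The plan is a direct componentwise verification: I take an arbitrary element of $\mathcal{R}^n$, apply both sides of the claimed identity, and read off that they agree. The key algebraic fact I will rely on is that $\theta_t$ restricted to $F_q$ is the Frobenius $x\mapsto x^{p^t}$, which is additive because $\mathcal{R}$ has characteristic $p$; this is what lets the action of $\theta_t$ pass through the sums $a_i+b_i$, $a_i+c_i$, and $a_i+b_i+c_i+d_i$ that appear inside the Gray image.

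First, fix $r=(r_1,r_2,\dots,r_n)\in\mathcal{R}^n$ with $r_i=a_i+ub_i+vc_i+uvd_i$. I would compute $\sigma(r)=(\theta_t(r_n),\theta_t(r_1),\dots,\theta_t(r_{n-1}))$, noting that
\[
\theta_t(r_i)=a_i^{p^t}+ub_i^{p^t}+vc_i^{p^t}+uvd_i^{p^t}.
\]
Applying $\Psi$ coordinatewise then produces four blocks of length $n$, the $k$-th block of $\Psi(\sigma(r))$ being the cyclic shift of the $k$-th block of
\[
\bigl(a_i^{p^t},\ a_i^{p^t}+b_i^{p^t},\ a_i^{p^t}+c_i^{p^t},\ a_i^{p^t}+b_i^{p^t}+c_i^{p^t}+d_i^{p^t}\bigr)_{i=1}^{n}.
\]

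Next I would compute the other side. Starting from
\[
\Psi(r)=\bigl(a_1,\dots,a_n\mid a_1+b_1,\dots,a_n+b_n\mid a_1+c_1,\dots,a_n+c_n\mid a_1+b_1+c_1+d_1,\dots\bigr),
\]
the definition of $\pi_4$ in equation \ref{eq 1} applies the skew cyclic shift $\sigma$ separately to each of the four $F_q$-blocks. Using that $\theta_t$ coincides with the $p^t$-Frobenius on $F_q$ and is therefore an additive homomorphism, I get $\sigma(a_1+b_1,\dots,a_n+b_n)=(a_n^{p^t}+b_n^{p^t},a_1^{p^t}+b_1^{p^t},\dots)$, and similarly for the third and fourth blocks. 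Concatenating the four transformed blocks gives exactly the expression I obtained for $\Psi(\sigma(r))$, which proves $\Psi\sigma=\pi_4\Psi$.

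There is no real obstacle here; the only subtle point to emphasize is the additivity of $\theta_t$ on $F_q$, which is what makes $\theta_t(a_i+b_i)=\theta_t(a_i)+\theta_t(b_i)$ and hence allows the single application of $\theta_t$ on $\mathcal{R}$ (before Gray) to match the four separate applications of $\theta_t$ on $F_q$ (after Gray). Since both maps are $F_q$-linear and the identity has been verified on an arbitrary vector, the proposition follows.
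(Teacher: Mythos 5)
Your proof is correct and follows essentially the same route as the paper: a direct componentwise computation of $\Psi(\sigma(r))$ and $\pi_4(\Psi(r))$ on an arbitrary $r\in\mathcal{R}^n$, showing the four $F_q$-blocks coincide. The only difference is that you make explicit the additivity of the Frobenius $\theta_t$ on $F_q$ (so that $\theta_t(a_i+b_i)=a_i^{p^t}+b_i^{p^t}$), which the paper uses silently; this is a worthwhile clarification but not a different argument.
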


\begin{proof}
Let $r_{i}=a_{i}+ub_{i}+vc_{i}+uvd_{i}\in \mathcal{R}$ for $i=1,2,..., n.$ Take $r=(r_{1},r_{2},\dots ,r_{n})$, then $\sigma(r)=( {\theta}_{t}(r_{n}),{\theta}_{t}(r_{1}),...,{\theta}_{t}(r_{n-1})).$ Applying $\Psi$, we have $\Psi(\sigma(r))=( a^{p^{t}}_{n},{a^{p^{t}}_{1}},...,{a^{p^{t}}_{n-1}},{a^{p^{t}}_{n}}+{b^{p^{t}}_{n}},...,{a^{p^{t}}_{n-1}}+{b^{p^{t}}_{n-1}},{a^{p^{t}}_{n}}+{c^{p^{t}}_{n}},...,{a^{p^{t}}_{n-1}}+{c^{p^{t}}_{n-1}},{a^{p^{t}}_{n}}+{b^{p^{t}}_{n}}+{c^{p^{t}}_{n}}+{d^{p^{t}}_{n}},...,{a^{p^{t}}_{n-1}}+{b^{p^{t}}_{n-1}}+{c^{p^{t}}_{n-1}}+{d^{p^{t}}_{n-1}})$.
On the other side, $\Psi(r) =(a_{1},a_{2},...,a_{n},a_{1}+b_{1},...,a_{n}+b_{n},a_{1}+c_{1},...,a_{n}+c_{n},...,a_{1}+b_{1}+c_{1}+d_{1},...,a_{n}+b_{n}+c_{n}+d_{n})$. Therefore, ${\pi}_{4}(\Psi(r))=( {a^{p^{t}}_{n}},{a^{p^{t}}_{1}},...,{a^{p^{t}}_{n-1}},{a^{p^{t}}_{n}}+{b^{p^{t}}_{n}},...,{a^{p^{t}}_{n-1}}+{b^{p^{t}}_{n-1}},{a^{p^{t}}_{n}}+{c^{p^{t}}_{n}},...,{a^{p^{t}}_{n-1}}+{c^{p^{t}}_{n-1}},{a^{p^{t}}_{n}}+{b^{p^{t}}_{n}}+{c^{p^{t}}_{n}}+{d^{p^{t}}_{n}},...,{a^{p^{t}}_{n-1}}+{b^{p^{t}}_{n-1}}+{c^{p^{t}}_{n-1}}+{d^{p^{t}}_{n-1}})$. Hence $\Psi \sigma={\pi}_{4}\Psi$.

\end{proof}

\begin{thm}
A linear code $\mathcal{C}$ of length $n$ over $\mathcal{R}$ is a skew cyclic if and only if $\Psi(\mathcal{C})$ is a skew quasi-cyclic code of length $4n$ over $F_{q}$ of index 4.
\end{thm}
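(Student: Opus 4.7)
The proof is a direct consequence of the intertwining relation $\Psi\sigma = \pi_4 \Psi$ established in Proposition \ref{pro1}, together with the linearity and injectivity of the Gray map. The plan is to exploit this commutativity diagram to transfer the shift-invariance of $\mathcal{C}$ under $\sigma$ to the shift-invariance of $\Psi(\mathcal{C})$ under $\pi_4$, and vice versa.

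For the forward implication, I would assume $\mathcal{C}$ is skew cyclic, i.e., $\sigma(\mathcal{C}) \subseteq \mathcal{C}$. By Theorem \ref{Gray th2}, $\Psi(\mathcal{C})$ is already a linear code of length $4n$ over $F_q$, so it only remains to check closure under $\pi_4$. Given any $y \in \Psi(\mathcal{C})$, write $y = \Psi(c)$ for some $c \in \mathcal{C}$. Then Proposition \ref{pro1} yields $\pi_4(y) = \pi_4(\Psi(c)) = \Psi(\sigma(c))$, and since $\sigma(c) \in \mathcal{C}$, we conclude $\pi_4(y) \in \Psi(\mathcal{C})$. Hence $\Psi(\mathcal{C})$ is skew quasi-cyclic of index $4$.

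For the reverse implication, I would assume $\Psi(\mathcal{C})$ is skew quasi-cyclic of index $4$, i.e., $\pi_4(\Psi(\mathcal{C})) \subseteq \Psi(\mathcal{C})$. Given any $c \in \mathcal{C}$, Proposition \ref{pro1} again gives $\Psi(\sigma(c)) = \pi_4(\Psi(c)) \in \Psi(\mathcal{C})$. Because the Gray map $\Psi$ is a bijection (componentwise, the linear change of coordinates $(a,b,c,d)\mapsto(a,a+b,a+c,a+b+c+d)$ is invertible over $F_q$ since $p$ is odd, and in fact over any ring), there exists a unique $c' \in \mathcal{C}$ with $\Psi(c') = \Psi(\sigma(c))$, forcing $\sigma(c) = c' \in \mathcal{C}$. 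Linearity of $\mathcal{C}$ over $\mathcal{R}$ follows from the $F_q$-linearity of $\Psi(\mathcal{C})$ together with injectivity of $\Psi$ (one would need to observe that $\mathcal{C}$ is assumed linear over $\mathcal{R}$ in the statement, so this is automatic).

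There is no real obstacle here; the only subtle point is making explicit that $\Psi$ is a bijection so that closure of $\Psi(\mathcal{C})$ under $\pi_4$ pulls back to closure of $\mathcal{C}$ under $\sigma$. Everything else is a mechanical application of $\Psi\sigma = \pi_4 \Psi$.
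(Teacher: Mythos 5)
Your proof is correct and takes essentially the same route as the paper's: both directions follow immediately from the intertwining relation $\Psi\sigma = \pi_{4}\Psi$ of Proposition \ref{pro1}, with the injectivity (bijectivity) of $\Psi$ invoked to pull invariance under $\pi_{4}$ back to invariance under $\sigma$. The extra justification you give for why $\Psi$ is a bijection is a welcome elaboration but does not alter the argument.
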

\begin{proof}
Let $\mathcal{C}$ be a skew cyclic code of length $n$ over $\mathcal{R}$. So, $\sigma(\mathcal{C})=\mathcal{C}$ and this implies $\Psi(\sigma(\mathcal{C}))=\Psi(\mathcal{C})$. By Proposition \ref{pro1}, we have ${\pi}_{4}(\Psi(\mathcal{C}))=\Psi(\mathcal{C})$. This shows that $\Psi(\mathcal{C})$ is a skew quasi-cyclic code of length $4n$ over $F_{q}$ of index 4.\\
Conversely, suppose $\Psi(\mathcal{C})$ is a skew quasi-cyclic code of length $4n$ over $F_{q}$ of index 4. Then ${\pi}_{4}(\Psi(\mathcal{C}))=\Psi(\mathcal{C})$. Proposition \ref{pro1}, we get $\Psi(\sigma(\mathcal{C}))=\Psi(\mathcal{C})$ and since $\Psi$ is injective, so $\sigma(\mathcal{C})=\mathcal{C}$, $i.e$, $\mathcal{C}$ is a skew cyclic code over $\mathcal{R}$.
\end{proof}

It is noted that sometimes the use of permuted version of Gray map instead of using direct Gray map is more convenient. The permuted version of $\Psi_{\pi}$ defined as $\Psi_{\pi}(r) = (a_{0}, a_{0}+b_{0}, a_{0}+c_{0}, a_{0}+b_{0}+c_{0}+d_{0}, a_{1}, a_{1}+b_{1}, a_{1}+c_{1}, a_{1}+b_{1}+c_{1}+d_{1},\dots , a_{n-1}, a_{n-1}+b_{n-1}, a_{n-1}+c_{n-1}, a_{n-1}+b_{n-1}+c_{n-1}+d_{n-1})$. Clearly, the codes obtained by $\Psi$ and $\Psi_{\pi}$ are permutation equivalent. Now, for a particular case, $\mid\langle\theta_{t}\rangle\mid = k = 3$, we have the following result:

\begin{pro}
For any $r\in \mathcal{R}^{n}$, we have $\Psi_{\pi}\sigma(r) = \sigma^{4}\Psi_{\pi}(r)$.
\end{pro}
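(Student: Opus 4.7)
The plan is a direct coordinate-wise comparison of the two sides, built on two ingredients: (i) $\theta_{t}$ is a ring automorphism of $\mathcal{R}$, so the Frobenius $x\mapsto x^{p^{t}}$ commutes with the $F_{q}$-linear combinations appearing inside a Gray block, and (ii) the hypothesis $|\langle\theta_{t}\rangle|=3$ forces $\theta_{t}^{4}=\theta_{t}$, which is exactly what makes the exponents on the two sides match.

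First I set up block notation. Write $r=(r_{0},r_{1},\dots,r_{n-1})$ with $r_{i}=a_{i}+ub_{i}+vc_{i}+uvd_{i}$ and let $G(r_{i})=(a_{i},\,a_{i}+b_{i},\,a_{i}+c_{i},\,a_{i}+b_{i}+c_{i}+d_{i})\in F_{q}^{4}$ denote the $i$-th Gray block, so that $\Psi_{\pi}(r)=(G(r_{0})\mid G(r_{1})\mid\cdots\mid G(r_{n-1}))$. For the left-hand side I first compute $\sigma(r)=(\theta_{t}(r_{n-1}),\theta_{t}(r_{0}),\dots,\theta_{t}(r_{n-2}))$ and then apply $\Psi_{\pi}$. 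Because $\theta_{t}$ is additive on $\mathcal{R}$ and raises each $F_{q}$-coefficient to the $p^{t}$-th power, a one-line expansion yields $G(\theta_{t}(r_{i}))=\theta_{t}\cdot G(r_{i})$, where $\theta_{t}$ acts coordinate-wise on $F_{q}^{4}$. Hence $\Psi_{\pi}\sigma(r)=(\theta_{t}G(r_{n-1})\mid \theta_{t}G(r_{0})\mid\cdots\mid \theta_{t}G(r_{n-2}))$.

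For the right-hand side I interpret $\sigma$ on $F_{q}^{4n}$ as the skew cyclic shift attached to the restriction of $\theta_{t}$ to $F_{q}$, so $\sigma^{4}$ is a cyclic shift by four positions composed with the coordinate-wise action of $\theta_{t}^{4}$. Here the order hypothesis $k=3$ enters: $\theta_{t}^{4}=\theta_{t}\circ\theta_{t}^{3}=\theta_{t}$. A shift by four positions in $F_{q}^{4n}$ is precisely a shift by one Gray block, so $\sigma^{4}\Psi_{\pi}(r)=(\theta_{t}G(r_{n-1})\mid \theta_{t}G(r_{0})\mid\cdots\mid \theta_{t}G(r_{n-2}))$, which matches the expression computed for the left-hand side block by block.

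The only substantive point to watch is the identity $\theta_{t}^{4}=\theta_{t}$; without the hypothesis $k=3$ the exponents would not line up and the proposition would fail. Everything else is bookkeeping: tracking the block structure built into $\Psi_{\pi}$ and using that $\theta_{t}$ commutes with the finite $F_{q}$-linear combinations defining a Gray block. I do not expect any real obstacle; the proposition is essentially a compatibility statement between the permuted Gray map and the two shifts, and the order hypothesis is exactly the compatibility condition.
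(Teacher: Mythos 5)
Your proof is correct and follows essentially the same route as the paper: both sides are expanded blockwise, the additivity of the Frobenius gives $G(\theta_{t}(r_{i}))=\theta_{t}\cdot G(r_{i})$, and the hypothesis $\lvert\langle\theta_{t}\rangle\rvert=3$ (stated in the paper just before the proposition) yields $\theta_{t}^{4}=\theta_{t}$ so that the two sides agree. Your block notation merely tidies the paper's coordinate-by-coordinate computation; there is no substantive difference.
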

\begin{proof}
Let $r = (r_{0}, r_{1},\dots , r_{n-1})\in \mathcal{R}^{n}$ where $r_{i}=a_{i}+ub_{i}+vc_{i}+uvd_{i}$ for $0\leq i\leq n-1.$ We have,
$\Psi_{\pi}(\sigma(r)) = \Psi_{\pi}(\theta_{t}(r_{n-1}), \theta_{t}(r_{0}), \dots , \theta_{t}(r_{n-2}))=( a^{p^{t}}_{n-1}, a^{p^{t}}_{n-1}+b^{p^{t}}_{n-1}, a^{p^{t}}_{n-1}+c^{p^{t}}_{n-1}, a^{p^{t}}_{n-1}+b^{p^{t}}_{n-1}+c^{p^{t}}_{n-1}+d^{p^{t}}_{n-1}, a^{p^{t}}_{0}, a^{p^{t}}_{0}+b^{p^{t}}_{0}, a^{p^{t}}_{0}+c^{p^{t}}_{0}, a^{p^{t}}_{0}+b^{p^{t}}_{0}+c^{p^{t}}_{0}+d^{p^{t}}_{0}, \dots , a^{p^{t}}_{n-2}, a^{p^{t}}_{n-2}+b^{p^{t}}_{n-2}, a^{p^{t}}_{n-2}+c^{p^{t}}_{n-2}, a^{p^{t}}_{n-2}+b^{p^{t}}_{n-2}+c^{p^{t}}_{n-2}+d^{p^{t}}_{n-2}).$\\ \\
On the other hand,\\ \\
$\sigma^{4}\Psi_{\pi}(r) = \sigma^{4}(a_{0}, a_{0}+b_{0}, a_{0}+c_{0}, a_{0}+b_{0}+c_{0}+d_{0}, a_{1}, a_{1}+b_{1}, a_{1}+c_{1}, a_{1}+b_{1}+c_{1}+d_{1}, \dots , a_{n-1}, a_{n-1}+b_{n-1}, a_{n-1}+c_{n-1}, a_{n-1}+b_{n-1}+c_{n-1}+d_{n-1}) =(\theta_{t}^{4}(a_{n-1}), \theta_{t}^{4}(a_{n-1}+b_{n-1}), \theta_{t}^{4}(a_{n-1}+c_{n-1}), \theta_{t}^{4}(a_{n-1}+b_{n-1}+c_{n-1}+d_{n-1}), \theta_{t}^{4}(a_{0}), \theta_{t}^{4}(a_{0}+b_{0}), \theta_{t}^{4}(a_{0}+c_{0}), \theta_{t}^{4}(a_{0}+b_{0}+c_{0}+d_{0}), \dots \theta_{t}^{4}(a_{n-2}), \theta_{t}^{4}(a_{n-2}+b_{n-2}), \theta_{t}^{4}(a_{n-2}+c_{n-2}), \theta_{t}^{4}(a_{n-2}+b_{n-2}+c_{n-2}+d_{n-2})).$\\ \\
As $\mid\langle \theta_{t}\rangle\mid = 3$, $\theta_{t}^{3}(a) = a, ~~\forall~~ a\in \mathcal{R}$. Finally,\\ \\
$\sigma^{4}\Psi_{\pi}(r) = ( a^{p^{t}}_{n-1}, a^{p^{t}}_{n-1}+b^{p^{t}}_{n-1}, a^{p^{t}}_{n-1}+c^{p^{t}}_{n-1}, a^{p^{t}}_{n-1}+b^{p^{t}}_{n-1}+c^{p^{t}}_{n-1}+d^{p^{t}}_{n-1}, a^{p^{t}}_{0}, a^{p^{t}}_{0}+b^{p^{t}}_{0}, a^{p^{t}}_{0}+c^{p^{t}}_{0}, a^{p^{t}}_{0}+b^{p^{t}}_{0}+c^{p^{t}}_{0}+d^{p^{t}}_{0},\dots , a^{p^{t}}_{n-2}, a^{p^{t}}_{n-2}+b^{p^{t}}_{n-2}, a^{p^{t}}_{n-2}+c^{p^{t}}_{n-2}, a^{p^{t}}_{n-2}+b^{p^{t}}_{n-2}+c^{p^{t}}_{n-2}+d^{p^{t}}_{n-2}).$
This completes the proof.
\end{proof}

In the light of the above proposition, we conclude the following:

\begin{thm}
Let $\mid\langle \theta\rangle\mid = k = 3$ and $C$ be a skew cyclic code of length $n$ over $\mathcal{R}$. Then its $F_{q}$ -image $\Psi_{\pi}(C)$ is equivalent to a $4-$quasicyclic code of length $4n$ over $F_{q}$.
\end{thm}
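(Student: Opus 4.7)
The plan is to chain the preceding proposition with the definition of skew cyclic. Since $C$ is a skew cyclic code over $\mathcal{R}$, we have $\sigma(C)=C$, so $\Psi_{\pi}(\sigma(C))=\Psi_{\pi}(C)$. Applying the identity $\Psi_{\pi}\sigma=\sigma^{4}\Psi_{\pi}$ of the preceding proposition codeword by codeword lifts to the whole code and yields $\sigma^{4}(\Psi_{\pi}(C))=\Psi_{\pi}(C)$. Thus $\Psi_{\pi}(C)$ is closed under $\sigma^{4}$, which on $F_{q}^{4n}$ is a skew shift by four positions carrying the twist $\theta_{t}^{4}=\theta_{t}$ (since $|\langle\theta_{t}\rangle|=3$).

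The second step is to promote this skew invariance to equivalence with a classical $4$-quasi-cyclic code. I would decompose $\sigma^{4}=\Theta\circ\tau_{4}$, where $\tau_{4}$ denotes the classical shift by four positions on $F_{q}^{4n}$ and $\Theta$ is the coordinatewise Frobenius $x\mapsto\theta_{t}(x)$; these two maps commute. I would then construct a coordinate permutation $\phi$ of $F_{q}^{4n}$ such that $\phi^{-1}\sigma^{4}\phi=\tau_{4}$, so that $\phi(\Psi_{\pi}(C))$ is genuinely closed under $\tau_{4}$ and hence $4$-quasi-cyclic in the classical sense. The permutation $\phi$ is obtained by regrouping coordinates along the disjoint cycles of $\tau_{4}$ and weaving in the Frobenius iterates; the key combinatorial input is the coprimality $\gcd(3,4)=1$, which guarantees that the order-$3$ twist can be fully absorbed into a relabeling. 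This is the same unwinding technique used by Siap et al.\ to equate skew cyclic codes with classical cyclic or quasi-cyclic codes.

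The hardest part is the clean construction and verification of $\phi$; all the coding-theoretic content already sits in the preceding proposition, so what remains is purely combinatorial bookkeeping on the coordinate set $\{1,\dots,4n\}$. Once $\phi$ is in place and one checks $\phi\sigma^{4}\phi^{-1}=\tau_{4}$, applying $\phi$ to $\Psi_{\pi}(C)$ produces a classical $4$-quasi-cyclic code equivalent to $\Psi_{\pi}(C)$, completing the proof.
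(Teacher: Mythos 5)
Your first paragraph --- deducing $\sigma^{4}(\Psi_{\pi}(C))=\Psi_{\pi}(C)$ from the preceding proposition --- is exactly what the paper does; indeed the paper stops there and states the theorem with no further argument. The trouble is your second step. Writing $\sigma^{4}=\Theta\circ\tau_{4}$ with $\Theta$ the coordinatewise Frobenius, you claim a coordinate permutation $\phi$ of $\{1,\dots,4n\}$ with $\phi^{-1}\sigma^{4}\phi=\tau_{4}$. No such $\phi$ can exist: $\Theta$ acts on the alphabet, not on the positions, so it commutes with every coordinate permutation, and hence $\phi^{-1}\sigma^{4}\phi=\Theta\circ(\phi^{-1}\tau_{4}\phi)$. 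The left-hand side is therefore $F_{q}$-semilinear with nontrivial associated automorphism $\theta_{t}$, while $\tau_{4}$ is $F_{q}$-linear; concretely, evaluate both sides on a constant vector $(a,a,\dots,a)$ with $\theta_{t}(a)\neq a$. So the order-$3$ twist cannot be ``absorbed into a relabeling,'' and $\gcd(3,4)=1$ is not the relevant coprimality.

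The mechanism that actually removes the Frobenius in Siap et al.\ (and in the paper's own Theorems \ref{ret 1} and \ref{ret 2}) is iteration, not conjugation: one raises the skew shift to a power that is a multiple of the order $k$ of $\theta_{t}$, so that the automorphism part becomes the identity, while arranging that power to be congruent to the desired shift modulo the length. Here $(\sigma^{4})^{3}=\sigma^{12}$ is the untwisted classical shift by $12$ on $F_{q}^{4n}$, so $\Psi_{\pi}(C)$ is a genuine quasi-cyclic code of index $\gcd(12,4n)$; this equals $4$ precisely when $\gcd(n,3)=1$, in which case $\Psi_{\pi}(C)$ is literally $4$-quasi-cyclic. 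When $3\mid n$ the argument only yields index $12$, which shows that some such hypothesis (left implicit in the paper) is genuinely needed. In short, the coding-theoretic reduction in your first paragraph is fine and matches the paper, but the construction of $\phi$ is a step that cannot be carried out and must be replaced by this power-of-the-shift argument.
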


\section{Skew cyclic codes over $\mathcal{R}$}\label{sec4}

 We denote $B_{1}\oplus B_{2}\oplus B_{3}\oplus B_{4}=\big\{ a_{1}+a_{2}+a_{3}+a_{4}:a_{i}\in B_{i}$ for $i=1,2,3,4\big \}$.\\
Also, any $a+ub+vc+uvd\in \mathcal{R}$ can be written as $a+ub+vc+uvd=(1-u-v+uv)a+(u-uv)(a+b)+(v-uv)(a+c)+uv(a+b+c+d)$ where $a,b,c,d\in F_{q}$ and by \cite{joel}, $a+ub+vc+uvd\in \mathcal{R}$ is a unit if and only if $a,(a+b),(a+c),(a+b+c+d)$ are units in $F_{q}$. Let $\mathcal{C}$ be a linear code of length $n$ in $\mathcal{R}$ and \\
$\mathcal{C}_{1}=\big \{a\in F_{q}^{n}\mid a+ub+vc+uvd\in \mathcal{C}$, ~for ~some~ $b,c,d\in F_{q}^{n} \big\}$;\\
$\mathcal{C}_{2}=\big \{a+b\in F_{q}^{n}\mid a+ub+vc+uvd\in \mathcal{C}$, ~for ~some~ $c,d\in F_{q}^{n} \big\}$;\\
$\mathcal{C}_{3}=\big \{a+c\in F_{q}^{n}\mid a+ub+vc+uvd\in \mathcal{C}$, ~for ~some~ $b,d\in F_{q}^{n} \big\}$;\\
$\mathcal{C}_{4}=\big \{a+b+c+d\in F_{q}^{n}\mid a+ub+vc+uvd\in \mathcal{C}\big\}$.\\
Then $\mathcal{C}_{1},\mathcal{C}_{2},\mathcal{C}_{3},\mathcal{C}_{4}$ are linear codes of length $n$ over ${F_{q}}$ and $\mathcal{C}$ can be expressed uniquely as $\mathcal{C}=(1-u-v+uv)\mathcal{C}_{1}\oplus(u-uv)\mathcal{C}_{2}\oplus(v-uv)\mathcal{C}_{3}\oplus uv\mathcal{C}_{4}$.\\ \\
Now, we calculate dual code of $\mathcal{C}$, denoted by $\mathcal{C}^{\perp}$, as follows:

\begin{thm}\label{dual th}
 Let $\mathcal{C}=(1-u-v+uv)\mathcal{C}_{1}\oplus(u-uv)\mathcal{C}_{2}\oplus(v-uv)\mathcal{C}_{3}\oplus uv\mathcal{C}_{4}$ be a linear code over $\mathcal{R}$ where $\mathcal{C}_{1},\mathcal{C}_{2},\mathcal{C}_{3}$ and $\mathcal{C}_{4}$ be linear codes over ${F_{q}}$. Then $\mathcal{C}^{\perp}=(1-u-v+uv)\mathcal{C}_{1}^{\perp}\oplus(u-uv)\mathcal{C}_{2}^{\perp}\oplus(v-uv)\mathcal{C}_{3}^{\perp}\oplus uv\mathcal{C}_{4}^{\perp}$. Further, $\mathcal{C}$ is self-dual if and only if $\mathcal{C}_{1},\mathcal{C}_{2},\mathcal{C}_{3}$ and $\mathcal{C}_{4}$ are self-dual.
\end{thm}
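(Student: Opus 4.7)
The plan is to work with the four orthogonal idempotents $e_1 = 1-u-v+uv$, $e_2 = u-uv$, $e_3 = v-uv$, $e_4 = uv$ in $\mathcal{R}$. First I would verify the identities $e_i e_j = 0$ for $i\neq j$, $e_i^2 = e_i$, and $e_1+e_2+e_3+e_4 = 1$, which is an immediate check using $u^2=u$, $v^2=v$, $uv=vu$. These relations are the entire engine of the proof.

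Next, to prove the containment $\mathcal{D} := e_1\mathcal{C}_1^\perp \oplus e_2\mathcal{C}_2^\perp \oplus e_3\mathcal{C}_3^\perp \oplus e_4\mathcal{C}_4^\perp \subseteq \mathcal{C}^\perp$, I would take a typical element $x = e_1 x_1 + e_2 x_2 + e_3 x_3 + e_4 x_4 \in \mathcal{D}$ with $x_i \in \mathcal{C}_i^\perp$, and a typical codeword $c = e_1 c_1 + e_2 c_2 + e_3 c_3 + e_4 c_4 \in \mathcal{C}$ with $c_i \in \mathcal{C}_i$. Expanding $x\cdot c$ componentwise in $\mathcal{R}^n$ and using $e_i e_j = \delta_{ij} e_i$ collapses the double sum to $e_1(x_1\cdot c_1) + e_2(x_2\cdot c_2) + e_3(x_3\cdot c_3) + e_4(x_4\cdot c_4)$, and each piece vanishes by the definition of $\mathcal{C}_i^\perp$ in $F_q^n$. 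Hence $\mathcal{D} \subseteq \mathcal{C}^\perp$.

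For the reverse inclusion I would use a counting argument, which I expect to be the cleanest way: since the decomposition $\mathcal{C} = e_1\mathcal{C}_1 \oplus e_2\mathcal{C}_2 \oplus e_3\mathcal{C}_3 \oplus e_4\mathcal{C}_4$ is a direct sum (the $e_i$ being pairwise annihilating and summing to $1$), we have $|\mathcal{C}| = \prod_{i=1}^4 |\mathcal{C}_i|$ and similarly $|\mathcal{D}| = \prod_{i=1}^4 |\mathcal{C}_i^\perp| = \prod_{i=1}^4 q^{n - \dim \mathcal{C}_i}$. Since $|\mathcal{R}^n| = q^{4n}$ and $|\mathcal{C}|\cdot|\mathcal{C}^\perp| = q^{4n}$ (a standard fact that holds over $\mathcal{R}$ because $\mathcal{R}$ is a Frobenius ring, or can be verified directly via the same idempotent decomposition), a quick dimension count gives $|\mathcal{D}| = |\mathcal{C}^\perp|$. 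Combined with $\mathcal{D}\subseteq \mathcal{C}^\perp$, this forces equality. The main potential obstacle is justifying $|\mathcal{C}|\cdot|\mathcal{C}^\perp| = q^{4n}$ cleanly in this setting; if one prefers to avoid invoking Frobenius ring theory, one can derive it directly by showing $\mathcal{C}^\perp$ itself admits a decomposition of the same shape (by applying the idempotents $e_i$ to any element of $\mathcal{C}^\perp$ and checking each component lies in the corresponding $\mathcal{C}_i^\perp$), which gives both inclusions simultaneously without counting.

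Finally, the self-duality statement follows immediately from the uniqueness of the decomposition: $\mathcal{C} = \mathcal{C}^\perp$ iff $e_i\mathcal{C}_i = e_i\mathcal{C}_i^\perp$ for each $i$, iff $\mathcal{C}_i = \mathcal{C}_i^\perp$ for $i=1,2,3,4$, since the map $a_i \mapsto e_i a_i$ from $F_q^n$ into $e_i\mathcal{R}^n$ is injective.
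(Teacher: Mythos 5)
Your proposal is correct, and it follows essentially the approach the paper itself relies on: the paper gives no details here, deferring to Theorem 5 of Zheng and Kong \cite{xiying17}, whose argument is exactly your orthogonal-idempotent decomposition with $e_ie_j=\delta_{ij}e_i$, $\sum e_i=1$, one easy inclusion, and a cardinality comparison. Your own remark that one can bypass the Frobenius-ring identity $|\mathcal{C}|\,|\mathcal{C}^{\perp}|=q^{4n}$ by multiplying an arbitrary element of $\mathcal{C}^{\perp}$ by each $e_i$ is well taken and makes the argument self-contained, so nothing further is needed.
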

\begin{proof}
The proof is similar to the proof of Theorem 5 in \cite{xiying17}.
\end{proof}

Let $\mathcal{D}$ be a linear code of length $n$ over $F_{q}$. For any codeword $c=(c_{0},c_{1},...,c_{n-1})$ in $\mathcal{D}$, we identify to a polynomial $c(x)$ in ${F_{q}[x;{\theta}_{t}]}/{\langle x^{n}-1\rangle}$ where $c(x)=c_{0}+c_{1}x+\dots +c_{n-1}x^{n-1}.$ By this identification   Siap et al.  \cite{siap11} studied skew cyclic codes over the field $F_{q}$ and constructed many structural properties like $\mathcal{D}$ is a skew cyclic code over $F_{q}$ if and only if $\mathcal{D}$ is a left $F_{q}[x;{\theta}_{t}]-$submodule of  ${F_{q}[x;{\theta}_{t}]}/{\langle x^{n}-1\rangle}$
and $\mathcal{D}$ is generated by a monic polynomial which is a right divisor of $(x^{n}-1)$ in $F_{q}[x;{\theta}_{t}]$. Analogous to their study we discuss skew cyclic codes over $\mathcal{R}$ in this article.
\begin{thm}\label{sk th1}
Let $\mathcal{C}=(1-u-v+uv)\mathcal{C}_{1}\oplus(u-uv)\mathcal{C}_{2}\oplus(v-uv)\mathcal{C}_{3}\oplus uv\mathcal{C}_{4}$ be a linear code of length $n$ over $\mathcal{R}$ where $\mathcal{C}_{1},\mathcal{C}_{2},\mathcal{C}_{3}$ and $\mathcal{C}_{4}$ be linear codes of length $n$ over ${F_{q}}$. Then $\mathcal{C}$ is a skew cyclic code over $\mathcal{R}$ if and only if $\mathcal{C}_{1},\mathcal{C}_{2},\mathcal{C}_{3}$ and $\mathcal{C}_{4}$ are skew cyclic codes over ${F_{q}}$.
\end{thm}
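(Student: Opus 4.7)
The plan is to exploit the orthogonal idempotent decomposition $1 = e_1 + e_2 + e_3 + e_4$ with $e_1 = 1-u-v+uv$, $e_2 = u-uv$, $e_3 = v-uv$, $e_4 = uv$, which satisfy $e_i e_j = \delta_{ij} e_i$. The key observation that makes everything go through is that each $e_i$ lies in $F_{p}\subseteq F_{p^t}$, the invariant subring of $\theta_t$, so $\theta_t(e_i) = e_i$. Consequently the skew shift $\sigma$ on $\mathcal{R}^n$ commutes with componentwise multiplication by any $e_i$, i.e.\ $\sigma(e_i r) = e_i \sigma(r)$ for every $r \in \mathcal{R}^n$. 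This is exactly what lets us move $\sigma$ across the direct sum.

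For the forward direction, I would take an arbitrary $a = (a_0,\dots,a_{n-1}) \in \mathcal{C}_1$ and lift it to some codeword $r = (a_0 + ub_0 + vc_0 + uvd_0,\dots) \in \mathcal{C}$. Applying the skew cyclic shift $\sigma$ and using $\theta_t(a_i + ub_i + vc_i + uvd_i) = a_i^{p^t}+ub_i^{p^t}+vc_i^{p^t}+uvd_i^{p^t}$, the first component of $\sigma(r)$ becomes $a_{n-1}^{p^t}+ub_{n-1}^{p^t}+\cdots$, so reading off the $F_q$-part yields $(a_{n-1}^{p^t}, a_0^{p^t},\dots,a_{n-2}^{p^t}) \in \mathcal{C}_1$, which is precisely the skew cyclic shift of $a$. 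The same argument applied to $\mathcal{C}_2, \mathcal{C}_3, \mathcal{C}_4$ (using the combinations $a+b$, $a+c$, $a+b+c+d$) shows each is skew cyclic over $F_q$.

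For the converse, suppose all $\mathcal{C}_i$ are skew cyclic. Given $r \in \mathcal{C}$, I would use the unique decomposition $r = e_1 r^{(1)} + e_2 r^{(2)} + e_3 r^{(3)} + e_4 r^{(4)}$ with $r^{(i)} \in \mathcal{C}_i$ (identifying $r^{(i)}\in F_q^n$ with its image under $e_i$-multiplication). By the commutation $\sigma(e_i r^{(i)}) = e_i \sigma(r^{(i)})$ we obtain
\[
\sigma(r) = e_1 \sigma(r^{(1)}) + e_2 \sigma(r^{(2)}) + e_3 \sigma(r^{(3)}) + e_4 \sigma(r^{(4)}),
\]
and since each $\sigma(r^{(i)}) \in \mathcal{C}_i$ by hypothesis, this element lies in $(1-u-v+uv)\mathcal{C}_1 \oplus (u-uv)\mathcal{C}_2 \oplus (v-uv)\mathcal{C}_3 \oplus uv\mathcal{C}_4 = \mathcal{C}$, proving $\mathcal{C}$ is skew cyclic.

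The only mildly delicate point, and the one I would write out with care, is the verification that $\sigma$ and multiplication by $e_i$ commute; this reduces to the arithmetic fact $\theta_t(e_i) = e_i$, which is immediate from the definition of $\theta_t$ since the defining coefficients of each $e_i$ lie in the prime subfield. Everything else is bookkeeping on the direct-sum decomposition, and in particular no new ideas are needed beyond what is already implicit in Theorem~\ref{dual th}.
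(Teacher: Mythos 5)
Your proposal is correct and follows essentially the same route as the paper: both arguments hinge on the fact that the orthogonal idempotents $1-u-v+uv$, $u-uv$, $v-uv$, $uv$ are fixed by $\theta_{t}$ (their coefficients lie in the prime subfield), so the skew shift $\sigma$ distributes across the decomposition $\mathcal{C}=(1-u-v+uv)\mathcal{C}_{1}\oplus(u-uv)\mathcal{C}_{2}\oplus(v-uv)\mathcal{C}_{3}\oplus uv\mathcal{C}_{4}$, after which each direction is bookkeeping. The only cosmetic difference is that you make the commutation $\sigma(e_{i}r)=e_{i}\sigma(r)$ explicit as a lemma, whereas the paper uses it silently inside the computation of $\sigma(r)$.
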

\begin{proof}
Suppose $\mathcal{C}_{1},\mathcal{C}_{2},\mathcal{C}_{3}$ and $\mathcal{C}_{4}$ are skew cyclic codes over ${F_{q}}$. In order to prove $\mathcal{C}$ is a skew cyclic code over $\mathcal{R}$, let $r=(r_{0},r_{1},...,r_{n-1})\in \mathcal{C}$ where $r_{i}=(1-u-v+uv)a_{i}+(u-uv)b_{i}+(v-uv)c_{i}+uvd_{i}, 0\leq i\leq n-1.$  Take $a=(a_{0},a_{1},...,a_{n-1}), ~ b=(b_{0},b_{1},...,b_{n-1}), ~c=(c_{0},c_{1},...,c_{n-1}), ~d=(d_{0},d_{1},...,d_{n-1})$. Then $a\in \mathcal{C}_{1},b\in \mathcal{C}_{2},c\in \mathcal{C}_{3},d\in \mathcal{C}_{4}$ and since $\mathcal{C}_{1},\mathcal{C}_{2},\mathcal{C}_{3}$ and $\mathcal{C}_{4}$ are skew cyclic, so $\sigma(a)\in \mathcal{C}_{1},\sigma(b)\in \mathcal{C}_{2},\sigma(c)\in \mathcal{C}_{3},\sigma(d)\in \mathcal{C}_{4}.$ Note that ${\theta}_{t}(r_{i})=(1-u-v+uv){\theta}_{t}(a_{i})+(u-uv){\theta}_{t}(b_{i})+(v-uv){\theta}_{t}(c_{i})+uv{\theta}_{t}(d_{i})$ for $0\leq i\leq n-1.$ Then  $\sigma(r)=({\theta}_{t}(r_{n-1}),{\theta}_{t}(r_{0}),...{\theta}_{t}(r_{n-2})=(1-u-v+uv)\sigma(a)+(u-uv)\sigma(b)+(v-uv)\sigma(c)+uv\sigma(d)\in (1-u-v+uv)\mathcal{C}_{1}\oplus(u-uv)\mathcal{C}_{2}\oplus(v-uv)\mathcal{C}_{3}\oplus uv\mathcal{C}_{4}=\mathcal{C}$. This shows that $\mathcal{C}$ is a skew cyclic code over $\mathcal{R}$.\\
Conversely, suppose $\mathcal{C}$ is a skew cyclic code over $\mathcal{R}$. Let $a=(a_{0}, a_{1},.., a_{n-1})\in \mathcal{C}_{1}, b = (b_{0},b_{1},..,b_{n-1})\in \mathcal{C}_{2}, c=(c_{0},c_{1},...,c_{n-1})\in \mathcal{C}_{3}, d=(d_{0},d_{1},..,d_{n-1})\in \mathcal{C}_{4}$. Consider $r_{i}=(1-u-v+uv)a_{i}+(u-uv)b_{i}+(v-uv)c_{i}+uvd_{i}, 0\leq i\leq n-1.$ Then $r=(r_{0},r_{1},...,r_{n-1})\in \mathcal{C}$ and since $\mathcal{C}$ is the skew cyclic so $\sigma(r)\in \mathcal{C}$. But  $\sigma(r)=(1-u-v+uv)\sigma(a)+(u-uv)\sigma(b)+(v-uv)\sigma(c)+uv\sigma(d)$, it follows that $\sigma(a)\in \mathcal{C}_{1},\sigma(b)\in \mathcal{C}_{2},\sigma(c)\in \mathcal{C}_{3},\sigma(d)\in \mathcal{C}_{4}$. Hence $\mathcal{C}_{1},\mathcal{C}_{2},\mathcal{C}_{3}$ and $\mathcal{C}_{4}$ are skew cyclic codes over $F_{q}$.
\end{proof}

\begin{cor}
The  dual code $\mathcal{C}^{\perp}$ is a skew cyclic code over $\mathcal{R}$, provided $\mathcal{C}$ is a skew cyclic code over $\mathcal{R}$.
\end{cor}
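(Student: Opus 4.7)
The plan is to derive this corollary by chaining together Theorem \ref{sk th1} and Theorem \ref{dual th}, reducing everything to the corresponding duality fact for skew cyclic codes over the field $F_q$. First, since $\mathcal{C}$ is a skew cyclic code over $\mathcal{R}$, I would apply Theorem \ref{sk th1} in the forward direction to obtain the canonical decomposition $\mathcal{C}=(1-u-v+uv)\mathcal{C}_{1}\oplus(u-uv)\mathcal{C}_{2}\oplus(v-uv)\mathcal{C}_{3}\oplus uv\mathcal{C}_{4}$ where each $\mathcal{C}_i$ is already known to be a skew cyclic code of length $n$ over $F_q$.

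Next, I would invoke the result (established by Siap et al.\ \cite{siap11} and reused implicitly throughout the paper) that the Euclidean dual of a skew cyclic code over $F_q$ with respect to the automorphism $\theta_t$ is itself skew cyclic over $F_q$. Since each $\mathcal{C}_i$ is skew cyclic over $F_q$, this yields that $\mathcal{C}_1^{\perp},\mathcal{C}_2^{\perp},\mathcal{C}_3^{\perp},\mathcal{C}_4^{\perp}$ are all skew cyclic codes of length $n$ over $F_q$.

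Then I would use Theorem \ref{dual th} to write
\[
\mathcal{C}^{\perp}=(1-u-v+uv)\mathcal{C}_{1}^{\perp}\oplus(u-uv)\mathcal{C}_{2}^{\perp}\oplus(v-uv)\mathcal{C}_{3}^{\perp}\oplus uv\mathcal{C}_{4}^{\perp},
\]
and finally apply Theorem \ref{sk th1} in the reverse direction to the four skew cyclic $F_q$-components. This immediately gives that $\mathcal{C}^{\perp}$ is a skew cyclic code over $\mathcal{R}$, completing the argument.

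The only potentially delicate step is the middle one: asserting that the dual of a skew cyclic code over $F_q$ is skew cyclic. Although it is standard in this setting, a self-contained justification would rely on showing that the skew cyclic shift $\sigma$ is an $F_q$-isometry for the Euclidean inner product (this uses that $\theta_t$, being an iterate of the Frobenius, permutes $F_q$ bijectively and satisfies $\theta_t(a)\theta_t(b)=\theta_t(ab)$, so orthogonality is preserved after an index shift). If needed, I would spell this out explicitly, but otherwise the corollary follows in one line from the two previously established theorems.
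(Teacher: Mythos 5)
Your proposal is correct and follows essentially the same route as the paper: decompose $\mathcal{C}$ via Theorem \ref{sk th1}, invoke the known fact that duals of skew cyclic codes over $F_{q}$ are skew cyclic (the paper cites Corollary 18 of \cite{D09} rather than \cite{siap11}, but the content is the same), then reassemble with Theorem \ref{dual th} and Theorem \ref{sk th1}. No substantive difference.
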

\begin{proof}
Let $\mathcal{C}=(1-u-v+uv)\mathcal{C}_{1}\oplus(u-uv)\mathcal{C}_{2}\oplus(v-uv)\mathcal{C}_{3}\oplus uv\mathcal{C}_{4}$ be a skew cyclic code over $\mathcal{R}$. Then by Theorem \ref{sk th1}, $\mathcal{C}_{1},\mathcal{C}_{2},\mathcal{C}_{3}$ and $\mathcal{C}_{4}$ are skew cyclic codes over ${F_{q}}$. Since dual of skew cyclic code over $F_{q}$ is also a skew cyclic code over ${F_{q}}$ by Corollary 18 of \cite{D09}, so $\mathcal{C}_{1}^{\perp},\mathcal{C}_{2}^{\perp},\mathcal{C}_{3}^{\perp}$ and $\mathcal{C}_{4}^{\perp}$ are skew cyclic codes over  ${F_{q}}$. Thus, by Theorem \ref{dual th} and Theorem \ref{sk th1}, $\mathcal{C}^{\perp}$ is a skew cyclic codes over $\mathcal{R}$.
\end{proof}

\begin{cor}
The code $\mathcal{C}$ is a self-dual skew cyclic code over $\mathcal{R}$ if and only if $\mathcal{C}_{1},\mathcal{C}_{2},\mathcal{C}_{3}$ and $\mathcal{C}_{4}$ are self-dual skew cyclic codes over ${F_{q}}$.
\end{cor}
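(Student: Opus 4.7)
The plan is to combine Theorem \ref{dual th} and Theorem \ref{sk th1} directly, since both characterize the relevant property ("self-dual" and "skew cyclic" respectively) componentwise under the canonical decomposition $\mathcal{C}=(1-u-v+uv)\mathcal{C}_{1}\oplus(u-uv)\mathcal{C}_{2}\oplus(v-uv)\mathcal{C}_{3}\oplus uv\mathcal{C}_{4}$. Because both characterizations use the same idempotent decomposition of $\mathcal{R}$, the two properties simply intersect component by component, and the corollary should fall out essentially immediately.

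For the forward direction, I would assume $\mathcal{C}$ is a self-dual skew cyclic code over $\mathcal{R}$. Applying Theorem \ref{sk th1} to the skew cyclic hypothesis, each $\mathcal{C}_i$ is a skew cyclic code over $F_q$. Applying Theorem \ref{dual th} to the self-dual hypothesis gives $\mathcal{C}_i = \mathcal{C}_i^{\perp}$ for $i=1,2,3,4$. Hence each $\mathcal{C}_i$ is a self-dual skew cyclic code over $F_q$.

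For the converse, assume each $\mathcal{C}_i$ is a self-dual skew cyclic code over $F_q$. Theorem \ref{sk th1} (the ``if'' direction) then guarantees that $\mathcal{C}$ is a skew cyclic code over $\mathcal{R}$, while the self-duality clause of Theorem \ref{dual th} ensures that $\mathcal{C} = \mathcal{C}^{\perp}$. Combining these two conclusions yields that $\mathcal{C}$ is a self-dual skew cyclic code over $\mathcal{R}$.

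There is no serious obstacle here: the only thing to check is that the two theorems are being applied to the \emph{same} decomposition of $\mathcal{C}$, which is ensured by the uniqueness of the expression $\mathcal{C}=(1-u-v+uv)\mathcal{C}_{1}\oplus(u-uv)\mathcal{C}_{2}\oplus(v-uv)\mathcal{C}_{3}\oplus uv\mathcal{C}_{4}$ noted earlier in the paper. So the proof is essentially a one-line citation of Theorems \ref{dual th} and \ref{sk th1}, and I would write it out simply as such rather than repeating the decomposition arguments.
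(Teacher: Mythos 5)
Your proof is correct and is exactly the intended argument: the paper states this corollary without proof precisely because it is the immediate conjunction of Theorem \ref{dual th} (self-duality passes to the components $\mathcal{C}_{i}$ and back) and Theorem \ref{sk th1} (the skew cyclic property passes to the components and back), applied to the same unique idempotent decomposition. Nothing further is needed.
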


\begin{thm}\label{sk th2}
Let $\mathcal{C}=(1-u-v+uv)\mathcal{C}_{1}\oplus(u-uv)\mathcal{C}_{2}\oplus(v-uv)\mathcal{C}_{3}\oplus uv\mathcal{C}_{4}$ be a skew cyclic code of length $n$ over $\mathcal{R}$. Then $\mathcal{C}$ has a generating polynomial $f(x)$ which is a right divisor of $(x^{n}-1)$ in ${\mathcal{R}[x;{\theta}_{t}]}$.
\end{thm}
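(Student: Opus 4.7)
The plan is to reduce the problem to the component codes $\mathcal{C}_1,\mathcal{C}_2,\mathcal{C}_3,\mathcal{C}_4$ using the decomposition of $\mathcal{C}$ together with Theorem \ref{sk th1}, and then to lift the generating polynomials of the components up to $\mathcal{R}[x;{\theta}_t]$ via the central idempotents $e_1=1-u-v+uv$, $e_2=u-uv$, $e_3=v-uv$, $e_4=uv$. The key observation, which I would record once at the outset, is that these idempotents are pairwise orthogonal with $e_1+e_2+e_3+e_4=1$ and have coefficients in the prime field $F_p$, so they are fixed by ${\theta}_t$; consequently they commute with $x$ in the skew polynomial ring, i.e.\ they lie in the centre of $\mathcal{R}[x;{\theta}_t]$.

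First, by Theorem \ref{sk th1}, each $\mathcal{C}_i$ is a skew cyclic code of length $n$ over $F_q$. By the result of Siap et al.\ quoted just before Theorem \ref{sk th1}, every such $\mathcal{C}_i$ is generated, as a left $F_q[x;{\theta}_t]$-submodule of $F_q[x;{\theta}_t]/\langle x^n-1\rangle$, by a monic polynomial $f_i(x)$ that is a right divisor of $x^n-1$ in $F_q[x;{\theta}_t]$. Write $x^n-1=h_i(x)\ast f_i(x)$ with $h_i(x)\in F_q[x;{\theta}_t]$. I would then define
\begin{equation*}
f(x)=e_1 f_1(x)+e_2 f_2(x)+e_3 f_3(x)+e_4 f_4(x)\in \mathcal{R}[x;{\theta}_t].
\end{equation*}

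Next, I would verify that $\mathcal{C}=\langle f(x)\rangle$ in $\mathcal{R}[x;{\theta}_t]/\langle x^n-1\rangle$. Take an arbitrary $c(x)\in\mathcal{C}$ and use the decomposition $c(x)=\sum_{i=1}^{4}e_i c_i(x)$ with $c_i(x)\in\mathcal{C}_i$; write each $c_i(x)=g_i(x)\ast f_i(x)$ in $F_q[x;{\theta}_t]$. Setting $g(x)=\sum_i e_i g_i(x)$ and using centrality together with orthogonality of the $e_i$, the cross terms vanish because $e_i\ast e_j=\delta_{ij}e_i$, and a short computation gives
\begin{equation*}
g(x)\ast f(x)=\sum_{i,j}e_i e_j\, g_i(x)\ast f_j(x)=\sum_{i=1}^{4}e_i c_i(x)=c(x).
\end{equation*}
The reverse inclusion $\langle f(x)\rangle\subseteq\mathcal{C}$ is immediate since $e_i f_i(x)\in\mathcal{C}$ for each $i$, so any left multiple $g(x)\ast f(x)$ lies in $\mathcal{C}$ by $\mathcal{R}[x;{\theta}_t]$-linearity and closure under the skew shift.

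Finally, to show $f(x)$ is a right divisor of $x^n-1$ in $\mathcal{R}[x;{\theta}_t]$, I would take $h(x)=\sum_{i=1}^{4}e_i h_i(x)$ and repeat the same centrality/orthogonality computation:
\begin{equation*}
h(x)\ast f(x)=\sum_{i,j}e_i e_j\, h_i(x)\ast f_j(x)=\sum_{i=1}^{4}e_i(x^n-1)=x^n-1,
\end{equation*}
which is the required factorisation. The only delicate point in the whole argument is the skew-multiplication bookkeeping in the middle step, and it is handled once we know the $e_i$ are ${\theta}_t$-invariant; everything else is formal manipulation of the decomposition, so I do not anticipate any serious obstacle.
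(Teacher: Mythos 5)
Your proposal is correct and follows essentially the same route as the paper: define $f(x)$ as the idempotent-weighted sum of the component generators, show $\mathcal{C}=\langle f(x)\rangle$ by multiplying through by the orthogonal idempotents, and exhibit $h(x)=\sum_i e_i h_i(x)$ with $h(x)\ast f(x)=x^{n}-1$. Your explicit remark that the idempotents have prime-field coefficients and are therefore fixed by $\theta_{t}$ (hence central in $\mathcal{R}[x;\theta_{t}]$) makes precise a step the paper uses only implicitly, but the argument is otherwise identical.
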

\begin{proof}
Let $f_{i}(x)$ be generator of $\mathcal{C}_{i}$ in ${F_{q}[x;{\theta}_{t}]}$ for $i=1,2,3,4.$ Then $(1-u-v+uv)f_{1}(x),(u-uv)f_{2}(x),(v-uv)f_{3}(x),uvf_{4}(x)$ are generators of $\mathcal{C}$. Let $f(x)=(1-u-v+uv)f_{1}(x)+(u-uv)f_{2}(x)+(v-uv)f_{3}(x)+uvf_{4}(x)$ and $\mathcal{G}=\langle f(x) \rangle.$ Then $\mathcal{G} \subseteq \mathcal{C}$. Now, $(1-u-v+uv)f(x) = (1-u-v+uv)f_{1}(x)\in \mathcal{G}, (u-uv)f(x)=(u-uv)f_{2}(x)\in \mathcal{G}, (v-uv)f(x)=(v-uv)f_{3}(x)\in \mathcal{G}, uv f(x) = uv f_{4}(x)\in \mathcal{G}$, it follows that $\mathcal{C}\subseteq \mathcal{G}$ and hence $\mathcal{C}=\mathcal{G}=\langle f(x)\rangle.$\\
Since $f_{i}(x)$ is a right divisor of $(x^{n}-1)$ in ${F_{q}[x;{\theta}_{t}]}$ for $i=1,2,3,4$, so there exit $h_{i}(x)\in {F_{q}[x;{\theta}_{t}]}$ such that $(x^{n}-1)=h_{1}(x)\ast f_{1}(x),(x^{n}-1)=h_{2}(x)\ast f_{2}(x),(x^{n}-1)=h_{3}(x)\ast f_{3}(x),(x^{n}-1)=h_{4}(x)\ast f_{4}(x).$ Now, $[(1-u-v+uv)h_{1}(x)+(u-uv)h_{2}(x)+(v-uv)h_{2}(x)+uvh_{4}(x)]\ast f(x)=(1-u-v+uv)h_{1}(x)\ast f_{1}(x) +(u-uv)h_{2}(x)\ast f_{2}(x)+(v-uv)h_{3}(x)\ast f_{3}(x) +uvh_{4}(x)\ast f_{4}(x)=(x^{n}-1)$. Thus, $f(x)$ is a right divisor of $(x^{n}-1)$.
\end{proof}

\begin{cor}
Each left submodule of $\mathcal{R}[x;\theta_{t}]/\langle x^{n}-1 \rangle$ is generated by single element.
\end{cor}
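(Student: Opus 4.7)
The plan is to reduce the corollary to Theorem \ref{sk th2} via the standard polynomial identification. First, I would spell out the identification between $\mathcal{R}^n$ and $\mathcal{R}[x;\theta_t]/\langle x^n-1\rangle$ given by $(r_0,r_1,\dots,r_{n-1})\mapsto r_0+r_1x+\cdots+r_{n-1}x^{n-1}$. Under this identification, left multiplication by $x$ on the quotient corresponds to the skew cyclic shift $\sigma$: indeed, $x\ast(r_0+r_1x+\cdots+r_{n-1}x^{n-1})=\theta_t(r_0)x+\theta_t(r_1)x^2+\cdots+\theta_t(r_{n-1})x^n$, and since $x^n=1$ in the quotient, this equals $\theta_t(r_{n-1})+\theta_t(r_0)x+\cdots+\theta_t(r_{n-2})x^{n-1}$, which is the image of $\sigma(r_0,r_1,\dots,r_{n-1})$.

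Next, I would argue that a subset $\mathcal{C}$ of the quotient is a left $\mathcal{R}[x;\theta_t]$-submodule if and only if it is closed under addition, closed under left multiplication by elements of $\mathcal{R}$ (so it is an $\mathcal{R}$-submodule of $\mathcal{R}^n$, i.e.\ a linear code), and closed under left multiplication by $x$ (so it is closed under $\sigma$). Hence left submodules of $\mathcal{R}[x;\theta_t]/\langle x^n-1\rangle$ are in bijection with skew cyclic codes of length $n$ over $\mathcal{R}$.

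To finish, I would invoke Theorem \ref{sk th2}: every such skew cyclic code decomposes as $\mathcal{C}=(1-u-v+uv)\mathcal{C}_1\oplus(u-uv)\mathcal{C}_2\oplus(v-uv)\mathcal{C}_3\oplus uv\mathcal{C}_4$, and the theorem produces a single polynomial $f(x)=(1-u-v+uv)f_1(x)+(u-uv)f_2(x)+(v-uv)f_3(x)+uvf_4(x)$ such that $\mathcal{C}=\langle f(x)\rangle$. Thus the corresponding left submodule is singly generated by $f(x)$, which is exactly the claim. No step is really an obstacle here; the only mildly delicate point is checking that the skew-shift-correspondence is valid for the non-commutative multiplication $\ast$, which is immediate from the definition $(ax^i)\ast(bx^j)=a\theta_t^i(b)x^{i+j}$ given earlier in the paper.
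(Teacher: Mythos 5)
Your proposal is correct and follows exactly the route the paper intends: the paper states this corollary without proof as an immediate consequence of Theorem \ref{sk th2}, relying on the standard identification (already set up before Theorem \ref{sk th1}) between left submodules of $\mathcal{R}[x;\theta_{t}]/\langle x^{n}-1\rangle$ and skew cyclic codes. Your explicit verification that multiplication by $x$ realizes the shift $\sigma$, and the appeal to the single generator $f(x)$ from Theorem \ref{sk th2}, supply precisely the details the paper leaves implicit.
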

Let $\mathcal{C}$ be a skew cyclic code over $F_{q}$ generated by the polynomial $f(x)$ such that $(x^{n}-1)= h(x)\ast f(x)$ where $f(x)=f_{0}+f_{1}x+\dots +f_{r}x^{r}, h(x)=h_{0}+h_{1}x+\dots +h_{n-r}x^{n-r}$ in $F_{q}[x;{\theta}_{t}]$. Then by \cite{D09}, its dual $\mathcal{C}^{\perp}$ is a skew cyclic code over $F_{q}$ generated by the polynomial $\widehat{h}(x)=h_{n-r}+{\theta}_{t}(h_{n-r-1})x+\dots +{\theta}_{t}(h_{0})x^{n-r}$.\\

\begin{cor}\label{cor 1}
Let $\mathcal{C}=(1-u-v+uv)\mathcal{C}_{1}\oplus(u-uv)\mathcal{C}_{2}\oplus(v-uv)\mathcal{C}_{3}\oplus uv\mathcal{C}_{4}$ be a skew cyclic code over $\mathcal{R}$ and $f_{i}$ the generator of $\mathcal{C}_{i}$ such that $(x^{n}-1)=h_{i}(x)\ast f_{i}(x)$, for $i=1,2,3,4$. Then $\mathcal{C}^{\perp}=\langle (1-u-v+uv)\widehat{h}_{1}(x)+(u-uv)\widehat{h}_{2}(x)+(v-uv)\widehat{h}_{3}(x)+uv\widehat{h}_{4}(x)\rangle$ and $\mid \mathcal{C}^{\perp}\mid =q^{\sum_{i=1}^{4}f_{i}(x)}$.
\end{cor}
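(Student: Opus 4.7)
The plan is to reduce the problem to the known $F_q$-case via the decomposition $\mathcal{C} = (1-u-v+uv)\mathcal{C}_1 \oplus (u-uv)\mathcal{C}_2 \oplus (v-uv)\mathcal{C}_3 \oplus uv\mathcal{C}_4$. By Theorem \ref{dual th}, the dual splits in exactly the same fashion as
\[
\mathcal{C}^\perp = (1-u-v+uv)\mathcal{C}_1^\perp \oplus (u-uv)\mathcal{C}_2^\perp \oplus (v-uv)\mathcal{C}_3^\perp \oplus uv\mathcal{C}_4^\perp,
\]
so the work reduces to identifying the generator and size of each component $\mathcal{C}_i^\perp$ as a skew cyclic code over $F_q$.

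For the generator, I would invoke the recalled result from \cite{D09} just before the corollary: if $\mathcal{C}_i = \langle f_i(x)\rangle$ with $(x^n-1) = h_i(x) * f_i(x)$ in $F_q[x;\theta_t]$, then $\mathcal{C}_i^\perp$ is a skew cyclic code generated by $\widehat{h}_i(x)$. Therefore each component $\mathcal{C}_i^\perp$ is individually generated by $\widehat{h}_i(x)$ in $F_q[x;\theta_t]/\langle x^n-1\rangle$. Now I apply the same assembly argument used in Theorem \ref{sk th2}: setting $g(x) = (1-u-v+uv)\widehat{h}_1(x) + (u-uv)\widehat{h}_2(x) + (v-uv)\widehat{h}_3(x) + uv\widehat{h}_4(x)$ in $\mathcal{R}[x;\theta_t]/\langle x^n-1\rangle$ and using the idempotent, pairwise-orthogonal nature of the four coefficients $\{1-u-v+uv,\,u-uv,\,v-uv,\,uv\}$, one checks that multiplying $g(x)$ on the left by each idempotent recovers the corresponding $\widehat{h}_i(x)$, so $\langle g(x)\rangle$ contains every generator of $\mathcal{C}^\perp$ and conversely $g(x)$ lies in $\mathcal{C}^\perp$. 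This yields equality $\mathcal{C}^\perp = \langle g(x)\rangle$.

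For the cardinality claim, I would observe that if $\deg f_i = r_i$ then $\mathcal{C}_i$ has dimension $n-r_i$ as a skew cyclic code over $F_q$, so $|\mathcal{C}_i^\perp| = q^{r_i}$. Since the four components of the decomposition lie in orthogonal summands induced by the idempotents, the total size multiplies and
\[
|\mathcal{C}^\perp| = \prod_{i=1}^{4} |\mathcal{C}_i^\perp| = q^{\sum_{i=1}^{4} \deg f_i(x)},
\]
which is the formula stated (reading $\sum_{i=1}^4 f_i(x)$ in the exponent as $\sum \deg f_i(x)$).

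The only genuinely subtle point, which I would verify explicitly, is the assembly step showing $\mathcal{C}^\perp = \langle g(x)\rangle$: the same reasoning as in Theorem \ref{sk th2} works, but one must be careful that the orthogonal idempotents commute with the multiplication $\ast$ of $\mathcal{R}[x;\theta_t]$ (they lie in the invariant subring fixed by $\theta_t$, so $\theta_t$ acts trivially on them, making the computations $(1-u-v+uv)\ast g(x) = (1-u-v+uv)\widehat{h}_1(x)$, etc., valid despite non-commutativity of $\ast$). Once this is noted, the rest is a clean repackaging of the $F_q$ result.
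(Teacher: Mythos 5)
Your proposal is correct and follows essentially the same route as the paper: decompose $\mathcal{C}^{\perp}$ via Theorem \ref{dual th}, identify each $\mathcal{C}_{i}^{\perp}=\langle \widehat{h}_{i}(x)\rangle$ from the recalled result of \cite{D09}, assemble the single generator exactly as in Theorem \ref{sk th2}, and multiply the component cardinalities. Your explicit check that the orthogonal idempotents lie in the subring fixed by $\theta_{t}$ (so they pass through $\ast$) is a detail the paper leaves implicit, as is your reading of the exponent as $\sum_{i=1}^{4}\deg f_{i}(x)$.
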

\begin{proof}
Since $\widehat{h}_{i}(x)$ is generator of ${\mathcal{C}_{i}^{\perp}}$ for $i=1,2,3,4$, then by Theorem \ref{dual th} and Theorem \ref{sk th2}, $\mathcal{C}^{\perp}=\langle (1-u-v+uv)\widehat{h}_{1}(x)+(u-uv)\widehat{h}_{2}(x)+(v-uv)\widehat{h}_{3}(x)+uv\widehat{h}_{4}(x)\rangle$.\\
 Also, $\mid  \mathcal{C}^{\perp}\mid=\mid \mathcal{C}_{1}^{\perp}\mid \mid \mathcal{C}_{2}^{\perp}\mid \mid \mathcal{C}_{3}^{\perp}\mid \mid \mathcal{C}_{4}^{\perp}\mid=q^{\sum_{i=1}^{4}f_{i}(x)}.$
\end{proof}

\section{Idempotent generators of skew cyclic codes and their dual codes over $\mathcal{R}$}\label{sec5}

Since underlying ring is the skew polynomial ring which is non-commutative, so polynomials exhibit here more factorizations. Therefore, it is not too easy to find exact number of skew cyclic codes over $\mathcal{R}[x;\theta_{t}]$ or the number of idempotent generators of skew cyclic codes over $\mathcal{R}$. But, when we impose conditions $gcd(n,k)=1$ and $gcd(n,q)=1$ [ as given by \cite{Gursoy14}], where $k$ is the order of the automorphism and $n$ is the length of the code, we can find idempotent generators. Towards this, we have the following:

\begin{thm} (\cite{Gursoy14}) \label{idm th1}
~~Let $f(x)\in F_{q}[x;{\theta}_{t}]$ be a monic right divisor of $(x^{n}-1)$ and $\mathcal{C}=\langle f(x) \rangle.$ If $gcd(n,k)=1$ and $gcd(n,q)=1$, then there exits an idempotent polynomial $e(x)\in F_{q}[x;{\theta}_{t}]/ \langle x^{n}-1 \rangle$ such that $\mathcal{C}=\langle e(x) \rangle.$
\end{thm}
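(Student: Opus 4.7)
The plan is to adapt the classical construction of an idempotent generator for cyclic codes to the skew setting. Since $f(x)$ is a monic right divisor of $x^n-1$, write $x^n-1 = h(x)\ast f(x)$ in $F_q[x;\theta_t]$. The hypothesis $\gcd(n,q)=1$ plays the role of separability of $x^n-1$ and should guarantee that $f(x)$ and $h(x)$ share no non-unit common right divisor. Because $F_q[x;\theta_t]$ is a (left and right) Euclidean domain, the extended Euclidean algorithm then yields a left B\'ezout identity
$$a(x)\ast f(x) + b(x)\ast h(x) = 1$$
for some $a(x), b(x) \in F_q[x;\theta_t]$.

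Set $e(x) \equiv a(x)\ast f(x) \pmod{\langle x^n-1\rangle}$ as the candidate idempotent. The containment $\langle e(x)\rangle \subseteq \mathcal{C}$ is immediate since $e(x)\in\langle f(x)\rangle$. For the reverse inclusion, right-multiply the B\'ezout identity by $f(x)$: this gives $f(x) = a(x)\ast f(x)\ast f(x) + b(x)\ast(x^n-1)$, so $f(x) \equiv e(x)\ast f(x) \pmod{x^n-1}$, hence $f(x)\in\langle e(x)\rangle$ and $\mathcal{C} = \langle e(x)\rangle$. This step is routine and uses little non-commutative input.

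The main obstacle is verifying $e(x)^2 \equiv e(x) \pmod{x^n-1}$. Formally $e^2 - e = a\ast f\ast(a\ast f - 1) = -a\ast f\ast b\ast h$, and in the commutative case one would swap $f\ast b = b\ast f$ to collapse this to a multiple of $h\ast f = x^n-1$. In $F_q[x;\theta_t]$ no such swap is available, so the naive route breaks down. This is precisely where the hypothesis $\gcd(n,k)=1$ should intervene: combined with $\gcd(n,q)=1$, I expect it to be exactly what makes $\langle x^n-1\rangle$ a two-sided ideal and forces the quotient $F_q[x;\theta_t]/\langle x^n-1\rangle$ to be a semisimple finite-dimensional $F_q$-algebra, via a skew Maschke-type averaging argument that uses the action of $\theta_t$ and the invertibility of $n$ in $F_q$. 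Granting semisimplicity, the Wedderburn--Artin theorem guarantees that every left ideal of the quotient is generated by an idempotent, so the B\'ezout candidate $a(x)\ast f(x)$ can be replaced, or adjusted by a standard idempotent lift, by a true idempotent generator of $\mathcal{C}$. I expect the semisimplicity step to be the main technical hurdle; the B\'ezout identity and the equality $\langle e(x)\rangle = \langle f(x)\rangle$ are comparatively routine once coprimality of $f(x)$ and $h(x)$ is in place.
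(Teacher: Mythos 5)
The paper does not actually prove this statement --- it is imported verbatim from \cite{Gursoy14}, and the argument there is quite different from yours: one first uses $\gcd(n,k)=1$ to show that a skew cyclic code of length $n$ is an \emph{ordinary} cyclic code (the same reduction this paper records as Theorem \ref{ret 1} with $\alpha=1$), then applies the classical commutative idempotent theory, which needs $\gcd(n,q)=1$, and finally checks that the left $F_{q}[x;\theta_{t}]$-submodule generated by the classical idempotent is again a skew cyclic code, hence again an ordinary cyclic code containing $e(x)$, hence all of $\mathcal{C}$. Your proposed repair of the idempotency step goes in the wrong direction: the left ideal $F_{q}[x;\theta_{t}]\ast(x^{n}-1)$ is two-sided if and only if $\theta_{t}^{n}=\mathrm{id}$, i.e.\ if and only if $k\mid n$, because $(x^{n}-1)\ast a=\theta_{t}^{n}(a)(x^{n}-1)+(\theta_{t}^{n}(a)-a)$ and the constant remainder must vanish. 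Under the hypothesis $\gcd(n,k)=1$ with $k>1$ this never happens, so the quotient is only a left module, not an $F_{q}$-algebra; there is no ring in which to run a Maschke argument or invoke Wedderburn--Artin. The condition $\gcd(n,k)=1$ is doing the opposite job of the one you assign to it.

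There is a second gap in the step you call routine. From $a(x)\ast f(x)+b(x)\ast h(x)=1$ and $x^{n}-1=h(x)\ast f(x)$ you correctly get $f(x)=e(x)\ast f(x)+b(x)\ast(x^{n}-1)$ with $e=a\ast f$, but $e\ast f=(a\ast f)\ast f$ is an element of the left module $F_{q}[x;\theta_{t}]\ast f$, not of $F_{q}[x;\theta_{t}]\ast e$; to conclude $f\in\langle e\rangle$ you would need a representation $f\equiv g\ast e$ with $e$ on the \emph{right}, which is exactly the commutation $f\ast b=b\ast f$ you rightly flag as unavailable for the idempotency computation. (Concretely, $f\in\langle e\rangle$ is equivalent to a B\'ezout identity $g\ast a+p\ast h=1$ for the pair $(a,h)$, which does not follow formally from the one for $(f,h)$.) The right-coprimality of $f$ and $h$ in the skew ring is also asserted from $\gcd(n,q)=1$ alone without justification. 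Both difficulties disappear once you make the reduction to the commutative case first, which is why the hypothesis $\gcd(n,k)=1$ is present.
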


\begin{thm}
Let $\mathcal{C}=(1-u-v+uv)\mathcal{C}_{1}\oplus(u-uv)\mathcal{C}_{2}\oplus(v-uv)\mathcal{C}_{3}\oplus uv\mathcal{C}_{4}$ be a skew cyclic code of length $n$ over $\mathcal{R}$ with $gcd(n,k)=1$ and $gcd(n,q)=1$. Then $\mathcal{C}$ has an idempotent generator $e(x)$ in $R[x;{\theta}_{t}]$.
\end{thm}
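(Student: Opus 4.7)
The plan is to lift the idempotent generators of the four component codes to an idempotent generator of $\mathcal{C}$ using the orthogonal idempotent decomposition of $\mathcal{R}$. By Theorem \ref{sk th1}, the hypothesis that $\mathcal{C}$ is skew cyclic forces each component $\mathcal{C}_{i}$ ($i=1,2,3,4$) to be a skew cyclic code of length $n$ over $F_{q}$. Since the conditions $\gcd(n,k)=1$ and $\gcd(n,q)=1$ are inherited, Theorem \ref{idm th1} provides idempotent polynomials $e_{i}(x)\in F_{q}[x;\theta_{t}]/\langle x^{n}-1\rangle$ with $\mathcal{C}_{i}=\langle e_{i}(x)\rangle$ for each $i$. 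The natural candidate is
\begin{equation*}
e(x)=(1-u-v+uv)e_{1}(x)+(u-uv)e_{2}(x)+(v-uv)e_{3}(x)+uv\,e_{4}(x),
\end{equation*}
and I would argue that this $e(x)$ both generates $\mathcal{C}$ and is idempotent in $\mathcal{R}[x;\theta_{t}]/\langle x^{n}-1\rangle$.

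First, I would observe that the four coefficients $\varepsilon_{1}=1-u-v+uv$, $\varepsilon_{2}=u-uv$, $\varepsilon_{3}=v-uv$, $\varepsilon_{4}=uv$ form a complete set of pairwise orthogonal idempotents in $\mathcal{R}$ with $\varepsilon_{1}+\varepsilon_{2}+\varepsilon_{3}+\varepsilon_{4}=1$, and crucially each $\varepsilon_{i}$ is fixed by $\theta_{t}$ (because $\theta_{t}$ acts as the Frobenius only on the $F_{q}$-coefficients, leaving $u,v,uv$ invariant). Consequently $\varepsilon_{i}$ commutes with the skew indeterminate $x$, and skew multiplications against $\varepsilon_{i}f(x)\ast \varepsilon_{j}g(x)$ collapse to $\delta_{ij}\varepsilon_{i}\bigl(f(x)\ast g(x)\bigr)$. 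This is the technical fact that makes the decomposition compatible with the skew multiplication $\ast$.

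Next, repeating the argument used in the proof of Theorem \ref{sk th2}, multiplying $e(x)$ on the left by each $\varepsilon_{i}$ yields $\varepsilon_{i}e_{i}(x)$, so the left ideal $\langle e(x)\rangle$ contains the generators $\varepsilon_{i}e_{i}(x)$ of each summand of $\mathcal{C}$; the reverse containment is immediate since $e(x)\in\mathcal{C}$. Hence $\mathcal{C}=\langle e(x)\rangle$. For the idempotency, I would compute
\begin{equation*}
e(x)\ast e(x)=\sum_{i,j}\varepsilon_{i}\varepsilon_{j}\bigl(e_{i}(x)\ast e_{j}(x)\bigr)=\sum_{i=1}^{4}\varepsilon_{i}\bigl(e_{i}(x)\ast e_{i}(x)\bigr)=\sum_{i=1}^{4}\varepsilon_{i}e_{i}(x)=e(x),
\end{equation*}
using orthogonality of the $\varepsilon_{i}$ in the first equality and idempotency of each $e_{i}(x)$ in the second.

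The only step that requires real care, and the one I would flag as the main obstacle, is justifying the collapse $\varepsilon_{i}f(x)\ast\varepsilon_{j}g(x)=\delta_{ij}\varepsilon_{i}(f(x)\ast g(x))$ inside the skew ring; once this is established, the rest of the proof is a mechanical verification analogous to the constructions already carried out for the generating polynomial in Theorem \ref{sk th2} and for the dual in Corollary \ref{cor 1}. This justification hinges on $\theta_{t}(\varepsilon_{i})=\varepsilon_{i}$, which follows directly from the definition of $\theta_{t}$.
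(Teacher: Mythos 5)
Your proposal is correct and follows essentially the same route as the paper: extract idempotent generators $e_{i}(x)$ of the components via Theorem \ref{idm th1} and glue them with the orthogonal idempotents $1-u-v+uv$, $u-uv$, $v-uv$, $uv$ exactly as in Theorem \ref{sk th2}. The only difference is that you explicitly verify idempotency of $e(x)$ using $\theta_{t}(\varepsilon_{i})=\varepsilon_{i}$ and the orthogonality of the $\varepsilon_{i}$, a check the paper asserts without proof; your justification of that step is sound.
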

\begin{proof}
By Theorem \ref{idm th1}, there exists idempotent generator $e_{i}(x)$ of $\mathcal{C}_{i}$ for $i=1,2,3,4$ in $F_{q}[x;{\theta}_{t}]$. Then by Theorem \ref{sk th2}, $e(x)=(1-u-v+uv)e_{1}(x)+(u-uv)e_{2}(x)+(v-uv)e_{3}(x)+uve_{4}(x)$ is a generator of $\mathcal{C}$ which is also idempotent.
\end{proof}

\begin{thm}
Let $\mathcal{C}=(1-u-v+uv)\mathcal{C}_{1}\oplus(u-uv)\mathcal{C}_{2}\oplus(v-uv)\mathcal{C}_{3}\oplus uv\mathcal{C}_{4}$ be a skew cyclic code of length $n$ over $\mathcal{R}$ and $gcd(n,k)=1$ and $gcd(n,q)=1$. Then $\mathcal{C}^{\perp}$ has an idempotent generator in $R[x;{\theta}_{t}]$.
\end{thm}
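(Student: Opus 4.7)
The plan is to mirror the strategy used for the preceding theorem, but applied to $\mathcal{C}^{\perp}$. Concretely, I will decompose $\mathcal{C}^{\perp}$ into four $F_{q}$-components, produce an idempotent generator in each component via the already-cited Gursoy--Siap--Aydogdu result, glue them together using the four orthogonal idempotents $(1-u-v+uv), (u-uv), (v-uv), uv$ of $\mathcal{R}$, and finally verify that the glued polynomial is idempotent under the skew product $\ast$.

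First, by Theorem \ref{dual th} one has $\mathcal{C}^{\perp} = (1-u-v+uv)\mathcal{C}_{1}^{\perp} \oplus (u-uv)\mathcal{C}_{2}^{\perp} \oplus (v-uv)\mathcal{C}_{3}^{\perp} \oplus uv\,\mathcal{C}_{4}^{\perp}$. Since $\mathcal{C}$ is skew cyclic, Theorem \ref{sk th1} gives that each $\mathcal{C}_{i}$ is skew cyclic over $F_{q}$, and because the dual of a skew cyclic code over $F_{q}$ is again skew cyclic (Corollary 18 of \cite{D09}), each $\mathcal{C}_{i}^{\perp}$ is skew cyclic of length $n$ over $F_{q}$. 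This puts us in position to apply the hypothesis $\gcd(n,k)=1$, $\gcd(n,q)=1$.

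Second, applying Theorem \ref{idm th1} to each $\mathcal{C}_{i}^{\perp}$ yields an idempotent generator $\varepsilon_{i}(x) \in F_{q}[x;\theta_{t}]/\langle x^{n}-1\rangle$ with $\mathcal{C}_{i}^{\perp} = \langle \varepsilon_{i}(x)\rangle$ and $\varepsilon_{i}(x)\ast\varepsilon_{i}(x) = \varepsilon_{i}(x)$. Set
$$\varepsilon(x) = (1-u-v+uv)\varepsilon_{1}(x) + (u-uv)\varepsilon_{2}(x) + (v-uv)\varepsilon_{3}(x) + uv\,\varepsilon_{4}(x).$$
The same argument as in Theorem \ref{sk th2} shows $\mathcal{C}^{\perp} = \langle \varepsilon(x)\rangle$: one inclusion is clear, and the other follows by left-multiplying $\varepsilon(x)$ by each of the four orthogonal idempotents of $\mathcal{R}$ to isolate the four summands $(1-u-v+uv)\varepsilon_{1}(x)$, $(u-uv)\varepsilon_{2}(x)$, $(v-uv)\varepsilon_{3}(x)$, $uv\,\varepsilon_{4}(x)$, each of which generates the corresponding piece of $\mathcal{C}^{\perp}$.

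The only step requiring care, and what I expect to be the main obstacle, is verifying $\varepsilon(x)\ast\varepsilon(x) = \varepsilon(x)$ in $\mathcal{R}[x;\theta_{t}]/\langle x^{n}-1\rangle$, since the skew product applies $\theta_{t}$ to coefficients and one must be sure that cross terms collapse. The crucial observation is that the four idempotents $(1-u-v+uv), (u-uv), (v-uv), uv$ have coefficients in $\{0,\pm 1\}\subset F_{p^{t}}$, hence lie in the invariant subring $F_{p^{t}}+uF_{p^{t}}+vF_{p^{t}}+uvF_{p^{t}}$ of $\theta_{t}$; they therefore commute past every application of $\theta_{t}^{i}$ arising in the skew product. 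Combined with pairwise orthogonality, this makes every off-diagonal term vanish, e.g.\ $(u-uv)\varepsilon_{2}(x)\ast(v-uv)\varepsilon_{3}(x) = (u-uv)(v-uv)\,\varepsilon_{2}(x)\ast\varepsilon_{3}(x) = 0$, while each diagonal term collapses via $\varepsilon_{i}(x)\ast\varepsilon_{i}(x) = \varepsilon_{i}(x)$ to the corresponding summand of $\varepsilon(x)$. Summing the four contributions recovers $\varepsilon(x)$, completing the proof.
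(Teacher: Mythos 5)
Your proof is correct, but it takes a genuinely different route from the paper's. You rebuild the idempotent generator of $\mathcal{C}^{\perp}$ from scratch: decompose $\mathcal{C}^{\perp}$ via Theorem \ref{dual th}, observe that each $\mathcal{C}_{i}^{\perp}$ is itself a skew cyclic code over $F_{q}$ (Corollary 18 of Boucher--Ulmer) generated by a monic right divisor of $x^{n}-1$, apply Theorem \ref{idm th1} to each $\mathcal{C}_{i}^{\perp}$, and glue with the orthogonal idempotents of $\mathcal{R}$ --- i.e.\ you rerun the argument of the preceding theorem with $\mathcal{C}^{\perp}$ in place of $\mathcal{C}$. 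The paper instead starts from the idempotent generator $e(x)$ of $\mathcal{C}$ already obtained, invokes the classical formula (Lemma 12.3.23 of Huffman--Pless) that the dual of $\langle e_{i}(x)\rangle$ has idempotent generator $1-e_{i}(x^{-1})$, and glues those to get the closed form $1-e(x^{-1})$ for the idempotent of $\mathcal{C}^{\perp}$. The paper's route buys an explicit relation between the primal and dual idempotents, which yours does not produce; on the other hand, your route stays entirely inside results already proved in the skew setting, whereas the paper's appeal to a lemma stated for ordinary cyclic codes implicitly relies on the equivalence of skew cyclic and cyclic codes under $\gcd(n,k)=1$ and so needs a word of justification that the paper omits. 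Your explicit check that the four idempotents of $\mathcal{R}$ lie in the invariant subring of $\theta_{t}$, so that cross terms in $\varepsilon(x)\ast\varepsilon(x)$ vanish, is a detail the paper glosses over in both theorems and is worth having on record.
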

\begin{proof}
Suppose $e(x) = (1-u-v+uv)e_{1}(x)+(u-uv)e_{2}(x)+(v-uv)e_{3}(x)+uve_{4}(x)$ is an idempotent generator of $\mathcal{C}$ where $e_{i}(x)$ is idempotent generator of $\mathcal{C}_{i}$ for $i=1,2,3,4.$ Then $\mathcal{C}_{i}^{\perp}$ has idempotent generator
$1-e_{i}(x^{-1})$, for $i=1,2,3,4$ [See Lemma 12.3.23 (i) in \cite{book}].
Hence, by Theorem \ref{sk th2}, $\mathcal{C}^{\perp}$ has idempotent generator $(1-u-v+uv)(1-e_{1}(x^{-1}))+(u-uv)(1-e_{2}(x^{-1}))+(v-uv)(1-e_{3}(x^{-1}))+uv(1-e_{4}(x^{-1}))=1-e(x^{-1}).$
\end{proof}

\begin{exam}
Consider the field $F_{25}=F_{5}[\alpha]$, where ${\alpha}^{2}+\alpha +1=0$. Take $n=4$ and Frobenius automorphism ${\theta}_{t}:F_{25}\rightarrow F_{25}$ defined by ${\theta}_{t}(\alpha)={\alpha}^{5}.$ Now, $x^{4}-1=(x+2)(x+3)(x+\alpha)(x+\alpha+1)=(x+2)(x+3)(x+\alpha+1)(x+\alpha).$ Take $f_{1}(x)=f_{2}(x)=f_{3}(x)=f_{4}(x)=(x+\alpha+1)$. Then $\mathcal{C}_{1}=\langle f_{1}(x) \rangle, \mathcal{C}_{2}=\langle f_{2}(x) \rangle, \mathcal{C}_{3}=\langle f_{3}(x) \rangle, \mathcal{C}_{4}=\langle f_{4}(x) \rangle$ are skew cyclic codes of length 4 with dimension 3 over $F_{25}$. Let $f(x) =(1-u-v+uv)f_{1}(x)+(u-uv)f_{2}(x)+(v-uv)f_{3}(x)+uvf_{4}(x)=(x+\alpha+1).$ Then $\mathcal{C}=\langle f(x) \rangle$ is a skew cyclic code of length 4 over $\mathcal{R}=F_{25}+uF_{25}+vF_{25}+uvF_{25},$ where $u^{2}=u, v^{2}=v, uv=vu.$ Here, the Gray image $\Psi(\mathcal{C})$ is a skew quasi-cyclic code of index $4$ over $F_{25}$ with parameters $[16,12,2]$, which is an optimal code.
\end{exam}
\begin{exam}
Consider the field $F_{9}=F_{3}[2\alpha+1]$; where ${\alpha}^{2}+1=0$. Take $n=6$ and Frobenius automorphism ${\theta}_{t}:F_{9}\rightarrow F_{9}$ defined by ${\theta}_{t}(\alpha)={\alpha}^{3}$. Now, $x^{6}-1=(2+x+(1+2\alpha)x^{2}+x^{3})(1+x+(2\alpha+2)x^{2}+x^{3})=(2+\alpha x+2\alpha x^{3}+x^{4})(1+\alpha x+x^{2})$. Take $f_{1}(x)=f_{2}(x)=f_{3}(x)=(2+\alpha x+2\alpha x^{3}+x^{4})$ and $f_{4}(x)=(2+x+(1+2\alpha)x^{2}+x^{3})$, then $\mathcal{C}_{1}=\langle f_{1}(x) \rangle, \mathcal{C}_{2}=\langle f_{2}(x) \rangle, \mathcal{C}_{3}=\langle f_{3}(x) \rangle, \mathcal{C}_{4}=\langle f_{4}(x) \rangle$ are skew cyclic codes of length $6$ over $F_{9}$ with dimension $2, 2, 2$ and $3$ respectively. If we take $f(x) =(1-u-v+uv)f_{1}(x)+(u-uv)f_{2}(x)+(v-uv)f_{3}(x)+uvf_{4}(x)=(1-uv)(2+\alpha x+2\alpha x^{3}+x^{4})+uv(2+x+(1+2\alpha)x^{2}+x^{3})$, then $\mathcal{C}=\langle f(x) \rangle$ is a skew cyclic code of length 6 over $\mathcal{R}=F_{9}+uF_{9}+vF_{9}+uvF_{9}$, where $u^{2}=u, v^{2}=v, uv=vu.$ The Gray image $\Psi(\mathcal{C})$ is a skew quasi-cyclic code of index $4$ over $F_{9}$ with parameters $[24,9,4].$
\end{exam}

\section{Skew Constacyclic codes over $\mathcal{R}$}\label{sec6}
\begin{df}
Let $\alpha=(\alpha_{1}+u\alpha_{2}+v\alpha_{3}+uv\alpha_{4})$ be a unit in $\mathcal{R}$ where $\alpha_{i}\in F_{p^{t}}\backslash\{0\}$ and ${\theta}_{t}$ be the automorphism on $\mathcal{R}$. A linear code $\mathcal{C}$ of length $n$ is said to be skew $\alpha$-constacyclic code over $\mathcal{R}$ if and only if $\mathcal{C}$ is invariant under the skew $\alpha$-constacyclic shift operation ${\tau}_\alpha$ where $\tau_{\alpha}:\mathcal{R}^{n} \rightarrow \mathcal{R}^{n}$ defined by $\tau_{\alpha}(c_{0},c_{1},..,c_{n-1})=(\alpha \theta_{t}(c_{n-1}),\theta_{t}(c_{0}),..,\theta_{t}(c_{n-2}))$, $i.e, \mathcal{C}$ is skew $\alpha$-constacyclic code if and only if $\tau_{\alpha}(\mathcal{C})=\mathcal{C}.$\\
Clearly, $\mathcal{C}$ is said to be skew cyclic code for $\alpha= 1$ and skew negacyclic code for $\alpha= -1$.
\end{df}
In this section our study restricted to the condition $\alpha\in \mathcal{R}$ such that $\alpha^{2}=1$. By identifying each codeword $c = (c_{0}, c_{1}, \dots c_{n-1})\in \mathcal{R}^{n}$ to a polynomial $c(x) = c_{0}+c_{1}x+\dots +c_{n-1}x^{n-1}$ in the left $\mathcal{R}$-module $\mathcal{R}_{n}= \mathcal{R}[\theta_{t};x]/\langle x^{n}-\alpha\rangle$, we call a linear code $C$ is a skew $\alpha$-constacyclic code if and only if it is a left $\mathcal{R}$-submodule of $\mathcal{R}_{n}= \mathcal{R}[\theta_{t};x]/\langle x^{n}-\alpha\rangle$.

\begin{thm}
Define a map $\rho : \mathcal{R}_{n} = \mathcal{R}[\theta_{t};x]/\langle x^{n}-1 \rangle \mapsto \mathcal{R}_{n},\alpha = \mathcal{R}[\theta_{t};x]/\langle x^{n}-\alpha\rangle$ by $\rho(f(x)) = f(\alpha x)$. If $n$ is odd, then $\rho$ is a left $\mathcal{R}$-module isomorphism.
\end{thm}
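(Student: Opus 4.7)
The plan is to lift $\rho$ to a map $\tilde{\rho}$ on the ambient skew polynomial ring $\mathcal{R}[\theta_t; x]$ defined by $\tilde{\rho}(\sum a_i x^i) = \sum a_i (\alpha x)^i$, check that $\tilde{\rho}$ carries the ideal $\langle x^n - 1\rangle$ into $\langle x^n - \alpha\rangle$, descend to the quotients, and then exhibit an explicit two-sided inverse. The essential observation is that $\alpha = \alpha_1 + u\alpha_2 + v\alpha_3 + uv\alpha_4$ has coefficients in $F_{p^t}$, which by the preliminaries is exactly the subring fixed by $\theta_t$; hence $\theta_t(\alpha) = \alpha$, and therefore also $\theta_t(\alpha^k) = \alpha^k$ for every $k$.

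First I would compute $(\alpha x)^n$ inside $\mathcal{R}[\theta_t; x]$. Using the skew multiplication rule $x \cdot r = \theta_t(r)\, x$ together with $\theta_t$-invariance of $\alpha$, an easy induction on $k$ yields $(\alpha x)^k = \alpha^k x^k$. Since $\alpha^2 = 1$ is the standing hypothesis of this section and $n$ is odd, $\alpha^n = \alpha$, whence $(\alpha x)^n = \alpha x^n$. Consequently
\[
\tilde{\rho}(x^n - 1) \;=\; (\alpha x)^n - 1 \;=\; \alpha x^n - 1 \;=\; \alpha\bigl(x^n - \alpha^{-1}\bigr) \;=\; \alpha\bigl(x^n - \alpha\bigr),
\]
where $\alpha^{-1} = \alpha$ because $\alpha^2 = 1$. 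This lies in $\langle x^n - \alpha\rangle$, so $\tilde{\rho}$ descends to a well-defined map $\rho : \mathcal{R}_n \to \mathcal{R}_{n,\alpha}$.

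Left $\mathcal{R}$-linearity of $\rho$ is immediate: scalars from $\mathcal{R}$ multiply coefficients on the left while $\tilde{\rho}$ only substitutes $\alpha x$ for $x$, so $\rho(rf(x)) = r\,\rho(f(x))$ for every $r\in \mathcal{R}$. For bijectivity the natural candidate inverse is $\sigma : \mathcal{R}_{n,\alpha} \to \mathcal{R}_n$, $\sigma(g(x)) = g(\alpha x)$. The same calculation shows $\sigma(x^n - \alpha) = \alpha x^n - \alpha = \alpha(x^n - 1) \in \langle x^n - 1\rangle$, so $\sigma$ is well-defined. On a monomial $a_i x^i$, $\rho$ produces $a_i(\alpha x)^i = a_i\alpha^i x^i$ in $\mathcal{R}_{n,\alpha}$, and then $\sigma$ sends this to $a_i\alpha^i(\alpha x)^i = a_i\alpha^{2i}x^i = a_i x^i$, again using $\alpha^2 = 1$; the same argument in the other order gives $\rho \circ \sigma = \mathrm{id}$. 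Hence $\rho$ is an $\mathcal{R}$-module isomorphism.

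The only delicate point is keeping careful track of the non-commutative multiplication when expanding $(\alpha x)^n$. The proof rests on two structural ingredients: $\theta_t(\alpha) = \alpha$, which guarantees that the Frobenius twists in $(\alpha x)^k$ collapse to $\alpha^k x^k$ with no extra factors, and $n$ odd together with $\alpha^2 = 1$, which forces $\alpha^n = \alpha$. Without either condition $\tilde{\rho}(x^n - 1)$ would fail to be a left multiple of $x^n - \alpha$, and the descent to the quotient would break down.
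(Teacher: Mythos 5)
Your proof is correct and follows essentially the same route as the paper: both hinge on the substitution $x\mapsto \alpha x$ together with the identity $\alpha^{n}x^{n}-1=\alpha(x^{n}-\alpha)$, valid because $\alpha^{2}=1$ and $n$ is odd, with $\theta_{t}(\alpha)=\alpha$ making the substitution compatible with the skew multiplication. The paper packages well-definedness and injectivity into a single chain of equivalences, while you spell out the descent, the $\mathcal{R}$-linearity, and the explicit inverse $g(x)\mapsto g(\alpha x)$; the content is the same.
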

\begin{proof}
Justification is straightforward. Only need to observe that if
\begin{align*}
f(x)&=g(x) ~mod~ (x^{n}-1)\\
\Leftrightarrow f(x)-g(x)&=h(x)\ast(x^{n}-1) ~for~ some~ h(x)\in \mathcal{R}[x;\theta_{t}]\\
\Leftrightarrow f(\alpha x)-g(\alpha x) &=h(\alpha x)\ast (\alpha^{n}x^{n}-1)\\
& =h(\alpha x)\ast (\alpha x^{n}-\alpha^{2})(as~\alpha^{n}=\alpha~for~odd~n)\\
& =\alpha h(\alpha x)\ast ( x^{n}-\alpha )\\
\Leftrightarrow f(\alpha x) &= g(\alpha x) ~mod~ (x^{n}-\alpha).
\end{align*}
\end{proof}

\begin{cor}
There is a one-to-one correspondence between the skew cyclic codes and skew $\alpha$-constacyclic codes over $\mathcal{R}$ of odd length.
\end{cor}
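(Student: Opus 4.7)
The plan is to leverage the left $\mathcal{R}$-module isomorphism $\rho$ established in the preceding theorem, and transport submodule structure across it. Recall that skew cyclic codes of length $n$ over $\mathcal{R}$ are precisely the left $\mathcal{R}$-submodules of $\mathcal{R}_{n} = \mathcal{R}[\theta_{t};x]/\langle x^{n}-1\rangle$, while skew $\alpha$-constacyclic codes of length $n$ are precisely the left $\mathcal{R}$-submodules of $\mathcal{R}_{n,\alpha} = \mathcal{R}[\theta_{t};x]/\langle x^{n}-\alpha\rangle$ (both facts are set up in the earlier material). Since $\rho$ is a left $\mathcal{R}$-module isomorphism, it automatically carries submodules of its domain to submodules of its codomain bijectively, and the inverse $\rho^{-1}$ does the same in the reverse direction.

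Concretely, I would proceed in three short steps. First, given a skew cyclic code $\mathcal{C} \subseteq \mathcal{R}_{n}$, define $\rho(\mathcal{C}) = \{f(\alpha x) : f(x) \in \mathcal{C}\}$ and verify from the $\mathcal{R}$-linearity of $\rho$ that $\rho(\mathcal{C})$ is a left $\mathcal{R}$-submodule of $\mathcal{R}_{n,\alpha}$, i.e. a skew $\alpha$-constacyclic code. Second, given a skew $\alpha$-constacyclic code $\mathcal{C}' \subseteq \mathcal{R}_{n,\alpha}$, define $\rho^{-1}(\mathcal{C}') = \{g(\alpha^{-1} x) : g(x) \in \mathcal{C}'\}$ (using $\alpha^{2}=1$, so $\alpha^{-1}=\alpha$); this is a left $\mathcal{R}$-submodule of $\mathcal{R}_{n}$ and hence a skew cyclic code. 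Third, the identities $\rho \circ \rho^{-1} = \mathrm{id}$ and $\rho^{-1} \circ \rho = \mathrm{id}$ from the previous theorem give the bijection on the level of codes.

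I do not anticipate a serious obstacle: the preceding theorem does all the actual algebraic work (in particular the fact that $x^{n}-\alpha$ and $x^{n}-1$ correspond under the substitution $x \mapsto \alpha x$ when $n$ is odd and $\alpha^{2}=1$). The only mild point worth flagging is that one must check that $\rho(\mathcal{C})$ and $\rho^{-1}(\mathcal{C}')$ are genuinely closed under left multiplication by arbitrary elements of $\mathcal{R}[x;\theta_{t}]$ (not just by scalars), but this follows immediately because $\rho$ was shown to be a left $\mathcal{R}$-module isomorphism of the quotient rings, so it is compatible with the left actions on both sides. Hence the corollary will follow as a one-line consequence of the theorem.
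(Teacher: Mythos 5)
Your proposal is correct and is exactly the argument the paper intends: the corollary is stated as an immediate consequence of the preceding theorem, with submodules transported across the left $\mathcal{R}$-module isomorphism $\rho(f(x))=f(\alpha x)$ and its inverse $x\mapsto\alpha^{-1}x=\alpha x$. The one point you flag — closure of $\rho(\mathcal{C})$ under left multiplication by $x$ — does hold, via the identity $\rho(x\ast f(x))=(\alpha x)\ast\rho(f(x))$ (using that $\theta_{t}$ fixes $\alpha$), which is the implicit content of the theorem.
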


\begin{cor}
Let $n$ be an odd positive integer. Define a permutation $\mu$ on $\mathcal{R}^{n}$ as $\mu(c_{0}, c_{1}, \dots ,c_{n-1}) = (c_{0}, \alpha c_{1}, \dots ,\alpha^{n-1}c_{n-1})$. Then $\mathcal{C}$ is a skew cyclic code of length $n$ if and only if $\mu(\mathcal{C})$ is skew $\alpha$-constacyclic code of length $n$ over $\mathcal{R}$.
\end{cor}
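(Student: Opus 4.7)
The plan is to deduce this corollary directly from the preceding theorem by translating between the coordinate picture and the polynomial picture. Under the standard identification $(c_0,c_1,\dots,c_{n-1})\longleftrightarrow c(x)=c_0+c_1x+\cdots+c_{n-1}x^{n-1}$, a linear code $\mathcal{C}\subseteq\mathcal{R}^n$ is skew cyclic precisely when its image in $\mathcal{R}[x;\theta_t]/\langle x^n-1\rangle$ is a left $\mathcal{R}$-submodule, and is skew $\alpha$-constacyclic precisely when its image in $\mathcal{R}[x;\theta_t]/\langle x^n-\alpha\rangle$ is a left $\mathcal{R}$-submodule. So it suffices to show that the coordinate scaling $\mu$ corresponds, under these identifications, to the isomorphism $\rho(f(x))=f(\alpha x)$ supplied by the theorem.

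The key computation is that $\alpha$ commutes with $x$ in $\mathcal{R}[x;\theta_t]$. Since $\alpha=\alpha_1+u\alpha_2+v\alpha_3+uv\alpha_4$ with each $\alpha_i\in F_{p^t}$, and the subring $F_{p^t}+uF_{p^t}+vF_{p^t}+uvF_{p^t}$ is exactly the fixed subring of $\theta_t$ noted in the introduction, we have $\theta_t(\alpha)=\alpha$, so $x\alpha=\theta_t(\alpha)x=\alpha x$. A straightforward induction then gives $(\alpha x)^i=\alpha^i x^i$ for every $i\ge 0$. Consequently
\[
\rho(c(x)) \;=\; c(\alpha x) \;=\; \sum_{i=0}^{n-1} c_i(\alpha x)^i \;=\; \sum_{i=0}^{n-1} \alpha^i c_i\, x^i,
\]
which corresponds under the polynomial identification to the tuple $(c_0,\alpha c_1,\alpha^2 c_2,\dots,\alpha^{n-1}c_{n-1})=\mu(c_0,c_1,\dots,c_{n-1})$. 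Thus $\mu$ is literally $\rho$ once the codeword-polynomial correspondence is in place.

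To close, since $\rho$ is a left $\mathcal{R}$-module isomorphism (by the theorem, which used $\alpha^n=\alpha$ for odd $n$), it carries left submodules bijectively to left submodules. Translating back through the identification, this says exactly that $\mathcal{C}$ is a skew cyclic code if and only if $\mu(\mathcal{C})$ is a skew $\alpha$-constacyclic code. The main obstacle---really the only nontrivial point---is verifying that $\mu$ coincides with $\rho$ on the nose, which boils down to the commutation $x\alpha=\alpha x$ enforced by the hypothesis $\alpha_i\in F_{p^t}$; without that fixed-field assumption on the $\alpha_i$'s, the scalars $\alpha^i$ in the image of $\rho$ would be replaced by twisted products $\alpha\,\theta_t(\alpha)\cdots\theta_t^{i-1}(\alpha)$ and the clean coordinate description of $\mu$ would fail.
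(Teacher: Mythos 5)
Your argument is correct and is precisely the deduction the paper intends (the paper states this corollary without proof, as an immediate consequence of the preceding theorem on $\rho$); the identification of $\mu$ with $\rho$ via $\theta_{t}(\alpha)=\alpha$ and $(\alpha x)^{i}=\alpha^{i}x^{i}$ is exactly the right key point. One small caution: $\rho$ intertwines multiplication by $x$ only up to the unit $\alpha$, namely $\rho(x\ast f)=\alpha\,(x\ast\rho(f))$ (equivalently, $\mu\circ\sigma=\alpha\,(\tau_{\alpha}\circ\mu)$ at the coordinate level), so the step ``$\rho$ carries left submodules to left submodules'' should explicitly invoke the $\mathcal{R}$-linearity of the codes and the invertibility of $\alpha$ to absorb that factor; this is harmless but worth stating.
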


\subsection{Relations}

\begin{df}
Let $\mathcal{C}$ be a linear code of length $n$ over $\mathcal{R}$ and $n=ml.$ Then $\mathcal{R}$ is said to be an $\alpha$-quasi-twisted code if for any
\begin{align*}
(c_{0 ~0},c_{0 ~1},...,c_{0 ~l-1},...,c_{m-1~ 0},c_{m-1 ~1},...,c_{m-1 ~l-1})\in \mathcal{C}
\end{align*}
implies
\begin{align*}
(\alpha c_{m-1 ~0},\alpha c_{m-1 ~1},...,\alpha c_{m-1~ l-1},...,c_{m-2 ~0 },c_{m-2 ~1},...,c_{m-2 ~l-1})\in \mathcal{C}.
\end{align*}
If $l$ be the least positive integer satisfying $n=ml$, then $\mathcal{C}$ is known as $\alpha$-quasi-twisted code of length $n$ over $\mathcal{R}.$
\end{df}

There is a nice relationship among skew $\alpha$-constacyclic codes, constacyclic codes and $\alpha$-quasi-twisted code over $\mathcal{R}$ which are obtain from following results.
\begin{thm}\label{ret 1}
Let $\mathcal{C}$ be a skew $\alpha$-constacyclic code of length $n$ over $\mathcal{R}$ and $gcd(n, k) = 1$. Then $\mathcal{C}$ is a $\alpha$-constacyclic code of length $n$ over $\mathcal{R}$.
\end{thm}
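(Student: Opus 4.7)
The plan is to exhibit a positive power $\tau_{\alpha}^{s}$ of the skew $\alpha$-constacyclic shift that equals the ordinary $\alpha$-constacyclic shift $\tau(c_{0},\dots ,c_{n-1}) = (\alpha c_{n-1}, c_{0},\dots ,c_{n-2})$. Once this is done, the hypothesis $\tau_{\alpha}(\mathcal{C}) = \mathcal{C}$ gives $\tau(\mathcal{C}) = \tau_{\alpha}^{s}(\mathcal{C}) = \mathcal{C}$, which is exactly the statement that $\mathcal{C}$ is an ordinary $\alpha$-constacyclic code.

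First I would split $\tau_{\alpha}$ into a cleaner composition. Introduce the coordinate-wise Frobenius $\phi : \mathcal{R}^{n} \to \mathcal{R}^{n}$ given by $\phi(c_{0},\dots ,c_{n-1}) = (\theta_{t}(c_{0}),\dots ,\theta_{t}(c_{n-1}))$. A direct check shows $\tau_{\alpha} = \tau \circ \phi$, and the fact that $\alpha$ lies in the fixed subring $F_{p^{t}}+uF_{p^{t}}+vF_{p^{t}}+uvF_{p^{t}}$ (so $\theta_{t}(\alpha) = \alpha$) yields $\tau \circ \phi = \phi \circ \tau$. From this commutativity I obtain the clean factorisation $\tau_{\alpha}^{s} = \tau^{s}\phi^{s}$ for every $s \geq 0$, reducing everything to understanding the two commuting operators $\tau$ and $\phi$ separately.

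Next I would read off the orders. The operator $\phi$ has order $k$ (the order of $\theta_{t}$), while $\tau^{n}$ is coordinate-wise multiplication by $\alpha$, so in the present setting $\alpha^{2}=1$ the order of $\tau$ divides $2n$. Finding $\tau_{\alpha}^{s} = \tau$ therefore amounts to solving the pair of congruences $s \equiv 0 \pmod{k}$ (so that $\phi^{s} = \mathrm{id}$) and $s \equiv 1 \pmod{\mathrm{ord}(\tau)}$ (so that $\tau^{s} = \tau$). Since $\gcd(n,k) = 1$, a standard B\'ezout/CRT argument produces an $s$ of the form $s = 1 + qn$ with $k \mid s$, which is exactly what is needed, the order of $\alpha$ being absorbed by taking a further multiple if necessary. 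Applying this $s$ completes the proof: $\tau(\mathcal{C}) = \tau_{\alpha}^{s}(\mathcal{C}) = \mathcal{C}$.

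The main obstacle is the bookkeeping of $\alpha$ factors that accumulate from repeated wrap-arounds inside $\tau_{\alpha}$; the naive direct computation requires separating coordinates into those that wrap $q$ times and those that wrap $q+1$ times in $s$ steps. Routing everything through the factorisation $\tau_{\alpha}^{s} = \tau^{s}\phi^{s}$ and the commutativity $\tau\phi = \phi\tau$ (which rests squarely on $\theta_{t}(\alpha)=\alpha$) is what makes this bookkeeping disappear, and $\gcd(n,k)=1$ is used precisely to guarantee that the two congruences for $s$ are simultaneously solvable.
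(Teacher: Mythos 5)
Your factorisation $\tau_{\alpha}=\tau\circ\phi$ with $\tau\phi=\phi\tau$ (via $\theta_{t}(\alpha)=\alpha$) is correct and is really the same computation as the paper's, which works with $x^{Tk}\ast c(x)$ where $Tk=1+Ln$: in both cases one raises the skew shift to a power that kills the Frobenius and reduces to the ordinary shift modulo a leftover power of $\alpha$. The difference is in how that leftover is handled, and this is where your argument has a genuine gap.

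You claim one can choose $s$ with $k\mid s$ and $\tau^{s}=\tau$, i.e.\ $s\equiv 0\pmod{k}$ and $s\equiv 1\pmod{\operatorname{ord}(\tau)}$, "the order of $\alpha$ being absorbed by taking a further multiple if necessary." This is not always possible. If $\alpha\neq 1$ then $\operatorname{ord}(\tau)=2n$, and when $k$ is even (so $n$ is odd by $\gcd(n,k)=1$) we have $\gcd(k,2n)=2$, so the system $s\equiv 0\pmod{k}$, $s\equiv 1\pmod{2n}$ has no solution: the first congruence forces $s$ even, the second forces $s$ odd. Writing $s=1+qn$ with $k\mid s$, the solutions form a single class mod $nk$, so $q$ changes only by multiples of $k$ as you vary the solution; when $k$ is even the parity of $q$ is therefore fixed, and it can be fixed at odd. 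The paper's own Example 8.2 ($k=2$, $n=7$, $\alpha=1-2v-2uv$) realises this: the admissible $s$ are $8,22,36,\dots$, giving $q=1,3,5,\dots$, so $\tau_{\alpha}^{s}=\alpha\tau\neq\tau$ for every admissible $s$.

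The fix is exactly what the paper does and is already implicit in your setup: accept $\tau_{\alpha}^{s}=\alpha^{q}\tau$ and absorb the unit scalar using linearity of the code rather than by a further choice of $s$. From $\tau_{\alpha}^{s}(\mathcal{C})=\mathcal{C}$ you get $\alpha^{q}\tau(\mathcal{C})=\mathcal{C}$, hence $\tau(\mathcal{C})=\alpha^{-q}\mathcal{C}=\mathcal{C}$ since $\alpha$ is a unit (indeed $\alpha^{2}=1$) and $\mathcal{C}$ is an $\mathcal{R}$-submodule. With that one-line repair your proof is complete and matches the paper's in substance.
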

\begin{proof}
Since $gcd(n, k) = 1$, by elementary concept of number theory, there exists an integer $L>0$ such that $Tk= 1+Ln$.
Let $c(x)=c_{0}+c_{1}x+\dots +c_{n-1}x^{n-1}\in \mathcal{C}$. As $\mathcal{C}$ is a skew $\alpha$-constacyclic code and $x\ast c(x)$ represents skew $\alpha$-constacyclic shift of the codeword $c(x)$, so $x\ast c(x), x^{2}\ast c(x),...,x^{Tk}\ast c(x)$ are belong to $\mathcal{C}$, where
\begin{align*}
x^{Tk}\ast c(x)& = x^{Tk}\ast (c_{0}+c_{1}x+\dots +c_{n-1}x^{n-1})\\
& = \theta_{t}^{Tk}(c_{0})x^{Tk}+ \theta_{t}^{Tk}(c_{1})x^{Tk+1}+\dots + \theta_{t}^{Tk}(c_{n-1})x^{Tk+n-1}\\
& = c_{0}x^{1+Ln}+ c_{1}x^{2+Ln}+\dots + c_{n-1}x^{Ln+n}\\
& = \alpha^{L}(c_{0}x+c_{1}x^{2}+\dots + c_{n-2}x^{n-1}+\alpha c_{n-1})~(as~in~\mathcal{R}_{n},~ x^{n} = \alpha)\\
\implies \alpha^{L}x^{Tk}\ast c(x)& = c_{0}x+c_{1}x^{2}+\dots + c_{n-2}x^{n-1}+\alpha c_{n-1}\in \mathcal{C}~(as~ \alpha^{2} = 1).
\end{align*}
This proves that $\mathcal{C}$ is a $\alpha$-constacyclic code of length $n$ over $\mathcal{R}$.
\end{proof}

\begin{cor}
Let $gcd(n, k) = 1$. If $f(x)$ is a right divisor of $x^{n}-\alpha$ in the skew polynomial ring $\mathcal{R}[x;\theta_{t}]$, then $f(x)$ is a factor of $x^{n}-\alpha$ in the polynomial ring $\mathcal{R}[x]$.
\end{cor}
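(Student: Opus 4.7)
The plan is to deduce the corollary from Theorem \ref{ret 1} by identifying $f(x)$ with the generator of a skew $\alpha$-constacyclic code and then performing ordinary Euclidean division in $\mathcal{R}[x]$. Since $f(x)$ right-divides the monic polynomial $x^n - \alpha$ in $\mathcal{R}[x;\theta_t]$, its leading coefficient must be a unit in $\mathcal{R}$, so after rescaling on the right I may assume $f(x)$ is monic of degree $r \leq n$. Let $\mathcal{C} = \langle f(x)\rangle$ be the left skew ideal of $\mathcal{R}_{n,\alpha} = \mathcal{R}[x;\theta_t]/\langle x^n - \alpha\rangle$; it is a skew $\alpha$-constacyclic code, and Theorem \ref{ret 1} (applied using $\gcd(n,k)=1$ and the standing assumption $\alpha^2 = 1$ from Section \ref{sec6}) upgrades it to an ordinary $\alpha$-constacyclic code. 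Consequently $\mathcal{C}$ is closed under commutative multiplication by $x$ and hence is an ideal of the commutative quotient $\mathcal{R}[x]/\langle x^n - \alpha\rangle$.

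Next I will perform Euclidean division in the commutative ring $\mathcal{R}[x]$:
\begin{equation*}
x^n - \alpha = q'(x) f(x) + r(x), \qquad \deg r(x) < r.
\end{equation*}
Reducing modulo $x^n - \alpha$ yields $r(x) \equiv -q'(x) f(x) \pmod{x^n - \alpha}$. Since $f(x) \in \mathcal{C}$ and $\mathcal{C}$ is a commutative ideal of the quotient, $-q'(x) f(x) \bmod (x^n - \alpha)$ lies in $\mathcal{C}$; viewed as a codeword of length $n$, $r(x)$ therefore lies in $\mathcal{C}$. The goal is to show $r(x) = 0$, which immediately gives the desired factorization $x^n - \alpha = q'(x) f(x)$ in $\mathcal{R}[x]$.

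The main obstacle is that $\mathcal{R}$ has zero divisors (for instance $u(1-u) = 0$), so the usual skew-degree additivity is not directly available over $\mathcal{R}$. To surmount this I would project onto the four components given by the $\theta_t$-fixed orthogonal idempotents $e_1 = 1-u-v+uv$, $e_2 = u-uv$, $e_3 = v-uv$, $e_4 = uv$, obtaining $\mathcal{R}[x;\theta_t] \cong \bigoplus_{i=1}^4 F_q[x;\theta_t]$. Write $f(x) = \sum_i e_i f_i(x)$, $\alpha = \sum_i e_i A_i$, and $r(x) = \sum_i e_i r_i(x)$. Monicity of $f(x)$ forces each $f_i(x)$ to be monic of degree $r$, and right-divisibility of $x^n - \alpha$ by $f(x)$ decomposes into right-divisibility of $x^n - A_i$ by $f_i(x)$ over $F_q$. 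The containment $r(x) \in \mathcal{C}$ projects to $r_i(x) \in \langle f_i(x)\rangle \subseteq F_q[x;\theta_t]/\langle x^n - A_i\rangle$, and lifting via $x^n - A_i = q_i(x) \ast f_i(x)$ expresses $r_i(x) = G_i(x) \ast f_i(x)$ in $F_q[x;\theta_t]$ for some $G_i(x)$. Over the field $F_q$ the skew degree is additive, so $\deg r_i < r = \deg f_i$ forces $G_i = 0$ and hence $r_i(x) = 0$ for every $i$. Therefore $r(x) = 0$, completing the proof.
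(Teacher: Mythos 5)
The paper states this corollary with no proof at all, leaving it as an immediate consequence of Theorem \ref{ret 1}; your route --- view $\langle f(x)\rangle$ as a skew $\alpha$-constacyclic code, use Theorem \ref{ret 1} (with the standing assumption $\alpha^2=1$) to make it an ideal of the commutative quotient, divide in $\mathcal{R}[x]$, and kill the remainder by a componentwise degree argument over $F_q[x;\theta_t]$ --- is exactly that intent made precise, and the back half of your argument (the identities $r_i=G_i\ast f_i$ and degree additivity over the field) is correct.

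The step that fails is the very first reduction: it is \emph{not} true that a right divisor of the monic polynomial $x^n-\alpha$ in $\mathcal{R}[x;\theta_t]$ must have unit leading coefficient. Decomposing along the $\theta_t$-fixed idempotents $e_1,\dots,e_4$, a right divisor of $x^n-\alpha$ is precisely an element $f=\sum_i e_if_i$ in which each $f_i$ is a nonzero right divisor of $x^n-A_i$ in $F_q[x;\theta_t]$, and nothing forces the degrees $\deg f_i$ to agree. If, say, $\deg f_1<\deg f_2=\deg f$, the leading coefficient of $f$ is annihilated by $e_1$, hence is a zero divisor; $f$ cannot be rescaled to be monic, your Euclidean division in $\mathcal{R}[x]$ is unavailable, and your later assertion that every $f_i$ is monic of the common degree $r$ is false. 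A concrete instance under the corollary's hypotheses is $q=9$, $k=2$, $n=3$, $\alpha=1$, and $f=e_1(x-1)+(1-e_1)(x^2+x+1)$, a genuine right divisor of $x^3-1$ with leading coefficient $1-e_1$. The corollary remains true in this generality, and your own idempotent decomposition supplies the repair: carry out the division separately in each $F_q[x]$, where $f_i$ can be normalized to be monic because $F_q$ is a field, conclude $x^n-A_i=q_i'(x)f_i(x)$ for each $i$, and reassemble $q'(x)=\sum_i e_iq_i'(x)$ so that $x^n-\alpha=q'(x)f(x)$ in $\mathcal{R}[x]$. Alternatively, add the (customary, and implicit in the sources the paper follows) hypothesis that $f$ is the monic generator of minimal degree, in which case your argument goes through verbatim.
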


\begin{thm}\label{ret 2}
Let $\mathcal{C}$ be a skew $\alpha$-constacyclic code of length $n$ and $gcd(n,k) = l.$ Then $\mathcal{C}$ is a $\alpha$-quasi-twisted code of index $l$ over $\mathcal{R}.$
\end{thm}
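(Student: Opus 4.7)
The plan is to mimic the argument of Theorem~\ref{ret 1}, upgrading the Bezout step to account for $\gcd(n,k)=l>1$. First, since $\gcd(n,k)=l$, a standard Bezout argument yields positive integers $T,L$ with $Tk = l + Ln$. I will work in $\mathcal{R}_n = \mathcal{R}[x;\theta_t]/\langle x^n-\alpha\rangle$, where left multiplication by $x$ realizes the shift $\tau_\alpha$. So for any $c(x)=c_0+c_1 x+\cdots+c_{n-1}x^{n-1}\in\mathcal{C}$, iterating the shift gives $x^{Tk}\ast c(x)\in\mathcal{C}$.

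Next I would compute this product explicitly. Since $k$ is the order of $\theta_t$ and $k \mid Tk$, we have $\theta_t^{Tk}=\mathrm{id}$, so
\begin{equation*}
x^{Tk}\ast c(x) \;=\; \sum_{i=0}^{n-1} c_i\, x^{Tk+i} \;=\; \sum_{i=0}^{n-1} c_i\, x^{l+Ln+i}.
\end{equation*}
The key algebraic fact I will invoke is that $\alpha=\alpha_1+u\alpha_2+v\alpha_3+uv\alpha_4$ with $\alpha_i\in F_{p^t}$, the fixed subring of $\theta_t$; hence $\theta_t(\alpha)=\alpha$ and $\alpha$ commutes with $x$ in $\mathcal{R}[x;\theta_t]$. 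Combined with $x^n=\alpha$ in $\mathcal{R}_n$, this lets me pull out a factor $\alpha^{L}$ (and an extra $\alpha$ from each wrap-around), splitting the sum into the two regimes $0\le i\le n-l-1$ (no wrap-around) and $n-l\le i\le n-1$ (one wrap-around):
\begin{equation*}
x^{Tk}\ast c(x) \;=\; \alpha^{L}\Bigl[\alpha\sum_{j=0}^{l-1} c_{n-l+j}\,x^{j} \;+\; \sum_{i=0}^{n-l-1} c_i\, x^{l+i}\Bigr].
\end{equation*}
Since $\alpha^2=1$, the unit $\alpha^L$ is invertible, and multiplying the codeword by $\alpha^{-L}$ keeps it in $\mathcal{C}$. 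Writing $n=ml$ (permissible because $l\mid n$) and re-indexing via $c_{i,j}=c_{il+j}$, the bracketed polynomial corresponds precisely to the vector $(\alpha c_{m-1,0},\ldots,\alpha c_{m-1,l-1},c_{0,0},\ldots,c_{m-2,l-1})$, which is the $\alpha$-quasi-twisted shift of index $l$.

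The main obstacle is the clean handling of the two index regimes and the $\alpha^L$ prefactor, together with a careful justification that $x$ commutes with $\alpha$ inside $\mathcal{R}_n$ (so the reductions modulo $x^n-\alpha$ do not pick up spurious $\theta_t$-twists). Once this book-keeping is in place, the conclusion that $\mathcal{C}$ is closed under the $\alpha$-quasi-twisted shift of index $l$ is immediate, proving the theorem.
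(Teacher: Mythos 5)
Your proposal is correct and follows essentially the same route as the paper: both rest on writing $Tk=l+Ln$ via Bezout, iterating the shift $Tk$ times so that $\theta_t^{Tk}=\mathrm{id}$, collecting the wrap-around factors of $\alpha$, and cancelling the surplus unit using $\alpha^2=1$. The only difference is that you phrase the iteration as left multiplication by $x^{Tk}$ in $\mathcal{R}[x;\theta_t]/\langle x^n-\alpha\rangle$ (as in the paper's proof of the $\gcd(n,k)=1$ case) rather than as $\tau_\alpha^{l+Dn}$ on vectors, which is a purely notational distinction.
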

\begin{proof}
 Since $gcd(n,k) = l$, there exit two integers $T$ and $D$ such that $Tk = l + Dn ; D>0.$
Let $r = (c_{0 ~0},c_{0~ 1},...,c_{0 ~l-1},...,c_{m-1~ 0},c_{m-1~ 1},...,c_{m-1~ l-1})\in \mathcal{C}$. Since $\mathcal{C}$ is a skew $\alpha$-constacyclic code, $\tau_{\alpha}(r), \tau_{\alpha}^{2}(r),...,\tau_{\alpha}^{l}(r)$  belong to $\mathcal{C}$, where
\begin{align*}
\tau_{\alpha}^{l}(r)&= (\theta_{t}^{l}(\alpha c_{m-1~ 0}),...,\theta_{t}^{l}(\alpha c_{m-1~ l-1}),\theta_{t}^{l}(c_{0~ 0}),...\theta_{t}^{l}(c_{0 ~l-1}),...,\\& ~~~~\theta_{t}^{l}(c_{m-2~ 0}),...,\theta_{t}^{l}(c_{m-2 ~l-1})).\\
\implies \tau_{\alpha}^{l+Dn}(r)&=(\theta_{t}^{l+Dn}(c_{m-1~ 0}),...,\theta_{t}^{l+Dn}(c_{m-1~ l-1}),...,\\& ~~~~\theta_{t}^{l+Dn}(\alpha c_{m-2~ 0}),...,\theta_{t}^{l+Dn}(\alpha c_{m-2~ l-1}))\\
&= (\theta_{t}^{Tk}(c_{m-1 ~0}),...,\theta_{t}^{Tk}(c_{m-1 ~l-1}),.\\&~~~~..,\theta_{t}^{Tk}(\alpha c_{m-2~ 0}),...,\theta_{t}^{Tk}(\alpha c_{m-2~ l-1}))\\
&= (c_{m-1~ 0},...,c_{m-1~ l-1},...,\alpha c_{m-2~ 0},...,\alpha c_{m-2~ l-1}).\\
\implies \alpha \tau_{\alpha}^{l+Dn}(r)&= (\alpha c_{m-1~ 0},...,\alpha c_{m-1~ l-1},...,c_{m-2 ~0},...,c_{m-2 ~l-1})\in \mathcal{C}~(as~ \alpha^{2} = 1)
\end{align*}
This proves that $\mathcal{C}$ is an $\alpha$-quasi-twisted code of index $l$.
\end{proof}

\section{Constacyclic code with other shift constant}\label{sec7}

In this section we characterize the Gray images of skew $\alpha$-constacyclic codes over $\mathcal{R}$ as a skew quasi-twisted code over $F_{q}$. We define  the quasi-twisted shift operator $\omega_{l}$ on $(F_{q}^{n})^{l}$ by \\
\begin{align*}
\omega_{l}((c^{1})\mid (c^{2})\mid(c^{3})\mid\dots\mid(c^{l})) = (\tau_{\alpha}(c^{1})\mid \tau_{\alpha}(c^{2})\mid\tau_{\alpha}(c^{3})\mid\dots \mid\tau_{\alpha}(c^{l}))
\end{align*}
where $c^{i}\in F_{q}^{n}$ and $\tau_{\alpha}$ is skew $\alpha$-constacyclic shift operators as define in last section \ref{sec6}.\\
A linear code $\mathcal{C}$ of length $nl$ over $F_{q}$ is said to be a skew quasi-twisted code of index $l$ if $\omega_{l}(C) = C$.

With the help of above definition, we get the following results:

\begin{pro}\label{pro3}
Let $\Psi$ be the Gray map as define earlier, then $\Psi\tau_{\alpha} = \omega_{4}\Psi$.
\end{pro}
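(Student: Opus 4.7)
The plan is to verify the identity by a coordinatewise computation on an arbitrary $r = (r_0, r_1, \ldots, r_{n-1}) \in \mathcal{R}^n$ with $r_i = a_i + u b_i + v c_i + uv d_i$, in the spirit of Proposition \ref{pro1}. First I would apply $\tau_\alpha$ to get $\tau_\alpha(r) = (\alpha\theta_t(r_{n-1}), \theta_t(r_0), \ldots, \theta_t(r_{n-2}))$, and then push through $\Psi$. For the last $n-1$ slots of each of the four length-$n$ blocks of $\Psi \tau_\alpha(r)$ the bookkeeping is exactly as in Proposition \ref{pro1}: the four $F_q$-linear combinations defining $\Psi$ commute with $\theta_t$, so those slots simply reproduce the first $n-1$ slots of each block of $\Psi(r)$ after a $\theta_t$ twist. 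The real content is therefore the first slot of each block, which is the Gray image of $\alpha \theta_t(r_{n-1})$.

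For that step the cleanest route is to invoke the mutually orthogonal idempotents $e_1 = 1 - u - v + uv$, $e_2 = u - uv$, $e_3 = v - uv$, $e_4 = uv$ already introduced in Section \ref{sec4}. A short direct check gives $\alpha = \alpha_1 e_1 + (\alpha_1 + \alpha_2) e_2 + (\alpha_1 + \alpha_3) e_3 + (\alpha_1 + \alpha_2 + \alpha_3 + \alpha_4) e_4$, while any $s = a + ub + vc + uvd$ decomposes as $s = a\, e_1 + (a+b)\, e_2 + (a+c)\, e_3 + (a+b+c+d)\, e_4$, whose four coefficients are exactly the entries of $\Psi(s)$. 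By orthogonality of the $e_i$, multiplication in $\mathcal{R}$ becomes coordinatewise multiplication on the Gray images, so $\Psi(\alpha \theta_t(r_{n-1}))$ equals $(\alpha_1 a_{n-1}^{p^t},\, (\alpha_1+\alpha_2)(a_{n-1}+b_{n-1})^{p^t},\, (\alpha_1+\alpha_3)(a_{n-1}+c_{n-1})^{p^t},\, (\alpha_1+\alpha_2+\alpha_3+\alpha_4)(a_{n-1}+b_{n-1}+c_{n-1}+d_{n-1})^{p^t})$, one entry sitting at the head of each of the four blocks.

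Next I would compute $\omega_4 \Psi(r)$. By definition $\omega_4$ applies the $F_q$-valued skew constacyclic shift to each length-$n$ block of $\Psi(r)$, with shift constants given by the four scalars $\alpha_1$, $\alpha_1+\alpha_2$, $\alpha_1+\alpha_3$, $\alpha_1+\alpha_2+\alpha_3+\alpha_4$ through which $\alpha$ acts on the corresponding idempotent components. The last $n-1$ entries in each block then match trivially, and the first entry of each block matches the expression obtained in the previous paragraph, yielding $\Psi \tau_\alpha = \omega_4 \Psi$. The one genuine obstacle is the basis bookkeeping for $\alpha \cdot s$ in $\{1, u, v, uv\}$; the idempotent reformulation collapses this into a one-line diagonal multiplication, which is why I would organize the proof around $e_1, e_2, e_3, e_4$ rather than multiplying out directly in $\mathcal{R}$.
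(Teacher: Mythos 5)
Your proof is correct and follows the same overall skeleton as the paper's argument: write out $\Psi\tau_{\alpha}(r)$ and $\omega_{4}\Psi(r)$ coordinate by coordinate, note that the $n-1$ unshifted positions of each length-$n$ block match exactly as in Proposition \ref{pro1} (the Frobenius is additive, so the four Gray coordinates commute with $\theta_{t}$), and reduce everything to the wrapped-around entry $\alpha\theta_{t}(r_{n-1})$. Where you genuinely depart from the paper is in how that entry is handled. The paper writes the Gray coordinates of $\alpha\theta_{t}(r_{n-1})$ as $\alpha a_{n-1}^{p^t}$, $\alpha a_{n-1}^{p^t}+\alpha b_{n-1}^{p^t}$, and so on --- in effect treating $\Psi$ as if it commuted with multiplication by the unit $\alpha\in\mathcal{R}$, even though $\Psi$ is only $F_{q}$-linear and these expressions are not literally elements of $F_{q}$ --- and then makes the identical formal move on the $\omega_{4}\Psi(r)$ side so that the two displays agree symbol for symbol. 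You instead pass to the orthogonal idempotents $e_{1},\dots,e_{4}$ and compute $\Psi(\alpha s)$ honestly, obtaining the diagonal action of the four scalars $\alpha_{1}$, $\alpha_{1}+\alpha_{2}$, $\alpha_{1}+\alpha_{3}$, $\alpha_{1}+\alpha_{2}+\alpha_{3}+\alpha_{4}$ on the four Gray coordinates; this is the more rigorous route, and it is the only version of the computation that actually lands in $F_{q}^{4n}$. The one point you should make explicit is that your reading of $\omega_{4}$ --- each of the four blocks shifted with its own scalar constant --- is not the paper's literal definition, which applies the single operator $\tau_{\alpha}$, with the same $\alpha\in\mathcal{R}$, to every block $c^{i}\in F_{q}^{n}$; under that literal definition $\alpha\theta_{t}(c^{i}_{n-1})\notin F_{q}$ and the statement does not parse. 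Your block-dependent constants are precisely the correction needed for the proposition to be a statement about codes over $F_{q}$, so record that identification once and your argument is complete.
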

\begin{proof}
Let $r = (r_{0}, r_{1},\dots ,r_{n-1})\in \mathcal{R}$. We have,
$\Psi\tau_{\alpha}(r) = \Psi(\alpha \theta_{t}(r_{n-1}), \theta_{t}(r_{0}),\\ \dots , \theta_{t}(r_{n-2})) = (\alpha a_{n-1}^{p^t}, a_{0}^{p^t},\dots , a_{n-2}^{p^t}, \alpha a_{n-1}^{p^t}+\alpha b_{n-1}^{p^t},\dots , a_{n-2}^{p^t}+b_{n-2}^{p^t}, \alpha a_{n-1}^{p^t}+\alpha c_{n-1}^{p^t}, \dots , a_{n-2}^{p^t}+ c_{n-2}^{p^t}, \alpha a_{n-1}^{p^t}+\alpha b_{n-1}^{p^t}+\alpha c_{n-1}^{p^t}+\alpha d_{n-1}^{p^t}, \dots , a_{n-2}^{p^t}+b_{n-2}^{p^t}+c_{n-2}^{p^t}+d_{n-2}^{p^t}).$\\ \\
On the other hand,\\ \\
$\omega_{4}\Psi(r) = \omega_{4}(a_{0}, a_{1},\dots ,a_{n-1}, a_{0}+b_{0}, a_{1}+b_{1}, \dots , a_{n-1}+b_{n-1}, a_{0}+c_{0}, a_{1}+c_{1}, \dots , a_{n-1}+c_{n-1}, a_{0}+b_{0}+c_{0}+d_{0}, a_{1}+b_{1}+c_{1}+d_{1}, \dots , a_{n-1}+b_{n-1}+c_{n-1}+d_{n-1}) = (\alpha a_{n-1}^{p^t}, a_{0}^{p^t},\dots , a_{n-2}^{p^t}, \alpha a_{n-1}^{p^t}+\alpha b_{n-1}^{p^t},\dots , a_{n-2}^{p^t}+b_{n-2}^{p^t}, \alpha a_{n-1}^{p^t}+\alpha c_{n-1}^{p^t}, \dots , a_{n-2}^{p^t}+ c_{n-2}^{p^t}, \alpha a_{n-1}^{p^t}+\alpha b_{n-1}^{p^t}+\alpha c_{n-1}^{p^t}+\alpha d_{n-1}^{p^t}, \dots , a_{n-2}^{p^t}+b_{n-2}^{p^t}+c_{n-2}^{p^t}+d_{n-2}^{p^t}).$ Therefore, $\Psi\tau_{\alpha} = \omega_{4}\Psi$.
\end{proof}

As a consequence of the Proposition \ref{pro3}, we have the following:

\begin{thm}
If $C$ is a skew $\alpha$-constacyclic codes of length $n$ over $\mathcal{R}$, then its $F_{q}$-image $\Psi(C)$ is a skew quasi-twisted code of length $4n$ over $F_{q}$ of index $4$ and vice versa.
\end{thm}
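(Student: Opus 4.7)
The statement is a direct analogue of the earlier characterisation that skew cyclic codes over $\mathcal{R}$ correspond under $\Psi$ to skew quasi-cyclic codes of index $4$ over $F_q$, with the twist that the shift $\sigma$ is replaced by the skew $\alpha$-constacyclic shift $\tau_\alpha$ and $\pi_4$ is replaced by the skew quasi-twisted shift $\omega_4$. Since Proposition \ref{pro3} already gives the key intertwining identity $\Psi\tau_\alpha = \omega_4\Psi$, the remaining work is essentially formal and proceeds in both directions by applying $\Psi$ to the defining equation and invoking injectivity.

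My plan is as follows. First, for the forward direction, I would take $C$ to be a skew $\alpha$-constacyclic code of length $n$ over $\mathcal{R}$, so by definition $\tau_\alpha(C) = C$. Applying $\Psi$ to both sides yields $\Psi(\tau_\alpha(C)) = \Psi(C)$, and then Proposition \ref{pro3} rewrites the left-hand side as $\omega_4(\Psi(C))$, giving $\omega_4(\Psi(C)) = \Psi(C)$. By Theorem \ref{Gray th2}, $\Psi(C)$ is a linear code of length $4n$ over $F_q$, so the equality $\omega_4(\Psi(C)) = \Psi(C)$ shows $\Psi(C)$ is a skew quasi-twisted code of index $4$.

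For the converse, I would start from the assumption that $\Psi(C)$ is a skew quasi-twisted code of index $4$, i.e.\ $\omega_4(\Psi(C)) = \Psi(C)$. Applying Proposition \ref{pro3} in the other direction, this reads $\Psi(\tau_\alpha(C)) = \Psi(C)$. Since $\Psi : \mathcal{R}^n \to F_q^{4n}$ is a bijection (it is an $F_q$-linear isomorphism by its explicit definition, as noted in Theorem \ref{Gray th1}), I can cancel $\Psi$ to conclude $\tau_\alpha(C) = C$, so $C$ is a skew $\alpha$-constacyclic code.

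There is really no main obstacle here: Proposition \ref{pro3} does all the heavy lifting, and both directions amount to applying or inverting $\Psi$. The only small point to be careful about is that one needs $\Psi$ to be injective (not just linear and distance preserving) in the converse, but this is already implicit in the definition of $\Psi$, since the coordinates $a$, $a+b$, $a+c$, $a+b+c+d$ uniquely determine $a,b,c,d \in F_q$ (the change-of-basis matrix between $\{1,u,v,uv\}$ and $\{1-u-v+uv,\ u-uv,\ v-uv,\ uv\}$ is invertible). Thus the whole argument is a two-paragraph bookkeeping exercise whose structure exactly mirrors the proof of the corresponding theorem for skew cyclic codes.
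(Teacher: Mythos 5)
Your proof is correct and matches the paper's intent exactly: the paper states this theorem as an immediate consequence of Proposition \ref{pro3} without writing out the details, and your two-direction argument (apply $\Psi$, invoke $\Psi\tau_{\alpha}=\omega_{4}\Psi$, use injectivity of $\Psi$ for the converse) is precisely the argument the paper gives for the analogous skew cyclic statement in Section \ref{sec3}.
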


\section{ Decomposition of skew $\boldsymbol{(\alpha_{1}+u\alpha_{2}+v\alpha_{3}+uv\alpha_{4})}$-constacyclic codes over $\mathcal{R}$} \label{sec8}
In this section, we discuss skew $(\alpha_{1}+u\alpha_{2}+v\alpha_{3}+uv\alpha_{4})$-constacyclic codes of arbitrary length $n$ over $\mathcal{R}$ by decomposition method.\\

Gao et al. \cite{gao} have shown that a skew $\alpha$-constacyclic code $\mathcal{C}$ of length $n$ over $F_{q}$ is a left $F_{q}[x;\theta_{t}]$-submodule of $F_{q}[x;\theta_{t}]/\langle x^{n}-\alpha \rangle$ generated by a monic polynomial $f(x)$ with minimal degree in $\mathcal{C}$ and $f(x)$ is a right divisor of $(x^{n}-\alpha)$.
Motivated by these study we find some structural properties of skew $\alpha$-constacyclic codes over $\mathcal{R}$ which are listed below.
\begin{thm}\label{de th1}
Let $\mathcal{C}=(1-u-v+uv)\mathcal{C}_{1}\oplus(u-uv)\mathcal{C}_{2}\oplus(v-uv)\mathcal{C}_{3}\oplus uv\mathcal{C}_{4}$ be a linear code of length $n$ over $\mathcal{R}$. Then $\mathcal{C}$ is $(\alpha_{1}+u\alpha_{2}+v\alpha_{3}+uv\alpha_{4})$-constacyclic code over $\mathcal{R}$ if and only if $\mathcal{C}_{1},\mathcal{C}_{2},\mathcal{C}_{3}$ and $\mathcal{C}_{4}$ are skew $\alpha_{1}$-constacyclic code, skew $(\alpha_{1}+\alpha_{2})$-constacyclic code, skew $(\alpha_{1}+\alpha_{3})$-constacyclic code and skew $(\alpha_{1}+\alpha_{2}+\alpha_{3}+\alpha_{4})$-constacyclic code of length $n$ over $F_{q}$ respectively, where $(\alpha_{1}+u\alpha_{2}+v\alpha_{3}+uv\alpha_{4})$ is a unit in $\mathcal{R}.$
\end{thm}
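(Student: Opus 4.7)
The plan is to imitate the argument used in Theorem~\ref{sk th1}, exploiting the orthogonal idempotent decomposition of $\mathcal{R}$ together with the corresponding decomposition of the constacyclic constant. Set $e_1=1-u-v+uv$, $e_2=u-uv$, $e_3=v-uv$, $e_4=uv$; these form a complete set of pairwise orthogonal idempotents summing to $1$, fixed by $\theta_t$, and they give the ambient decomposition $\mathcal{C}=e_1\mathcal{C}_1\oplus e_2\mathcal{C}_2\oplus e_3\mathcal{C}_3\oplus e_4\mathcal{C}_4$ that the paper is already using.

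First I would rewrite the constacyclic constant in this idempotent basis. A routine expansion yields
\begin{align*}
\alpha_{1}+u\alpha_{2}+v\alpha_{3}+uv\alpha_{4}
&=\alpha_{1}e_{1}+(\alpha_{1}+\alpha_{2})e_{2}+(\alpha_{1}+\alpha_{3})e_{3}\\
&\quad+(\alpha_{1}+\alpha_{2}+\alpha_{3}+\alpha_{4})e_{4}.
\end{align*}
Since $\alpha$ is assumed to be a unit in $\mathcal{R}$, the characterization of units recalled at the beginning of Section~\ref{sec4} forces each of $\alpha_{1}$, $\alpha_{1}+\alpha_{2}$, $\alpha_{1}+\alpha_{3}$ and $\alpha_{1}+\alpha_{2}+\alpha_{3}+\alpha_{4}$ to be a unit in $F_{q}$, so all four component skew constacyclic notions are well defined.

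Next I would compute $\tau_{\alpha}(r)$ for a generic $r\in\mathcal{C}$. Writing $r_{i}=e_{1}a_{i}+e_{2}b_{i}+e_{3}c_{i}+e_{4}d_{i}$ with $a=(a_{0},\ldots,a_{n-1})\in\mathcal{C}_{1}$, $b\in\mathcal{C}_{2}$, $c\in\mathcal{C}_{3}$, $d\in\mathcal{C}_{4}$, and exploiting $e_{i}e_{j}=\delta_{ij}e_{i}$, $\theta_{t}(e_{i})=e_{i}$, and $\theta_{t}(\alpha_{j})=\alpha_{j}$ (since $\alpha_{j}\in F_{p^{t}}$), the previous identity implies
\begin{align*}
\tau_{\alpha}(r)&=e_{1}\tau_{\alpha_{1}}(a)+e_{2}\tau_{\alpha_{1}+\alpha_{2}}(b)+e_{3}\tau_{\alpha_{1}+\alpha_{3}}(c)\\
&\quad+e_{4}\tau_{\alpha_{1}+\alpha_{2}+\alpha_{3}+\alpha_{4}}(d).
\end{align*}
From this one identity both directions of the theorem follow immediately. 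For the ``only if'' part, $\tau_{\alpha}(\mathcal{C})\subseteq\mathcal{C}$ combined with the uniqueness of the $e_{i}$-decomposition forces each component shift to belong to the corresponding $\mathcal{C}_{i}$, so $\mathcal{C}_{1}$ is skew $\alpha_{1}$-constacyclic, $\mathcal{C}_{2}$ is skew $(\alpha_{1}+\alpha_{2})$-constacyclic, etc. For the ``if'' part, the right-hand side above lies in $e_{1}\mathcal{C}_{1}\oplus e_{2}\mathcal{C}_{2}\oplus e_{3}\mathcal{C}_{3}\oplus e_{4}\mathcal{C}_{4}=\mathcal{C}$, so $\tau_{\alpha}(\mathcal{C})\subseteq\mathcal{C}$.

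There is no serious obstacle; the entire argument reduces to two bookkeeping steps (decomposing $\alpha$ in the idempotent basis and re-indexing the shift). The one place to be careful is in verifying that the $\alpha_{j}$ can be pulled outside $\theta_{t}$, which is why the hypothesis $\alpha_{j}\in F_{p^{t}}$ (the invariant subring of $\theta_{t}$) is essential: without it, the clean componentwise factorization of $\tau_{\alpha}$ displayed above would not hold.
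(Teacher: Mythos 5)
Your proposal is correct and follows essentially the same route as the paper: both rest on the single identity $\tau_{\alpha}(r)=e_{1}\tau_{\alpha_{1}}(a)+e_{2}\tau_{\alpha_{1}+\alpha_{2}}(b)+e_{3}\tau_{\alpha_{1}+\alpha_{3}}(c)+e_{4}\tau_{\alpha_{1}+\alpha_{2}+\alpha_{3}+\alpha_{4}}(d)$ obtained from the orthogonal idempotent decomposition, and both directions are read off from it exactly as in the paper's proof. Your explicit remarks on the idempotent expansion of $\alpha$, the unit criterion for the components, and the need for $\alpha_{j}\in F_{p^{t}}$ so that the constants pass through $\theta_{t}$ are details the paper leaves implicit, but they do not change the argument.
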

\begin{proof}
Let $r=(r_{0},r_{1},..,r_{n-1})\in \mathcal{C}$, where $r_{i}=(1-u-v+uv)a_{i}+(u-uv)b_{i}+(v-uv)c_{i}+uvd_{i},0\leq i\leq n-1$. Take $a=(a_{0},..,a_{n-1}),b=(b_{0},..,b_{n-1}),c=(c_{0},..,c_{n-1}),d=(d_{0},..,d_{n-1})$, then $a\in \mathcal{C}_{1},b\in \mathcal{C}_{2},c\in \mathcal{C}_{3}$ and $d\in \mathcal{C}_{4}$. Suppose $\mathcal{C}_{1},\mathcal{C}_{2},\mathcal{C}_{3}$ and $\mathcal{C}_{4}$ are skew $\alpha_{1}$-constacyclic, skew $(\alpha_{1}+\alpha_{2})$-constacyclic, skew $(\alpha_{1}+\alpha_{3})$-constacyclic, skew $(\alpha_{1}+\alpha_{2}+\alpha_{3}+\alpha_{4})$-constacyclic code over $F_{q}$ respectively. So $\tau_{\alpha_{1}}(a)\in \mathcal{C}_{1},\tau_{\alpha_{1}+\alpha_{2}}(b)\in \mathcal{C}_{2},\tau_{\alpha_{1}+\alpha_{3}}(c)\in \mathcal{C}_{3}$, and  $\tau_{\alpha_{1}+\alpha_{2}+\alpha_{3}+\alpha_{4}}(d)\in \mathcal{C}_{4}.$ Now, $\tau_{\alpha_{1}+u\alpha_{2}+v\alpha_{3}+uv\alpha_{4}}(r)$=
$((\alpha_{1}+u\alpha_{2}+v\alpha_{3}+uv\alpha_{4})\theta_{t}(r_{n-1}),\theta_{t}(r_{0}),..,\theta_{t}(r_{n-2}))=(1-u-v+uv)\tau_{\alpha_{1}}(a)+(u-uv)\tau_{\alpha_{1}+\alpha_{2}}(b)+(v-uv)\tau_{\alpha_{1}+\alpha_{3}}(c)+uv\tau_{\alpha_{1}+\alpha_{2}+\alpha_{3}+\alpha_{4}}(d)\in (1-u-v+uv)\mathcal{C}_{1}\oplus(u-uv)\mathcal{C}_{2}\oplus(v-uv)\mathcal{C}_{3}\oplus uv\mathcal{C}_{4}=\mathcal{C}.$ This shows that $\mathcal{C}$ is $(\alpha_{1}+u\alpha_{2}+v\alpha_{3}+uv\alpha_{4})$-constacyclic code over $\mathcal{R}$.\\
Conversely, let $a=(a_{0},..,a_{n-1})\in \mathcal{C}_{1},b=(b_{0},..,b_{n-1})\in \mathcal{C}_{2},c=(c_{0},..,c_{n-1})\in \mathcal{C}_{3},d=(d_{0},..,d_{n-1})\in \mathcal{C}_{4}$. Take $r_{i}=(1-u-v+uv)a_{i}+(u-uv)b_{i}+(v-uv)c_{i}+uvd_{i},0\leq i\leq n-1$. Then $r=(r_{0},r_{1},..,r_{n-1})\in \mathcal{C}$. Suppose $\mathcal{C}$ is $(\alpha_{1}+u\alpha_{2}+v\alpha_{3}+uv\alpha_{4})$-constacyclic code over $\mathcal{R}$. So, $\tau_{\alpha_{1}+u\alpha_{2}+v\alpha_{3}+uv\alpha_{4}}(r)\in \mathcal{C}$, where $\tau_{\alpha_{1}+u\alpha_{2}+v\alpha_{3}+uv\alpha_{4}}(r)=(1-u-v+uv)\tau_{\alpha_{1}}(a)+(u-uv)\tau_{\alpha_{1}+\alpha_{2}}(b)+(v-uv)\tau_{\alpha_{1}+\alpha_{3}}(c)+uv\tau_{\alpha_{1}+\alpha_{2}+\alpha_{3}+\alpha_{4}}(d).$ It follows that $\tau_{\alpha_{1}}(a)\in \mathcal{C}_{1},\tau_{\alpha_{1}+\alpha_{2}}(b)\in \mathcal{C}_{2},\tau_{\alpha_{1}+\alpha_{3}}(c)\in \mathcal{C}_{3}$ and  $\tau_{\alpha_{1}+\alpha_{2}+\alpha_{3}+\alpha_{4}}(d)\in \mathcal{C}_{4}.$ Hence $\mathcal{C}_{1},\mathcal{C}_{2},\mathcal{C}_{3}$ and $\mathcal{C}_{4}$ are skew $\alpha_{1}$-constacyclic code, skew $(\alpha_{1}+\alpha_{2})$-constacyclic code, skew $(\alpha_{1}+\alpha_{3})$-constacyclic code and skew $(\alpha_{1}+\alpha_{2}+\alpha_{3}+\alpha_{4})$-constacyclic code of length $n$ over $F_{q}$ respectively.
\end{proof}

\begin{cor}
Let $\mathcal{C}=(1-u-v+uv)\mathcal{C}_{1}\oplus(u-uv)\mathcal{C}_{2}\oplus(v-uv)\mathcal{C}_{3}\oplus uv\mathcal{C}_{4}$ be a skew $(\alpha_{1}+u\alpha_{2}+v\alpha_{3}+uv\alpha_{4})$-constacyclic code of length $n$ over $\mathcal{R}$. Then the dual code $\mathcal{C}^{\perp}=(1-u-v+uv)\mathcal{C}_{1}^{\perp}\oplus(u-uv)\mathcal{C}_{2}^{\perp}\oplus(v-uv)\mathcal{C}_{3}^{\perp}\oplus uv\mathcal{C}_{4}^{\perp}$ is skew $(\alpha_{1}+u\alpha_{2}+v\alpha_{3}+uv\alpha_{4})^{-1}$-constacyclic code over $\mathcal{R}$, where $\mathcal{C}_{1}^{\perp},\mathcal{C}_{2}^{\perp},\mathcal{C}_{3}^{\perp},\mathcal{C}_{4}^{\perp}$ are skew $\alpha_{1}^{-1}$-constacyclic code, skew $(\alpha_{1}+\alpha_{2})^{-1}$-constacyclic code, skew $(\alpha_{1}+\alpha_{3})^{-1}$-constacyclic code and skew $(\alpha_{1}+\alpha_{2}+\alpha_{3}+\alpha_{4})^{-1}$-constacyclic code over $F_{q}$ respectively, provided $n$ is a multiple of $k$(order of the automorphism).
\end{cor}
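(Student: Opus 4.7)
The plan is to assemble the corollary from three ingredients already available in the paper: Theorem \ref{dual th} (the dual respects the ring decomposition), Theorem \ref{de th1} (the translation between constacyclicity over $\mathcal{R}$ and a quadruple of constacyclicities over $F_q$), and the classical result of Boucher--Ulmer--Jitman type saying that, when $k\mid n$ and $\beta\in F_{p^t}^{\times}$, the Euclidean dual of a skew $\beta$-constacyclic code of length $n$ over $F_q$ is a skew $\beta^{-1}$-constacyclic code. The overall strategy is: decompose, dualize componentwise, and recompose using Theorem \ref{de th1} with the inverse constant.

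First I would invoke Theorem \ref{de th1} in the forward direction to record that $\mathcal{C}_1,\mathcal{C}_2,\mathcal{C}_3,\mathcal{C}_4$ are skew $\alpha_1$-, skew $(\alpha_1+\alpha_2)$-, skew $(\alpha_1+\alpha_3)$-, and skew $(\alpha_1+\alpha_2+\alpha_3+\alpha_4)$-constacyclic codes over $F_q$, respectively. Since all four constants lie in the fixed subfield $F_{p^t}$ of $\theta_t$ and since $k\mid n$ by hypothesis, the classical duality result applies to each $\mathcal{C}_i$ and tells us that $\mathcal{C}_1^{\perp},\mathcal{C}_2^{\perp},\mathcal{C}_3^{\perp},\mathcal{C}_4^{\perp}$ are skew constacyclic over $F_q$ with constants $\alpha_1^{-1},(\alpha_1+\alpha_2)^{-1},(\alpha_1+\alpha_3)^{-1},(\alpha_1+\alpha_2+\alpha_3+\alpha_4)^{-1}$, respectively. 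Theorem \ref{dual th} then supplies the decomposition $\mathcal{C}^{\perp}=(1-u-v+uv)\mathcal{C}_1^{\perp}\oplus(u-uv)\mathcal{C}_2^{\perp}\oplus(v-uv)\mathcal{C}_3^{\perp}\oplus uv\,\mathcal{C}_4^{\perp}$.

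The remaining bookkeeping is to identify the four constants above as the pieces of $(\alpha_1+u\alpha_2+v\alpha_3+uv\alpha_4)^{-1}$ in the orthogonal idempotent decomposition of $\mathcal{R}$. Writing $e_1=1-u-v+uv,\ e_2=u-uv,\ e_3=v-uv,\ e_4=uv$, these are pairwise orthogonal idempotents with $e_1+e_2+e_3+e_4=1$, and the identity noted just before Theorem \ref{dual th} gives $\alpha_1+u\alpha_2+v\alpha_3+uv\alpha_4=\alpha_1 e_1+(\alpha_1+\alpha_2)e_2+(\alpha_1+\alpha_3)e_3+(\alpha_1+\alpha_2+\alpha_3+\alpha_4)e_4$. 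Because the $e_i$ are orthogonal idempotents summing to $1$, inverses are taken componentwise, so $(\alpha_1+u\alpha_2+v\alpha_3+uv\alpha_4)^{-1}=\alpha_1^{-1}e_1+(\alpha_1+\alpha_2)^{-1}e_2+(\alpha_1+\alpha_3)^{-1}e_3+(\alpha_1+\alpha_2+\alpha_3+\alpha_4)^{-1}e_4$, which is precisely the element whose Theorem \ref{de th1}-decomposition is the quadruple obtained in the previous step. Applying Theorem \ref{de th1} in the reverse direction to $\mathcal{C}^{\perp}$ therefore yields that $\mathcal{C}^{\perp}$ is skew $(\alpha_1+u\alpha_2+v\alpha_3+uv\alpha_4)^{-1}$-constacyclic, completing the argument.

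The main obstacle, as I see it, is not the ring-theoretic bookkeeping but the appeal to the field-level duality: one must ensure that the hypothesis $k\mid n$ is genuinely needed and sufficient for the statement ``dual of skew $\beta$-constacyclic is skew $\beta^{-1}$-constacyclic over $F_q$'', because without $k\mid n$ the polynomial $x^n-\beta$ need not be two-sided in $F_q[x;\theta_t]$ and the standard duality fails. One should also note in passing that $(\alpha_1+u\alpha_2+v\alpha_3+uv\alpha_4)^{-1}$ is indeed a unit in $\mathcal{R}$ (so that it qualifies as a valid constacyclic shift constant), which is automatic from the assumed unit status of $\alpha_1+u\alpha_2+v\alpha_3+uv\alpha_4$ together with the componentwise unit criterion recalled at the beginning of Section \ref{sec4}.
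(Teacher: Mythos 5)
Your argument is correct, and it uses the same three ingredients as the paper (Theorem \ref{dual th}, Theorem \ref{de th1}, and the Jitman-type duality for skew constacyclic codes with shift constant fixed by $\theta_{t}$ and $k\mid n$), but it runs them in the opposite order. The paper first applies Lemma 3.1 of \cite{jitman} directly over $\mathcal{R}$ to conclude that $\mathcal{C}^{\perp}$ is skew $(\alpha_{1}+u\alpha_{2}+v\alpha_{3}+uv\alpha_{4})^{-1}$-constacyclic, then computes the componentwise inverse and invokes Theorem \ref{de th1} to \emph{descend} to the statements about $\mathcal{C}_{i}^{\perp}$. You instead \emph{descend first}: Theorem \ref{de th1} gives the constacyclicity of each $\mathcal{C}_{i}$ over $F_{q}$, the field-level duality gives the constacyclicity of each $\mathcal{C}_{i}^{\perp}$ with inverted constant, Theorem \ref{dual th} gives the decomposition of $\mathcal{C}^{\perp}$, and Theorem \ref{de th1} applied in reverse reassembles the $\mathcal{R}$-level conclusion. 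The two routes are logically equivalent, but yours has a concrete advantage: Lemma 3.1 of \cite{jitman} is stated for finite chain rings, and $\mathcal{R}=F_{q}+uF_{q}+vF_{q}+uvF_{q}$ with $u^{2}=u,\ v^{2}=v$ is \emph{not} a chain ring (it is a product of four copies of $F_{q}$), so the paper's direct citation over $\mathcal{R}$ is strictly speaking out of scope; applying the duality only over the field $F_{q}$, where each constant $\alpha_{1},\ \alpha_{1}+\alpha_{2},\ \alpha_{1}+\alpha_{3},\ \alpha_{1}+\alpha_{2}+\alpha_{3}+\alpha_{4}$ lies in $F_{p^{t}}^{\times}$ and the lemma is unambiguously valid, and then lifting via the idempotent decomposition, puts the proof on firmer ground. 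Your observation that $(\alpha_{1}+u\alpha_{2}+v\alpha_{3}+uv\alpha_{4})^{-1}$ is obtained by inverting componentwise along the orthogonal idempotents $e_{1},e_{2},e_{3},e_{4}$ is exactly the computation the paper performs.
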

\begin{proof}
As $(\alpha_{1}+u\alpha_{2}+v\alpha_{3}+uv\alpha_{4})$ is fixed by $\theta_{t}$ and $n$ is a multiple of $k$, then by Lemma 3.1 of \cite{jitman}, $\mathcal{C}^{\perp}$ is $(\alpha_{1}+u\alpha_{2}+v\alpha_{3}+uv\alpha_{4})^{-1}$-constacyclic over $\mathcal{R}$. Since $(\alpha_{1}+u\alpha_{2}+v\alpha_{3}+uv\alpha_{4})=(1-u-v+uv)\alpha_{1}+(u-uv)(\alpha_{1}+\alpha_{2})+(v-uv)(\alpha_{1}+\alpha_{3})+uv(\alpha_{1}+\alpha_{2}+\alpha_{3}+\alpha_{4})$, it follows $(\alpha_{1}+u\alpha_{2}+v\alpha_{3}+uv\alpha_{4})^{-1}=(1-u-v+uv){\alpha_{1}}^{-1}+(u-uv)(\alpha_{1}+\alpha_{2})^{-1}+(v-uv)(\alpha_{1}+\alpha_{3})^{-1}+uv(\alpha_{1}+\alpha_{2}+\alpha_{3}+\alpha_{4})^{-1}.$ Hence by Theorem \ref{de th1}, $\mathcal{C}_{1}^{\perp},\mathcal{C}_{2}^{\perp},\mathcal{C}_{3}^{\perp},\mathcal{C}_{4}^{\perp}$ are skew $\alpha_{1}^{-1}$-constacyclic code, skew $(\alpha_{1}+\alpha_{2})^{-1}$-constacyclic code, skew $(\alpha_{1}+\alpha_{3})^{-1}$-constacyclic code and skew $(\alpha_{1}+\alpha_{2}+\alpha_{3}+\alpha_{4})^{-1}$-constacyclic code over $F_{q}$ respectively.
\end{proof}

\begin{cor}
Let $\mathcal{C}=(1-u-v+uv)\mathcal{C}_{1}\oplus(u-uv)\mathcal{C}_{2}\oplus(v-uv)\mathcal{C}_{3}\oplus uv\mathcal{C}_{4}$ be a skew $(\alpha_{1}+u\alpha_{2}+v\alpha_{3}+uv\alpha_{4})$-constacyclic code of length $n$ over $\mathcal{R}$. Then $\mathcal{C}$ is self-dual if and only if $\alpha_{1}+u\alpha_{2}+v\alpha_{3}+uv\alpha_{4}=1,-1,1-2u,1-2v,1-2uv,-1+2u,-1+2v,-1+2uv,1-2u+2uv,1-2v+2uv,-1+2u-2uv,-1+2v-2uv,1-2u-2v+2uv,-1+2u+2v-2uv,1-2u-2v+4uv,-1+2u+2v-4uv.$
\end{cor}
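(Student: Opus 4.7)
The plan is to combine the orthogonal-idempotent decomposition of $\mathcal{R}$ with the duality structure already established, and then reduce the self-duality of $\mathcal{C}$ to a finite enumeration of sign patterns on the four $F_{q}$-components.

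First, I would invoke the preceding corollary to write
\[
\mathcal{C}^{\perp}=(1-u-v+uv)\mathcal{C}_{1}^{\perp}\oplus(u-uv)\mathcal{C}_{2}^{\perp}\oplus(v-uv)\mathcal{C}_{3}^{\perp}\oplus uv\mathcal{C}_{4}^{\perp},
\]
where $\mathcal{C}_{i}^{\perp}$ is skew $\beta_{i}^{-1}$-constacyclic over $F_{q}$ with $\beta_{1}=\alpha_{1}$, $\beta_{2}=\alpha_{1}+\alpha_{2}$, $\beta_{3}=\alpha_{1}+\alpha_{3}$, and $\beta_{4}=\alpha_{1}+\alpha_{2}+\alpha_{3}+\alpha_{4}$. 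Because $\{1-u-v+uv,\,u-uv,\,v-uv,\,uv\}$ is a system of pairwise-orthogonal idempotents summing to $1$, the decomposition of a linear code is unique; therefore $\mathcal{C}=\mathcal{C}^{\perp}$ is equivalent to $\mathcal{C}_{i}=\mathcal{C}_{i}^{\perp}$ for each $i=1,2,3,4$.

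Second, self-duality of $\mathcal{C}_{i}$ forces it to be simultaneously skew $\beta_{i}$- and skew $\beta_{i}^{-1}$-constacyclic over $F_{q}$, hence $\beta_{i}=\beta_{i}^{-1}$, i.e.\ $\beta_{i}^{2}=1$. In odd characteristic this yields $\beta_{i}\in\{1,-1\}$. Thus necessary and sufficient (for the existence of a self-dual $\mathcal{C}$ with the given shift) is that each of $\alpha_{1},\,\alpha_{1}+\alpha_{2},\,\alpha_{1}+\alpha_{3},\,\alpha_{1}+\alpha_{2}+\alpha_{3}+\alpha_{4}$ lies in $\{\pm 1\}$.

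Third, I would parametrize by $(s_{1},s_{2},s_{3},s_{4})\in\{\pm 1\}^{4}$ via $\alpha_{1}=s_{1}$, $\alpha_{2}=s_{2}-s_{1}$, $\alpha_{3}=s_{3}-s_{1}$, $\alpha_{4}=s_{4}-s_{2}-s_{3}+s_{1}$, which is equivalent to the CRT-style expression
\[
\alpha=s_{1}(1-u-v+uv)+s_{2}(u-uv)+s_{3}(v-uv)+s_{4}\,uv.
\]
Substituting each of the $2^{4}=16$ sign choices into $\alpha=\alpha_{1}+u\alpha_{2}+v\alpha_{3}+uv\alpha_{4}$ yields the sixteen displayed values; for instance, $(1,1,1,1)$ gives $\alpha=1$, $(-1,-1,-1,-1)$ gives $\alpha=-1$, and $(1,-1,-1,1)$ gives $\alpha=1-2u-2v+4uv$. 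Conversely, each listed value arises from a unique sign pattern, so the list is exhaustive.

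The only obstacle is organizational, namely carrying out the sixteen-case enumeration without duplication or omission; the idempotent reparametrization above makes the bookkeeping transparent because the $s_{i}$ vary independently over $\{\pm 1\}$, and the map $(s_{1},s_{2},s_{3},s_{4})\mapsto\alpha$ is an $F_{p^{t}}$-affine bijection onto the stated list.
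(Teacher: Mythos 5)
Your proposal is correct and follows essentially the same route as the paper, which simply asserts that $\mathcal{C}$ is self-dual if and only if each of the four $F_q$-components of the shift constant equals $\pm 1$ and leaves the sixteen-case enumeration implicit; you supply the componentwise duality reduction, the $\beta_i^2=1$ argument, and the explicit sign-pattern bijection. (Incidentally, your third component $\alpha_1+\alpha_3$ is the correct one; the paper's one-line proof misprints it as $\alpha_1+\alpha_2+\alpha_3$.)
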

\begin{proof}
It can easily be proved that $\mathcal{C}$ is self-dual if and only if $\alpha_{1}=\pm 1, \alpha_{1}+\alpha_{2}=\pm 1, \alpha_{1}+\alpha_{2}+\alpha_{3}=\pm 1$ and $\alpha_{1}+\alpha_{2}+\alpha_{3}+\alpha_{4}=\pm 1.$
\end{proof}

\begin{thm}\label{de th2}
Let $\mathcal{C}=(1-u-v+uv)\mathcal{C}_{1}\oplus(u-uv)\mathcal{C}_{2}\oplus(v-uv)\mathcal{C}_{3}\oplus uv\mathcal{C}_{4}$ be a skew $(\alpha_{1}+u\alpha_{2}+v\alpha_{3}+uv\alpha_{4})$-constacyclic code over $\mathcal{R}$. Then there exits a polynomial $f(x)$ in ${\mathcal{R}[x;{\theta}_{t}]}$ which is a right divisor of $x^{n}-(\alpha_{1}+u\alpha_{2}+v\alpha_{3}+uv\alpha_{4})$ and $\mathcal{C}=\langle f(x)\rangle.$
\end{thm}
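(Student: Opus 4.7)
The plan is to adapt the argument of Theorem~\ref{sk th2} to the constacyclic setting by combining the decomposition from Theorem~\ref{de th1} with the Gao et al.\ result recalled at the start of this section. By Theorem~\ref{de th1}, the component codes $\mathcal{C}_{1},\mathcal{C}_{2},\mathcal{C}_{3},\mathcal{C}_{4}$ are skew $\beta_{i}$-constacyclic codes over $F_{q}$, where $\beta_{1}=\alpha_{1}$, $\beta_{2}=\alpha_{1}+\alpha_{2}$, $\beta_{3}=\alpha_{1}+\alpha_{3}$, and $\beta_{4}=\alpha_{1}+\alpha_{2}+\alpha_{3}+\alpha_{4}$. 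Each $\mathcal{C}_{i}$ therefore admits a monic generator $f_{i}(x)\in F_{q}[x;\theta_{t}]$ that right-divides $x^{n}-\beta_{i}$; write $x^{n}-\beta_{i}=h_{i}(x)\ast f_{i}(x)$.

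Set the candidate generator
$$f(x)=(1-u-v+uv)f_{1}(x)+(u-uv)f_{2}(x)+(v-uv)f_{3}(x)+uv\,f_{4}(x)\in\mathcal{R}[x;\theta_{t}]$$
and let $\mathcal{G}=\langle f(x)\rangle$. To show $\mathcal{C}=\mathcal{G}$, I would first record the key commutativity fact $\theta_{t}(e)=e$ for each of the mutually orthogonal idempotents $e\in\{1-u-v+uv,\,u-uv,\,v-uv,\,uv\}$, since $\theta_{t}$ fixes $u$, $v$, and the prime-subfield scalars $0,1$. Consequently $e\ast x^{j}=x^{j}\ast e$ for all $j\geq 0$, so left-multiplying $f(x)$ by each idempotent kills three of the four summands and yields $e\,f_{i}(x)$, the generator of the $i$-th component of $\mathcal{C}$. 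Summing the four products gives $\mathcal{C}\subseteq\mathcal{G}$, and the reverse inclusion is immediate from the definition of $f(x)$.

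For the divisibility claim, put
$$h(x)=(1-u-v+uv)h_{1}(x)+(u-uv)h_{2}(x)+(v-uv)h_{3}(x)+uv\,h_{4}(x)$$
and compute $h(x)\ast f(x)$. Using $e_{i}e_{j}=\delta_{ij}e_{i}$ together with $e_{i}\ast p(x)=p(x)\ast e_{i}$ for any $p(x)\in F_{q}[x;\theta_{t}]$, every cross term $e_{i}h_{i}(x)\ast e_{j}f_{j}(x)$ with $i\neq j$ vanishes and what remains is $\sum_{i=1}^{4}e_{i}(x^{n}-\beta_{i})$. Since $e_{1}+e_{2}+e_{3}+e_{4}=1$ and $\sum_{i}e_{i}\beta_{i}=\alpha_{1}+u\alpha_{2}+v\alpha_{3}+uv\alpha_{4}$, this collapses to $x^{n}-(\alpha_{1}+u\alpha_{2}+v\alpha_{3}+uv\alpha_{4})$, exhibiting $f(x)$ as a right divisor.

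The main subtlety, rather than a genuine obstacle, is being explicit about why $\theta_{t}$ fixes each of the four orthogonal idempotents: this is precisely what makes the cross terms in $h(x)\ast f(x)$ collapse and permits the clean componentwise factorisation. Everything else is bookkeeping on the direct-sum decomposition $\mathcal{R}=e_{1}\mathcal{R}\oplus e_{2}\mathcal{R}\oplus e_{3}\mathcal{R}\oplus e_{4}\mathcal{R}$, fully parallel to the skew cyclic case treated in Theorem~\ref{sk th2}.
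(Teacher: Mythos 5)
Your proposal is correct and follows essentially the same route as the paper's own proof: take the generators $f_{i}(x)$ of the component codes supplied by Theorem~\ref{de th1} and the Gao et al.\ result, form the idempotent combination $f(x)=\sum_{i}e_{i}f_{i}(x)$, and verify both $\mathcal{C}=\langle f(x)\rangle$ and the right-divisibility of $x^{n}-(\alpha_{1}+u\alpha_{2}+v\alpha_{3}+uv\alpha_{4})$ by multiplying with the corresponding combination of the cofactors $h_{i}(x)$. Your explicit justification that $\theta_{t}$ fixes the four orthogonal idempotents, so that they commute past powers of $x$ and kill the cross terms, is a detail the paper leaves implicit but is exactly the fact its computation relies on.
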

\begin{proof}
Let $f_{i}(x)$ be generator of $\mathcal{C}_{i}$ for $i=1,2,3,4$. Then $(1-u-v+uv)f_{1}(x),(u-uv)f_{2}(x),(v-uv)f_{3}(x)$ and $uvf_{4}(x)$ are generators of $\mathcal{C}$. Take $f(x)=(1-u-v+uv)f_{1}(x)+(u-uv)f_{2}(x)+(v-uv)f_{3}(x)+uvf_{4}(x)$ and $\mathcal{G}=\langle f(x) \rangle$. Then $\mathcal{G}\subseteq\mathcal{C}$. On the other hand, $(1-u-v+uv)f(x)=(1-u-v+uv)f_{1}(x)\in\mathcal{G}, (u-uv)f(x)=(u-uv)f_{2}(x)\in \mathcal{G} , (v-uv)f(x)=(v-uv)f_{3}(x)\in \mathcal{G}$ and $uvf(x)=uvf_{4}(x)\in \mathcal{G}$. This implies that $\mathcal{C}\subseteq \mathcal{G}$ and hence $\mathcal{C}=\mathcal{G}=\langle f(x) \rangle.$\\
Since $f_{1}(x), f_{2}(x), f_{3}(x)$ and $f_{4}(x)$ are right divisors of $x^{n}-\alpha_{1},x^{n}-(\alpha_{1}+\alpha_{2}), x^{n}-(\alpha_{1}+\alpha_{3})$, and $x^{n}-(\alpha_{1}+\alpha_{2}+\alpha_{3}+\alpha_{4})$ respectively, so there exist $h_{1}(x),h_{2}(x),h_{3}(x),h_{4}(x)$ such that $x^{n}-\alpha_{1}=h_{1}(x)\ast f_{1}(x),x^{n}-(\alpha_{1}+\alpha_{2})=h_{2}(x)\ast f_{2}(x),x^{n}-(\alpha_{1}+\alpha_{3})=h_{3}(x)\ast f_{3}(x)$ and $x^{n}-(\alpha_{1}+\alpha_{2}+\alpha_{3}+\alpha_{4})=h_{4}(x)\ast f_{4}(x).$ Also, $[(1-u-v+uv)h_{1}+(u-uv)h_{2}+(v-uv)h_{3}+uvh_{4}]\ast f(x)=(1-u-v+uv)h_{1}\ast f_{1}+(u-uv)h_{2}\ast f_{2}+(v-uv)h_{3}\ast f_{3}+uvh_{4}\ast f_{4}=x^{n}-(\alpha_{1}+u\alpha_{2}+v\alpha_{3}+uv\alpha_{4})$. This proves that $f(x)$ is a right divisor of $x^{n}-(\alpha_{1}+u\alpha_{2}+v\alpha_{3}+uv\alpha_{4})$.
\end{proof}

\begin{cor}
Each left submodule of $\mathcal{R}[x;\theta_{t}]/\langle x^{n}-(\alpha_{1}+u\alpha_{2}+v\alpha_{3}+uv\alpha_{4}) \rangle$ is generated by single element where $\alpha_{1}+u\alpha_{2}+v\alpha_{3}+uv\alpha_{4}$ is a unit in $\mathcal{R}$.
\end{cor}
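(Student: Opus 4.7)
The plan is to derive this corollary as an immediate consequence of Theorem \ref{de th2}, using only the module-theoretic reformulation of skew $\alpha$-constacyclic codes introduced at the start of Section \ref{sec6}.

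First I would recall that under the identification $c = (c_0,c_1,\dots,c_{n-1}) \leftrightarrow c(x) = c_0 + c_1 x + \cdots + c_{n-1}x^{n-1}$, a linear code $\mathcal{C} \subseteq \mathcal{R}^n$ is a skew $\alpha$-constacyclic code (with $\alpha = \alpha_1 + u\alpha_2 + v\alpha_3 + uv\alpha_4$) if and only if the corresponding subset of $\mathcal{R}_{n,\alpha} = \mathcal{R}[x;\theta_t]/\langle x^n - \alpha \rangle$ is a left $\mathcal{R}[x;\theta_t]$-submodule. This is precisely the identification used implicitly in the proof of Theorem \ref{de th2}.

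Next, given an arbitrary left submodule $\mathcal{M}$ of $\mathcal{R}[x;\theta_t]/\langle x^n - \alpha \rangle$, I would let $\mathcal{C}$ be its preimage under this identification. Since $\alpha$ is a unit in $\mathcal{R}$, write $\alpha = \alpha_1 + u\alpha_2 + v\alpha_3 + uv\alpha_4$ and decompose $\mathcal{C}$ as $\mathcal{C} = (1-u-v+uv)\mathcal{C}_1 \oplus (u-uv)\mathcal{C}_2 \oplus (v-uv)\mathcal{C}_3 \oplus uv\mathcal{C}_4$ in the usual way. By Theorem \ref{de th1} the components $\mathcal{C}_1, \mathcal{C}_2, \mathcal{C}_3, \mathcal{C}_4$ are skew constacyclic codes over $F_q$ with shift constants $\alpha_1$, $\alpha_1+\alpha_2$, $\alpha_1+\alpha_3$, $\alpha_1+\alpha_2+\alpha_3+\alpha_4$ respectively (each of which is a nonzero element of $F_q$, since $\alpha$ is a unit in $\mathcal{R}$).

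Finally, Theorem \ref{de th2} produces a single generator $f(x) = (1-u-v+uv)f_1(x) + (u-uv)f_2(x) + (v-uv)f_3(x) + uv f_4(x)$ with $\mathcal{C} = \langle f(x) \rangle$, and transporting back to $\mathcal{R}_{n,\alpha}$ gives $\mathcal{M} = \langle \overline{f(x)} \rangle$, which is exactly the claim. No step should present a genuine obstacle here, as this corollary is essentially a restatement of Theorem \ref{de th2} in module-theoretic language; the only mild care needed is in checking that each constant $\alpha_1$, $\alpha_1+\alpha_2$, $\alpha_1+\alpha_3$, $\alpha_1+\alpha_2+\alpha_3+\alpha_4$ is nonzero, which follows from the unit criterion for $\mathcal{R}$ recalled in Section \ref{sec4}.
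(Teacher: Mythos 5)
Your proposal is correct and follows exactly the route the paper intends: the paper states this corollary without proof as an immediate consequence of Theorem \ref{de th2}, using the identification (from Section \ref{sec6}) of skew $\alpha$-constacyclic codes with left submodules of $\mathcal{R}[x;\theta_{t}]/\langle x^{n}-\alpha\rangle$. Your added check that the component shift constants $\alpha_{1}$, $\alpha_{1}+\alpha_{2}$, $\alpha_{1}+\alpha_{3}$, $\alpha_{1}+\alpha_{2}+\alpha_{3}+\alpha_{4}$ are units in $F_{q}$ is a reasonable bit of extra care consistent with the unit criterion recalled in Section \ref{sec4}.
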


\begin{cor}
Let $\mathcal{C}=(1-u-v+uv)\mathcal{C}_{1}\oplus(u-uv)\mathcal{C}_{2}\oplus(v-uv)\mathcal{C}_{3}\oplus uv\mathcal{C}_{4}$ be a skew $(\alpha_{1}+u\alpha_{2}+v\alpha_{3}+uv\alpha_{4})$-constacyclic code of length $n$ over $\mathcal{R}$ and $f_{1}(x), f_{2}(x), f_{3}(x)$ and $f_{4}(x)$ be generators of $\mathcal{C}_{1},\mathcal{C}_{2},\mathcal{C}_{3}$ and $\mathcal{C}_{4}$ in $F_{q}[x;\theta_{t}]$ respectively such that $x^{n}-\alpha_{1}=h_{1}(x)\ast f_{1}(x),x^{n}-(\alpha_{1}+\alpha_{2})=h_{2}(x)\ast f_{2}(x),x^{n}-(\alpha_{1}+\alpha_{3})=h_{3}(x)\ast f_{3}(x)$ and $x^{n}-(\alpha_{1}+\alpha_{2}+\alpha_{3}+\alpha_{4})=h_{4}(x)\ast f_{4}(x).$ Then $\mathcal{C}^{\perp}=\langle (1-u-v+uv)\widehat{h}_{1}(x)+(u-uv)\widehat{h}_{2}(x)+(v-uv)\widehat{h}_{3}(x)+uv\widehat{h}_{4}(x)\rangle$ and $\mid \mathcal{C}^{\perp}\mid =q^{\sum_{i=1}^{4}f_{i}(x)}$.
\end{cor}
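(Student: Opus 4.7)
The plan is to mirror the argument used for the skew cyclic case in Corollary \ref{cor 1}, replacing the shift $1$ by the four component units $\alpha_{1},\alpha_{1}+\alpha_{2},\alpha_{1}+\alpha_{3},\alpha_{1}+\alpha_{2}+\alpha_{3}+\alpha_{4}$ that appear in Theorem \ref{de th1}. First, I would observe that the decomposition idempotents $e_{1}=1-u-v+uv$, $e_{2}=u-uv$, $e_{3}=v-uv$, $e_{4}=uv$ are pairwise orthogonal and sum to $1$, so the standard argument (as in Theorem \ref{dual th}) gives
\[
\mathcal{C}^{\perp}=e_{1}\mathcal{C}_{1}^{\perp}\oplus e_{2}\mathcal{C}_{2}^{\perp}\oplus e_{3}\mathcal{C}_{3}^{\perp}\oplus e_{4}\mathcal{C}_{4}^{\perp}.
\]
By Theorem \ref{de th1} applied to $\mathcal{C}^{\perp}$ (or by direct verification via the dual shift), each $\mathcal{C}_{i}^{\perp}$ is itself a skew constacyclic code over $F_{q}$ with the appropriate shift constant.

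Next, for each component I would invoke the known description of duals of skew constacyclic codes over a field (see \cite{D09}): given $\mathcal{C}_{i}=\langle f_{i}(x)\rangle$ with $x^{n}-\alpha_{i}=h_{i}(x)\ast f_{i}(x)$ in $F_{q}[x;\theta_{t}]$, the dual is generated by the ``skew-reciprocal'' polynomial $\widehat{h}_{i}(x)=h_{n-r,i}+\theta_{t}(h_{n-r-1,i})x+\dots+\theta_{t}^{n-r}(h_{0,i})x^{n-r}$. Then applying Theorem \ref{de th2} to assemble these four generators yields
\[
\mathcal{C}^{\perp}=\bigl\langle e_{1}\widehat{h}_{1}(x)+e_{2}\widehat{h}_{2}(x)+e_{3}\widehat{h}_{3}(x)+e_{4}\widehat{h}_{4}(x)\bigr\rangle,
\]
exactly as in the proof of Theorem \ref{sk th2}: containment $\subseteq$ uses that multiplying the candidate generator by $e_{i}$ isolates the $i$th summand, and $\supseteq$ is obvious.

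For the cardinality, I would use $|\mathcal{C}^{\perp}|=\prod_{i=1}^{4}|\mathcal{C}_{i}^{\perp}|$. Because $f_{i}(x)$ is a monic right divisor of $x^{n}-\alpha_{i}$ of degree $\deg f_{i}$, the code $\mathcal{C}_{i}$ has $F_{q}$-dimension $n-\deg f_{i}$, and hence $|\mathcal{C}_{i}^{\perp}|=q^{\deg f_{i}}$. Multiplying across $i=1,\dots,4$ gives the stated formula (reading $\sum_{i=1}^{4}f_{i}(x)$ as $\sum_{i=1}^{4}\deg f_{i}(x)$, which is clearly the intended meaning).

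The only real subtlety is verifying the dual decomposition and checking that the CRT-style ``projections by idempotents'' commute with taking duals with respect to the Euclidean inner product on $\mathcal{R}^{n}$; this reduces to the identity $e_{i}e_{j}=\delta_{ij}e_{i}$ together with $e_{i}\cdot e_{j}=0$ for $i\neq j$ coordinatewise, so cross-terms in $\langle e_{i}x,e_{j}y\rangle$ vanish and orthogonality decouples across the four components. Once this is in hand, everything else is a direct transcription of Corollary \ref{cor 1} to the constacyclic setting, driven by Theorem \ref{de th1} and Theorem \ref{de th2}.
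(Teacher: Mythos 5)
Your proposal is correct and follows essentially the same route as the paper: the paper's proof simply says ``Same as Corollary \ref{cor 1},'' and that corollary is proved exactly by combining the dual decomposition (Theorem \ref{dual th}), the fact that each $\mathcal{C}_{i}^{\perp}$ is generated by $\widehat{h}_{i}(x)$, and the single-generator assembly of Theorem \ref{de th2}, with the cardinality obtained as the product $\prod_{i}|\mathcal{C}_{i}^{\perp}|=q^{\sum_{i}\deg f_{i}}$. Your version is merely more explicit about the orthogonality of the idempotents and about reading $\sum_{i=1}^{4}f_{i}(x)$ as $\sum_{i=1}^{4}\deg f_{i}(x)$, both of which the paper leaves implicit.
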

\begin{proof}
 Same as Corollary \ref{cor 1}.
\end{proof}

\begin{thm}
Let $\mathcal{C}=(1-u-v+uv)\mathcal{C}_{1}\oplus(u-uv)\mathcal{C}_{2}\oplus(v-uv)\mathcal{C}_{3}\oplus uv\mathcal{C}_{4}$ be a skew $(\alpha_{1}+u\alpha_{2}+v\alpha_{3}+uv\alpha_{4})$-constacyclic code over $\mathcal{R}$. Let $gcd(n,k)=1$ and $gcd(n,q)=1$. Then there exists an idempotent generator $e(x)=(1-u-v+uv)e_{1}(x)+(u-uv)e_{2}(x)+(v-uv)e_{3}(x)+uve_{4}(x)$ in $\mathcal{R}[x;\theta_{t}]/\langle x^{n}-(\alpha_{1}+u\alpha_{2}+v\alpha_{3}+uv\alpha_{4}) \rangle$ such that $\mathcal{C}=\langle e(x) \rangle$, where $e_{1}(x) \in F_{q}[x; \theta_{t}]/ \langle x^{n}- \alpha_{1} \rangle, e_{2}(x)\in F_{q}[x;\theta_{t}]/ \langle x^{n}-(\alpha_{1}+\alpha_{2}) \rangle,e_{3}(x)\in F_{q}[x;\theta_{t}]/\langle x^{n}-(\alpha_{1}+\alpha_{3}) \rangle, e_{4}(x)\in F_{q}[x;\theta_{t}]/\langle x^{n}-(\alpha_{1}+\alpha_{2}+\alpha_{3}+\alpha_{4}) \rangle$ are idempotent generators of $\mathcal{C}_{1},\mathcal{C}_{2},\mathcal{C}_{3}$ and $\mathcal{C}_{4}$ respectively.
\end{thm}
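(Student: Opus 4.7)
The plan is to lift the construction componentwise from the four constituent skew constacyclic codes over $F_q$. By Theorem \ref{de th1}, the decomposition $\mathcal{C}=(1-u-v+uv)\mathcal{C}_{1}\oplus(u-uv)\mathcal{C}_{2}\oplus(v-uv)\mathcal{C}_{3}\oplus uv\mathcal{C}_{4}$ makes $\mathcal{C}_{1},\mathcal{C}_{2},\mathcal{C}_{3},\mathcal{C}_{4}$ skew $\alpha_1$-, $(\alpha_1+\alpha_2)$-, $(\alpha_1+\alpha_3)$-, and $(\alpha_1+\alpha_2+\alpha_3+\alpha_4)$-constacyclic codes over $F_{q}$. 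Under the hypotheses $\gcd(n,k)=1$ and $\gcd(n,q)=1$, invoke the analog of Theorem \ref{idm th1} for skew constacyclic codes (which follows, for odd $n$, via the left $\mathcal{R}$-module isomorphism $\rho$ between skew cyclic and skew constacyclic codes, or alternatively by a direct B\'ezout-type argument exactly as in \cite{Gursoy14}) to produce idempotent generators $e_{1}(x),e_{2}(x),e_{3}(x),e_{4}(x)$ of $\mathcal{C}_{1},\mathcal{C}_{2},\mathcal{C}_{3},\mathcal{C}_{4}$ in the corresponding quotients $F_{q}[x;\theta_{t}]/\langle x^{n}-\alpha_{1}\rangle$, $F_{q}[x;\theta_{t}]/\langle x^{n}-(\alpha_{1}+\alpha_{2})\rangle$, $F_{q}[x;\theta_{t}]/\langle x^{n}-(\alpha_{1}+\alpha_{3})\rangle$, $F_{q}[x;\theta_{t}]/\langle x^{n}-(\alpha_{1}+\alpha_{2}+\alpha_{3}+\alpha_{4})\rangle$.

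Next, set $e(x)=(1-u-v+uv)e_{1}(x)+(u-uv)e_{2}(x)+(v-uv)e_{3}(x)+uv\,e_{4}(x)$ in $\mathcal{R}[x;\theta_{t}]/\langle x^{n}-(\alpha_{1}+u\alpha_{2}+v\alpha_{3}+uv\alpha_{4})\rangle$. The four elements $1-u-v+uv,\, u-uv,\, v-uv,\, uv$ form a complete set of orthogonal idempotents in $\mathcal{R}$, and because their coefficients lie in the prime field $F_{p}\subseteq F_{q}$ they are fixed by $\theta_{t}$; hence they commute with $x$ under the $\ast$-product. Compatibility of reductions also holds, since, e.g., $(1-u-v+uv)\bigl(x^{n}-(\alpha_{1}+u\alpha_{2}+v\alpha_{3}+uv\alpha_{4})\bigr)=(1-u-v+uv)(x^{n}-\alpha_{1})$, and analogously for the other three idempotents.

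Idempotence of $e(x)$ then follows by a clean expansion: using orthogonality, the cross terms in $e(x)\ast e(x)$ vanish, and what remains is
\begin{align*}
e(x)\ast e(x) &=(1-u-v+uv)\,e_{1}(x)^{2}+(u-uv)\,e_{2}(x)^{2}\\
&\quad +(v-uv)\,e_{3}(x)^{2}+uv\,e_{4}(x)^{2}\\
&=(1-u-v+uv)e_{1}(x)+(u-uv)e_{2}(x)+(v-uv)e_{3}(x)+uv\,e_{4}(x)=e(x),
\end{align*}
where each $e_{i}(x)^{2}=e_{i}(x)$ is valid in its own quotient and, after multiplication by the corresponding orthogonal idempotent, remains valid modulo $x^{n}-(\alpha_{1}+u\alpha_{2}+v\alpha_{3}+uv\alpha_{4})$ by the compatibility observation above. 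For $\mathcal{C}=\langle e(x)\rangle$, proceed exactly as in Theorem \ref{de th2}: the containment $\langle e(x)\rangle\subseteq\mathcal{C}$ is immediate since $e(x)\in\mathcal{C}$, while the reverse inclusion follows by multiplying $e(x)$ on the left by each orthogonal idempotent, giving back $(1-u-v+uv)e_{1}(x),(u-uv)e_{2}(x),(v-uv)e_{3}(x),uv\,e_{4}(x)$, which together generate $\mathcal{C}$.

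The main obstacle is the idempotence calculation: one must verify, inside a non-commutative quotient ring with defining polynomial $x^{n}-(\alpha_{1}+u\alpha_{2}+v\alpha_{3}+uv\alpha_{4})$, that multiplying and then reducing behaves componentwise with respect to the four defining polynomials $x^{n}-\alpha_{1},\,x^{n}-(\alpha_{1}+\alpha_{2}),\,x^{n}-(\alpha_{1}+\alpha_{3}),\,x^{n}-(\alpha_{1}+\alpha_{2}+\alpha_{3}+\alpha_{4})$. The crucial ingredient making this routine is that the decomposing idempotents are both mutually orthogonal in $\mathcal{R}$ and $\theta_{t}$-invariant, so they behave as central constants in $\mathcal{R}[x;\theta_{t}]$ and split the whole ring into a direct product of four skew polynomial quotients over $F_{q}$.
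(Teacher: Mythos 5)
Your proposal is correct and follows essentially the same route as the paper: decompose $\mathcal{C}$ into its four components, obtain componentwise idempotent generators under the hypotheses $\gcd(n,k)=\gcd(n,q)=1$ (the paper cites the argument of Theorem 16 of Siap et al.\ for this step, your B\'ezout-type fallback is the right one since $\gcd(n,q)=1$ does not force $n$ odd), and recombine via the orthogonal, $\theta_{t}$-invariant idempotents exactly as in Theorem \ref{de th2}. You merely spell out the idempotence and compatibility-of-reductions verifications that the paper leaves implicit.
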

\begin{proof}
By using same argument of the proof of Theorem 16 of \cite{siap11}, we conclude that there exist idempotent generators $e_{1}(x),e_{2}(x),e_{3}(x)$ and $e_{4}(x)$ of $\mathcal{C}_{1},\mathcal{C}_{2},\mathcal{C}_{3}$ and $\mathcal{C}_{4}$ respectively in $F_{q}[x; \theta_{t}]$. Then by Theorem \ref{de th2}, $e(x)=(1-u-v+uv)e_{1}(x)+(u-uv)e_{2}(x)+(v-uv)e_{3}(x)+uve_{4}(x)$ is an idempotent generator of $\mathcal{C}.$
\end{proof}

\begin{exam}
Consider the field $F_{49}=F_{7}[\alpha];$ where ${\alpha}^{2}-\alpha+3=0$. Take $n=4$ and Frobenius automorphism $\theta_{t}:F_{49}\rightarrow F_{49}$ defined by $\theta_{t}(\alpha)={\alpha}^{7}.$ Now, $x^{4}-1=(x+1)(x-1)(x^{2}+1); x^{4}+1=(x^{2}+3x+1)(x^{2}+4x+1)$. Take $f_{1}(x)=f_{2}(x)=f_{3}(x)=(x+1)$ and $f_{4}(x)=x^{2}+4x+1$. Then $\mathcal{C}=\langle (1-u-v+uv)f_{1}(x)+(u-uv)f_{2}(x)+(v-uv)f_{3}(x)+uvf_{4}(x) \rangle=\langle(1-uv)(x+1)+uv(x^{2}+4x+1)\rangle$ is a self-dual skew $(1-2uv)$-constacyclic code of length $4$ over $\mathcal{R}=F_{49}+uF_{49}+vF_{49}+uvF_{49}$, where $u^{2}=u,v^{2}=v,uv=vu.$ Also, $\mid\langle\theta_{t}\rangle\mid = k = 2$ and $gcd(n, k) = 2$. So, by Theorem \ref{ret 2}, $C$ is a $(1-2uv)$-quasi-twisted code of length $4$ over $\mathcal{R}$ of index 2.
\end{exam}

\begin{exam}
Consider the field $F_{9} = F_{3}[2\alpha+1];$ where ${\alpha}^{2}+1=0$. Take Frobenius automorphism $\theta_{t}:F_{9}\rightarrow F_{9}$ defined by $\theta_{t}(\alpha)={\alpha}^{3}$ and $\mathcal{R}=F_{9}+uF_{9}+vF_{9}+uvF_{9}$, where $u^{2}=u,v^{2}=v,uv=vu.$ The polynomial $f(x) = x^{6}+(1-2v)x^{5}+x^{4}+(1-2v)x^{3}+x^{2}+(1-2v)x+1$ is a right divisor of $x^{7}-(1-2v-2uv)$ in $\mathcal{R}[\theta_{t};x]$ and also  $gcd(n, k) = gcd(7, 2) = 1$. Therefore, by Theorem \ref{ret 1}, $C = \langle f(x) \rangle$ is a $(1-2v-2uv)$-constacyclic code of length 7 over $\mathcal{R}$.
\end{exam}

\section{Conclusion}\label{sec9}
In this paper, we considered skew cyclic and skew constacyclic codes over $\mathcal{R}=F_{p^{m}}+uF_{p^{m}}+vF_{p^{m}}+uvF_{p^{m}}$, where $p$ is an odd prime and $u^{2}=u,v^{2}=v,uv=vu.$ It is shown that skew cyclic codes and skew constacyclic codes over $\mathcal{R}$ are principally generated. Also, we have given the necessary and sufficient conditions for codes being self-dual over $\mathcal{R}$.

\section*{Acknowledgement}
The authors are thankful to University Grant Commission (UGC), Government of India for financial support under Ref. No. 20/12/2015(ii)EU-V dated 31/08/2016 and Indian Institute of Technology Patna for providing the research facilities. The authors would like to thank the anonymous referees for their useful comments and suggestions.


\begin{thebibliography}{20}

\bibitem{abul12}
T. Abualrub, N. Aydin and P. Seneviratne,
\textit{On $\theta$-cyclic codes over $F_{2} + vF_{2}$,  Australasian J. Combinatorics}
54 (2012), 115-126.

\bibitem{tabul03}
T. Abualrub and R. Oehmke,
\textit{On the generators of $\mathbf{Z}_{4}$ cyclic codes of length $2^{e}$, IEEE Trans. Inf. Theory}
49(9) (2003), 2126-2133.

\bibitem{D07}
D. Boucher, W. Geiselmann and F. Ulmer,
\textit{Skew cyclic codes, Appl. Algebra Eng. Comm.}
18 (2007), 379-389.

\bibitem{D09}
D. Boucher and F. Ulmer,
\textit{Coding with skew polynomial rings, J. Symb. Comput.}
44(12) (2009), 1644-1656.

\bibitem{dinh10}
H. Q. Dinh,
\textit{Constacyclic codes of length $p^{s}$ over $F_{p^m}+uF_{p^m}$, J. Algebra}
324(5) (2010), 940-950.

\bibitem{gao} J. Gao, F. Ma and F. Fu, Skew constacyclic codes over the ring $F_{q}+vF_{q}$, {\it Appl. Comput. Math.} {\bf 6}(3) (2017) 286--295.

\bibitem{Gursoy14}
F. Gursoy, I. Siap and B. Yildiz,
\textit{Construction of skew cyclic codes over $F_{q} + vF_{q}$,
Adv. Math. Commun.}
8(3) (2014), 313-322.


\bibitem{hamm94}
 A. R. Hammons Jr., P. V. Kumar, A. R. Calderbank, N. J. A. Sloane and P. Sol{\`e},
 \textit{The $Z_{4}$-
linearity of Kerdock, Preparata, Goethals, and related codes, IEEE Trans. Inform. Theory}
40(2) (1994), 301-319.

\bibitem{book}
 W. C. Huffman and V. Pless,
 \textit{Fundamentals of Error Correcting Codes. Cambridge University Press,}
 Cambridge, 2003.

 \bibitem{jitman}
S. Jitman, S. Ling and P. Udomkavanich,
\textit{Skew constacyclic codes over finite chain rings, Adv.
Math. Commun.}
6(1) (2012), 29-63.

 \bibitem{joel}
 J. Kabor{\`e} and M. E. Charkani,
 \textit{constacyclic codes over $F_{p}+uF_{p}+vF_{p}+uvF_{p},$ arXiv:1507.03084 [cs.IT]} (2015).

\bibitem{siap11}
I. Siap, T. Abualrub, N. Aydin and P. Seneviratne,
\textit{Skew cyclic codes of arbitrary
length, Int. J. Inform. Coding Theory}
2(1) (2011), 10-20.


\bibitem{xiying17}
X. Zheng and B. Kong,
\textit{Cyclic codes and $\lambda_{1} + \lambda_{2} u +\lambda_{3} v + \lambda_{4} uv$ -constacyclic codes over $F_{p}+uF_{p}+vF_{p}+uvF_{p},$ Appl. Math. Comput.}
306 (2017), 86-91.


\end{thebibliography}
\end{document}